\newcommand{\ttau}{\tilde{\tau}}
\newcommand{\tsigma}{\tilde{\sigma}}
\newcommand{\tilbw}{\tilde{\bw}}
\newcommand{\bA}{\mathbf{A}}
\newcommand{\bQ}{\mathbf{Q}}
\newcommand{\bS}{\mathbf{S}}
\newcommand{\bP}{\mathbf{P}}
\newcommand{\bq}{\mathbf{q}}
\newcommand{\bu}{\mathbf{u}}
\newcommand{\bH}{\mathbf{H}}
\newcommand{\bB}{\mathbf{B}}
\newcommand{\bM}{\mathbf{M}}
\newcommand{\ba}{\mathbf{a}}
\newcommand{\bb}{\mathbf{b}}
\newcommand{\bW}{\mathbf{W}}
\newcommand{\bx}{\mathbf{x}}
\newcommand{\bX}{\mathbf{X}}
\newcommand{\by}{\mathbf{y}}
\newcommand{\bY}{\mathbf{Y}}
\newcommand{\barq}{\bar{q}}
\newcommand{\bs}{\mathbf{s}}
\newcommand{\bI}{\mathbf{I}}
\newcommand{\bw}{\mathbf{w}}
\newcommand{\bTheta}{\bm{\Theta}}
\newcommand{\bZ}{\mathbf{Z}}
\newcommand{\tilbZ}{\tilde{\bZ}}
\newcommand{\hatbW}{\hat{\bW}}
\newcommand{\tilby}{\tilde{\by}}
\newcommand{\bE}{\mathbf{E}}
\newcommand{\tilf}{\tilde{f}}
\newcommand{\hatbx}{\hat{\bx}}
\newcommand{\boldm}{\mathbf{m}}
\newcommand{\tilw}{\tilde{w}}
\newcommand{\hatZ}{\hat{Z}}
\newcommand{\tilbA}{\tilde{\bA}}
\newcommand{\hatx}{\hat{x}}
\newcommand{\bone}{\mathbf{1}}
\newcommand{\tilT}{\tilde{T}}
\newcommand{\calY}{\mathcal{Y}}
\newcommand{\tilbQ}{\tilde{\bQ}}
\newcommand{\tilbM}{\tilde{\bM}}
\newcommand{\tSigma}{\tilde{\Sigma}}
\newcommand{\calA}{\mathcal{A}}
\newcommand{\calF}{\mathcal{F}}
\newcommand{\calN}{\mathcal{N}}
\newcommand{\calH}{\mathcal{H}}
\newcommand{\calZ}{\mathcal{Z}}
\newcommand{\calB}{\mathcal{B}}
\newcommand{\tilP}{\tilde{P}}
\newcommand{\tilW}{\tilde{W}}
\newcommand{\tilE}{\tilde{E}}
\newcommand{\tili}{\tilde{i}}
\newcommand{\eps}{\varepsilon}
\newcommand{\bz}{\mathbf{z}}
\newcommand{\bh}{\mathbf{h}}
\newcommand{\hatS}{\hat{S}}
\newcommand{\hatbS}{\hat{\bS}}
\newcommand{\hatW}{\hat{W}}
\newcommand{\hatbZ}{\hat{\bZ}}
\newcommand{\tilY}{\tilde{Y}}
\newcommand{\calG}{\mathcal{G}} 
\newcommand{\tilbW}{\tilde{\bW}} 
\newcommand{\btheta}{\bm{\theta}}
\newcommand{\barx}{\bar{x}}
\newcommand{\barX}{\bar{X}}
\newcommand{\bbR}{\mathbb{R}} 
\newcommand{\bbZ}{\mathbb{Z}} 
\newcommand{\bbE}{\mathbb{E}} 
\newcommand{\bbP}{\mathbb{P}} 
\DeclareMathOperator*{\argmin}{arg\,min}
\newcommand{\nn}{\nonumber} 
\DeclareMathOperator{\var}{\mathsf{Var}}
\begin{document}

\title{Fundamental limits and algorithms for sparse linear regression with sublinear sparsity}

\author{\name Lan V. Truong \email lt407@cam.ac.uk \\
       \addr Department of Engineering\\
       University of Cambridge\\
       Cambridge, CB2 1PZ, UK}
      
\editor{Pierre Alquier}

\maketitle

\begin{abstract}
We establish exact asymptotic expressions for the normalized mutual information and minimum mean-square-error (MMSE) of  sparse linear regression in the sub-linear sparsity regime. Our result is achieved by a generalization of the adaptive interpolation method in Bayesian inference for linear regimes to sub-linear ones. A modification of the well-known approximate message passing algorithm to approach the MMSE fundamental limit is also proposed, and its state evolution is rigorously analysed.  Our results show that the traditional linear assumption between the signal dimension and number of observations in the replica and adaptive interpolation methods is not necessary for sparse signals. They also show how to modify the existing well-known AMP algorithms for linear regimes to sub-linear ones.
\end{abstract}

\begin{keywords}
 Bayesian Inference, Approximate Message Passing, Replica Method, Interpolation Method.
\end{keywords}

\section{Introduction} 
The estimation of a signal from linear random observations has a myriad of applications such as compressed sensing, error correction via sparse superposition codes, Boolean group testing, and supervised machine learning. The fitting of linear relationships among variables in a data set is a standard tool in data analysis. More frequently, there exists conditions under which sparse models fit the data quite well. For example, Rosenfeld et al. \citep{Agrawal} used data to mimic heuristics to identify small segments of a population in which a few additional risk factors were highly predictive of certain kinds of cancer, whereas the same risk factors were not significant in the population \citep{juba2016conditional}. Estimation errors can be characterized by some standard measures in statistics and information theory such as the minimum mean square error (MMSE) and/or mutual information. These fundamental limits are usually obtained by using the replica or interpolation methods in statistical physics where the number of observations is usually assumed to scale linearly with the signal dimension \citep{Edwards1975a}. Accordingly, most of existing approximate message passing algorithms are designed based on the same assumption. However, in many practical applications in machine learning, communications, and signal processing such as medical image recognition and group testing, the number of observations are very small compared with the signal dimension. In this work, we estimate two fundamental limits (MMSE and mutual information) and propose an approximate message passing algorithm to achieve the MMSE for sub-linear regimes where the number of observations scales sub-linearly with the signal dimension. 
\subsection{Related Papers}
In recent years, there has been the progress on a coherent mathematical theory of the replica and interpolation method in statistical physics of spin glasses \citep{Edwards1975a}. These methods have been fruitfully extended and adapted to the problems of interest in a wide range of applications in Bayesian inferences, multiuser communications, and theoretical computer science \citep{Tanaka2002a,GuoShamaiVerdu2005} and \citep{Truong2022OnLM}. The replica method, although very interesting, is based on some non-rigorous assumptions. \citep{Reeves2016TheRP} proved that the replica prediction in \citep{Tanaka2002a,GuoShamaiVerdu2005} is exact. In more recent years, an adaptive interpolation method has been proposed to prove fundamental limits predicted by replica method in a rigorous way \citep{Barbier2017TheAI,pmlr-v75-barbier18a,Barbier2018OptimalEA}. Roughly speaking, this method interpolates between the original problem and the mean-field replica solution in small steps, each step involving its own set of trial parameters and Gaussian mean-fields in the sprit of Guerra and Toninelli \citep{Guerra2002TheTL,Barbier2017TheAI}. We can adjust the set of trial parameters in various ways so that we get both upper and lower bounds that eventually match.

The ``All-or-Nothing" phenomenon for the linear and non-linear models has been characterized in a variety of recent papers. In \citep{pmlr-v99-reeves19a}, Reeves et al. consider a binary $k$-sparse linear regression problem, where the number of observations $m$ is sub-linear to the signal dimension $n$, and established an ``All-or-Nothing" information-theoretic phase transition at a critical sample size $m^*=2k\log(n/k)/\log(1+k/\Delta_n)$ for two regimes $k/\Delta_n=\Omega(1)$ and $k=o(\sqrt{n})$ with $\Delta_n$ being the noise variance. Their results are based on an assumption that the sparse signal is uniformly distributed from the set $\{v \in \{0,1\}^n: \|v\|_0=k\}$. \citep{Reeves2019AllorNothingPF} considers a double limit where one first obtains the high-dimensional limit (under linear sparsity) and then considers the limiting behavior of the RS formulas and the AMP state evolution with respect to a family of prior distributions with which allows the prior to scale with the dimensions. However, the analysis reveals that the resulting formulas can simplify dramatically in the sparse regime. Indeed, in certain cases (e.g., Bernoulli prior) the single-letter mutual information function converges to a piecewise linear limit, and this gives rise to an ``all-or-nothing" phenomenon. 
Sharp information-theoretic bounds were established in \citep{Scarlet2017a} and \citep{TruongS2020} for support recovery problems in linear and phase retrieval models, respectively. In addition, the ``All-or-Nothing" phenomenon was also considered for Bernoulli group testing \citep{Truong2020} or sparse spiked matrix estimation in \citep{Barbier201901PT}, \citep{Luneau2020InformationTL}, and \citep{NilesWeed2020TheAP}. In \citep{Barbier2020}, this phenomenon was also investigated for the generalized linear models with sub-linear regimes and Bernoulli and Bernoulli-Rademacher distributed vectors.

Although the results achieved by the replica method and the adaptive interpolation counterpart are very interesting, they are mainly constrained to the case where the number of observations scales \emph{linearly} with the signal dimension. \citep{Luneau2020InformationTL} considered generalized linear models in regimes where the number of nonzero components of the signal and accessible data points are sublinear with respect to
the size of the signal. They obtained a proof of the replica symmetric formula for the linear model for the case $\alpha>8/9$. In this work, thanks to the development of a new proof technique of the key concentration inequality in \citep{BarbierALT2016}, we can widen the range of $\alpha$ to all $[0,1]$ for a similar model. In addition, we develop a variant of the approximate message passing for the sparse linear regression with sub-linear sparsity which can approach the developed fundamental limit for most of simulation cases. Our numerical results show that the weak recovery (and detection) \citep{pmlr-v99-reeves19a} is possible at various ranges of SNR \emph{under the sparsity in the expected sense} where the \emph{expected} number of nonzero elements, $k$, in the vector $\bS$ is much less than the signal dimension $n$. For the sparse model in \citep{pmlr-v99-reeves19a}, the number of nonzero elements in each vector $\bS$ is always equal to $k$. 

Approximate message passing (AMP) refers to a class of efficient algorithms for statistical estimation in high-dimensional problems such as compressed sensing and low-rank matrix estimation. AMP is initially proposed for sparse signal recovery and compressed sensing \citep{Donoho06,Candes06,Metzler16a}. AMP algorithms have been proved to be effective in reconstructing sparse signals from a small number of incoherent linear measurements. Their dynamics are accurately tracked by a simple one-dimensional iteration termed \emph{state evolution} \citep{BayatiMonta2011}. AMP algorithms achieve state-of-the-art performance for several high-dimensional statistical estimation problems, including compressed sensing \citep{Donoho2009a,BayatiMonta2011,Krzakala2012} and low-rank matrix estimation \citep{Matsushita2013LowrankMR, Deshpande2014, Kabashima2016PhaseTA, Montanari2021EstimationOL}.  Moreover, these techniques are also popular and practical in a variety of engineering and computer science applications such as imaging \citep{Fletcher2014ScalableIF, Metzler2017LearnedDP}, communications \citep{Schniter2011AMR,Rush2017CapacityAchievingSS},  and deep learning \citep{Pandit2019AsymptoticsOM,Emami2020GeneralizationEO,Pandit2020InferenceWD}. See \citep{Feng2022AUT} for a detailed survey on this research topic. 
Our results imply that a judicious modification of AMP for linear regimes can work well for sub-linear ones. 
\subsection{Contributions}
In this paper, we consider the same $k$-sparse linear regression as \citep{pmlr-v99-reeves19a} but in more general signal domain. However, we assume that the signal is sparse in expected sense as \citep{Barbier2020} and the number of observation is sub-linear to the signal dimension. Our contributions include: 
\begin{itemize}
	\item We characterize MMSE and mutual information exactly for the sub-linear regimes where $k=O(n^{\alpha})$ and $m=\delta n^{\alpha}$ for some $\alpha \in (0,1]$. Our result is achieved by a generalization of the adaptive interpolation method in Bayesian inference for linear regimes \citep{BarbierALT2016,Barbier2017TheAI} to sub-linear ones. Compared with \citep{BarbierALT2016, Luneau2020InformationTL}, the bound (RHS) in the concentration in Lemma \ref{beo1} is new. We need to develop a new proof to show this concentration inequality for the sub-linear sparsity. 
	\item We design a variant of the classical AMP algorithm \citep{Donoho2009a} for the sub-linear regimes which approaches the information-theoretic fundamental limits for many cases. The state evolution is also rigorously analysed in our work, and we redefine the states in non-asymptotic sense. As a by-product, we generalize a general version of the strong law of large numbers and H\'{a}jek-R\'{e}nyi type maximal inequality, which may be of independent interest.
	\item We perform some numerical evaluations and show that the gap between MSE achieved by our AMP and the MMSE fundamental limit is very small. Our results also show that the new variant of AMP works well for a wide range of $\alpha$ in $[0,1]$.
\end{itemize}  
\subsection{Paper Organization}
The problem settings is placed in Section \ref{sec:setting}, where we introduce the system model and our assumptions. In Section \ref{sec:limits}, we state some information-theoretic fundamental limits such as the average mutual information and MMSE which are obtained by using a rigorous analysis with the adaptive interpolation method. An approximate message passing is proposed and its performance analysis is given in Section \ref{sec:alg}. We place some auxiliary but important proofs in the appendices. 
\section{Problem Settings} \label{sec:setting}
\subsection{Problem settings}
Let $\bS \in \bbR^n$ be a signal observed via a linear model with measurement matrix $\bA\in \bbR^{m \times n}$. Let $\{\Delta_n\}_{n=1}^{\infty}$ be a positive sequence. We consider the same linear model as  \citep{Barbier2017TheAI}: 
\begin{align}
\bY=\bA \bS +\bW \sqrt{\Delta_n} \label{model:setnew},
\end{align} where $\bA\in \bbR^{m \times n}, \bS=(S_1,S_2,\cdots,S_n)^T \in \bbR^n, \bW \in \bbR^m$, and $\bY \in \bbR^m$. Instead of assuming that $m=n \delta$ for some $\delta>0$ as standard literature in replica and adaptive interpolation methods, we assume that $m= \delta n^{\alpha}$ for some $\alpha>0$ and $0<\alpha\leq 1$. We also assume:
\begin{enumerate}
	\item $\bA$ is a Gaussian matrix with $A_{ij} \sim \calN(0,1/m)$.
	\item $\{S_n\}_{n=1}^{\infty}$ is an i.~i.~d. sequence with $S_i \sim \tilP_0 $, where  $\tilP_0(s)=\big(1-k/n\big)\delta(s)+\big(k/n\big) P_0(s)$ for some $k=O(n^{\alpha})$ with $0<\alpha\leq 1$ and $P_0(s)=\sum_{b=1}^B p_b\delta(s-a_b)$ with a finite number $B$ of constant terms such that $s_{\max}:=\max_{b}|a_b|<\infty$.
	\item $ \bW \sim \calN(0,\bI_m)$.
	\item $\Delta_n$ can be any function of $\alpha$, and $n$.
\end{enumerate}
\subsection{Notations}
For any $k>1$, we say a function $\phi:\bbR^q \to \bbR$ is \emph{pseudo-Lipschitz} of order $k$ if there exists a constant $L>0$ such that for all $x,y \in \bbR^q$:
\begin{align}
\big|\phi(x)-\phi(y)\big|\leq L\big(1+\|x\|^{k-1}+\|y\|^{k-1}\big)\|x-y\|
\end{align} where $\|\|$ denotes the Euclidean norm-$2$. In addition, for any sequence of vectors $\{\bx^{(n)}\}_{n=1}^{\infty}$, we denote by $\bx_1^n=\{\bx^{(1)},\bx^{(2)},\cdots,\bx^{(n)}\}$. As standard literature, the mutual information between two random vectors $\bX$ and $\bY$ is written as $I(\bX;\bY)$. The transpose of a matrix $\bA$ is denoted by $\bA^*$. The $\sigma$-algebra which is generated by the union of two $\sigma$-algebras $\calG_1$ and $\calG_2$ is denoted by $\sigma(\calG_1) \cup \sigma(\calG_2)$.
\section{Information-Theoretic Fundamental Limits} \label{sec:limits}
For this case, we assume that the sensing matrix $\bA$ has i.i.d. Gaussian components. Similar to \citep{Barbier2017TheAI}, let 
\begin{align}
\Sigma(u;v)^{-2}&:=\frac{\delta n^{\alpha-1}}{u+v}, \label{bet1}\\
\psi(u;v)&:=\frac{\delta}{2}\bigg[\log\bigg(1+\frac{u}{v}\bigg)-\frac{u}{u+v}\bigg] \label{bet2}.
\end{align}
Define the following sequence of Replica Symmetric (RS)
potentials:
\begin{align}
f_{n,\rm{RS}}(E;\Delta_n):=\psi(E;\Delta_n)+ i_{n,\rm{den}}(\Sigma(E;\Delta_n)) \label{defnrs},
\end{align} where $i_{n,\rm{den}}(\Sigma)=n^{1-\alpha} I(S;S+\tilW \Sigma)$ is a normalized mutual information of a scalar Gaussian denoising model $Y=S+ \tilW \Sigma$ with $S \sim \tilP_0, \tilW\sim \calN(0,1)$, and $\Sigma^{-2}$ an effective signal to noise ratio:
\begin{align}
i_{n,\rm{den}}(\Sigma):=n^{1-\alpha}\bbE_{S,\tilW}\bigg[\log \int \tilP_0(x) \exp\bigg[-\frac{1}{\Sigma^2}\bigg(\frac{(x-S)^2}{2}-(x-S)\tilW \Sigma\bigg)\bigg]dx \label{defiden}.
\end{align}
Our information-theoretic fundamental result is the following:
\noindent
\begin{theorem}\label{mainthm} Let $\nu_n=n^{\alpha-1}\bbE_{S \sim P_0}[S^2]$ and  $\hatbS=\bbE[\bS|\bY]$ be the MMSE estimator. Then, under the condition that $\Delta_n =\Omega_n(1)$\footnote{This constraint is less strict than the one in \citep{Barbier2017TheAI}} and that $\argmin_{E\in [0,\nu_n]}f_{n,\rm{RS}}(E;\Delta_n )$ is unique for all $\Delta_n$, in the large system limits, the following holds:
	\begin{align}
	&\lim_{n\to \infty} \bigg[\frac{I(\bS;\bY|\bA)}{n^{\alpha}}- \min_{E\in [0,\nu_n]} f_{n,\rm{RS}}(E ;\Delta_n )\bigg]=0, \label{eq1key1}\\
	&\lim_{n\to \infty} \bigg[\frac{1}{n^{\alpha}}\sum_{i=1}^n \bbE[(S_i-\hatS_i)^2]-n^{1-\alpha}\tilE(\Delta_n)\bigg]=0, \label{eq35eqpre1}
	\end{align}
	where $\tilE(\Delta_n)$ is the global minimizer of $\min_{E\in [0,\nu_n]}f_{n,\rm{RS}}(E;\Delta_n )$.
\end{theorem} 
\begin{remark} Some remarks are in order.
	\begin{itemize}
		\item Our theorem allows $\Delta_n$ to be dependent on $n$, where $\Delta_n$ is assumed to be fixed in \citep{Barbier2017TheAI}.
		\item For $\alpha=1$ (or $m=\delta n$), and $\Delta_n=\Delta$ for some fixed $\Delta>0$, our results recover the classical result as in \citep{Tanaka2002a, Guo2005a, Reeves2019,Barbier2018OptimalEA}. In these classical papers, the authors assume that $\{S_n\}_{n=1}^{\infty}$ are i.i.d and $S_1 \sim \tilP_0$ which is a fixed distribution. 
		\item The minimization problem in \eqref{eq1key1}, i.e., $\min_{E\in [0,\nu_n]} f_{n,\rm{RS}}(E ;\Delta_n )$ may have multiple minimizers at some values of $\Delta_n$. The number of minimizers depends on the prior distribution. By limiting $E \in [0,\nu_n]$, we may avoid this phenomenon for some cases. 
		\item In our model, the sparsity is in the expected sense, which is different from the models in \citep[ Eqn.~(3)]{pmlr-v99-reeves19a} or  \citep[Cor.~1]{Scarlet2017a}, where the authors assume that the sparse vector is uniformly distributed over ${n}\choose{k}$ possible sparse vectors. In addition, the equivalent SNR in our model is a fixed constant, but the required SNR for \citep[Theorem 3]{pmlr-v99-reeves19a} to hold is greater than an ambiguous constant. Hence, the weak (strong) recovery is expected to hold at low SNR. In our numerical simulations (cf. Fig. \ref{fig:DMK2}), even the classical AMP can (at least) recover the sparse vector weakly, i.e. the normalized MSE (divided by $n^{\alpha}$) is almost less than one for many ranges of SNR \footnote{We can even verify that the normalized sum of MSE by $n^{\alpha}$ mostly in $[0,1]$ by running the classical AMP in \cite[Section C]{BayatiMonta2011}, an sub-optimal algorithm for this setting.}. 
		\item Our results show that under the MMSE estimator, the linear regression model can be decomposed into sub-AWGN channels, and the normalized MMSE of the model is equal to the MMSE of a (time-varying SNR) sub-AWGN channel in the large system limit.  
	\end{itemize}
\end{remark}
\begin{proof}
	The proof of Theorem \ref{mainthm} is based on \citep{Barbier2017TheAI, BarbierALT2016} with some modifications in concentration inequalities and normalized factors to account for new settings. Given the model \eqref{model:setnew}, the likelihood of the observation $\by$ given $\bS$ and $\bA$ is 
	\begin{align}
	P(\by|\bs,\bA)=\frac{1}{(2\pi \Delta_n)^{m/2}}\exp\bigg[-\frac{1}{2\Delta_n}\big\|\by-\bA \bs\big\|^2\bigg].
	\end{align}
	From Bayes formula we then get the posterior distribution for $\bx=[x_1,x_2,\cdots,x_n] \in \bbR^n$ given the observation $\by$ and sensing matrix $\bA$
	\begin{align}
	P(\bx|\by,\bA)&=\frac{\prod_{i=1}^n \tilP_0(x_i)P(\by|\bx,\bA)}{\int \prod_{i=1}^n	\tilP_0(x_i) dx_i P(\by|\bx,\bA)}.
	\end{align} 
	Replacing the observation $\by$ by its explicit expression \eqref{model:setnew} as a function of the signal and the noise we obtain
	\begin{align}
	&P(\bx|\by=\bA \bs+\bw \sqrt{\Delta_n},\bA)\nn\\
	&=\frac{\prod_{i=1}^n \tilP_0(x_i)e^{-\calH(\bx;\bA,\bs,\bw)}}{\calZ(\bA,\bs,\bw)},
	\end{align}
	where we call 
	\begin{align}
	&\calH(\bx;\bA,\bs,\bw):=\frac{1}{\Delta_n}\sum_{\mu=1}^m\bigg( \frac{1}{2}\bigg[\bA(\bx-\bs)\bigg]_{\mu}^2 -\bigg[\bA(\bx-\bs)\bigg]_{\mu}w_{\mu}\sqrt{\Delta_n}\bigg)
	\end{align} the \emph{Hamiltonian} of the model, and the normalization factor is by definition the \emph{partition function}:
	\begin{align}
	\calZ(\bA,\bs,\bw):=\int\bigg\{\prod_{i=1}^n \tilP_0(x_i)dx_i\bigg\}e^{-\calH(\bx;\bA,\bs,\bw)}.
	\end{align} 
	Our principal quantity of interest is
	\begin{align}
	&f_n=-\frac{1}{n^{\alpha}}\bbE_{\bA,\bS,\bW}[\log \calZ(\bA,\bS,\bW)]\\
	&=-\frac{1}{n^{\alpha}}\bbE_{\bA,\bS,\bW}\bigg[\log\bigg( \int\bigg\{\prod_{i=1}^n \tilP_0(x_i)dx_i\bigg\}\nn\\
	&\qquad \times \exp\bigg(-\frac{1}{\Delta_n}\sum_{\mu=1}^m\bigg( \frac{1}{2}\bigg[\bA(\bx-\bS)\bigg]_{\mu}^2 -\bigg[\bA(\bx-\bS)\bigg]_{\mu}W_{\mu}\sqrt{\Delta_n}\bigg) \bigg)\bigg) \bigg],
	\end{align} where $\bW \stackrel{i.i.d.}{\sim} \calN(0,1)$. 
	
	By using the Bayes' rule
	\begin{align}
	P(\by|\bA)=\frac{P(\by|\bx,\bA)\prod_{i=1}^n \tilP_0(x_i)}{P(\bx|\by=\bA \bs+ \bw \sqrt{\Delta_n},\bA)},
	\end{align} we have
	\begin{align}
	P(\by|\bA)=(2\pi\Delta)^{-m/2}\calZ(\bA,\bs,\bw)e^{-\frac{\|\bw^2\|}{2}}.
	\end{align}
	It follows that
	\begin{align}
	&\frac{I(\bS;\bY|\bA)}{n^{\alpha}}=\frac{1}{n^{\alpha}}\bbE_{\bA,\bS,\bY}\bigg[\log\bigg( \frac{P(\bS,\bY|\bA)}{\tilP_0(\bS)P(\bY|\bA)}\bigg)\bigg]\\
	&=f_n-\frac{h(\bY|\bA,\bS)}{n^{\alpha}}+\frac{1}{2n^{\alpha}}\bbE[\|\bW\|^2]+\frac{m}{2n^{\alpha}}\log(2\pi \Delta_n)\\
	&=f_n-\frac{m}{2n^{\alpha}}\log(2\pi e \Delta_n)+\frac{m}{2n^{\alpha}}+\frac{m}{2n^{\alpha}}\log(2\pi \Delta_n)\\
	&=f_n.
	\end{align} Hence, in order to obtain \eqref{eq1key1}, it is enough to show that 
	\begin{align}
	\lim_{n\to \infty}\big[f_n -\min_{E\in [0,\nu_n] }f_{n,\rm{RS}}(E;\Delta_n)\big]=0 \label{buchi1}.
	\end{align} Let $\bW^{(k)}=[W_{\mu}^{(k)}]_{\mu=1}^m, \tilbW^{(k)}=[\tilW_i^{(k)}]_{i=1}^n$ and $\hatbW=[\hatW_i]_{i=1}^n$ all with i.i.d. $\calN(0,1)$ entries for $k=1,2,\cdots,K_n$ where $K_n$ is chosen later. Define $\Sigma_k:=\Sigma(E_k;\Delta_n)$ where the trial parameters $\{E_k\}_{k=1}^{K_n}$ are determined later on. Given any fixed $\eps \in [0,1]$, as \citep{Barbier2017TheAI}, the (perturbed) $(k,t)$-interpolating Hamiltonian for this problem is defined as
	\begin{align}
	\calH_{k,t;\eps}(\bx;\bTheta)&:=\sum_{k'=k+1}^{K_n} h\bigg(\bx,\bS, \bA,\bW^{(k')},K_n\Delta_n\bigg) +\sum_{k'=1}^{k-1}h_{\rm{mf}}\bigg(\bx,\bS,\tilbW^{(k')},K_n \Sigma_{k'}^2\bigg) \nn\\
	&\qquad  + h\bigg(\bx,\bS,\bA,\bW^{(k)},\frac{K_n}{\gamma_k(t)}\bigg) +h_{\rm{mf}}\bigg(\bx,\bS,\tilbW^{(k)},\frac{K_n}{\lambda_k(t)}\bigg)\nn\\
	&\qquad \qquad +\eps\sum_{i=1}^n\bigg(\frac{x_i^2}{2}-x_iS_i-\frac{x_i \hatW_i}{\sqrt{\eps}}\bigg) \label{defHamil1}.
	\end{align} 
	Here,
	$
	\bTheta:=\{\bS,\bW^{(k)},\tilbW^{(k)}\}_{k=1}^{K_n},\hatbW,\bA, k\in [K_n], t \in [0,1]
	$
	and
	\begin{align}
	h(\bx,\bS,\bW,\bA,\sigma^2)&:=\frac{1}{\sigma^2}\sum_{\mu=1}^m \bigg(\frac{[\bA \bar{\bx}]_{\mu}^2}{2}-\sigma\big[\bA \bar{\bx}\big]_{\mu}W_{\mu}\bigg),\\
	h_{\rm{mf}}(\bx,\bS,\tilbW,\sigma^2)&:=\frac{1}{\sigma^2}\sum_{i=1}^n \bigg(\frac{\barx_i^2}{2}-\sigma \barx_i \tilW_i\bigg),
	\end{align} where $\bar{\bx}=\bx-\bS$ and $\barx_i=x_i-S_i$. 
	
	The $(k,t)$-interpolating model corresponds an inference model where one has access to the following sets of noisy observations about the signal $\bS$ 
	\begin{align}
	\bigg\{\bZ^{(k')}&=\bA \bS + \bW^{(k')}\sqrt{K_n \Delta_n}\bigg\}_{k'=k+1}^{K_n},\\
	\bigg\{\tilbZ^{(k')}&=\bS+\tilbW^{(k')}\Sigma_{k'}\sqrt{K_n}\bigg\}_{k'=1}^{k-1},\\
	\bigg\{\bZ^{(k)}&=\bA \bS +\bW^{(k)}\sqrt{\frac{K_n}{\gamma_k(t)}}\bigg\} \label{obser1},\\
	\bigg\{\tilbZ^{(k)}&=\bS+\tilbW^{(k)}\sqrt{\frac{K_n}{\lambda_k(t)}}\bigg\} \label{obser2}.
	\end{align}
	The first and third sets of observation correspond to similar inference channel as the original model \eqref{model:setnew} but with a higher noise variance proportional to $K_n$. These correspond to the first and third terms in \eqref{defHamil1}. The second and fourth sets instead correspond to decoupled Gaussian denoising models, with associated ``mean-field" second and fourth term in \eqref{defHamil1}. The last term in \eqref{defHamil1} is a perturbed term which corresponds to a Gaussian ``side-channel" $\bY=\bS \sqrt{\eps}+ \hatbZ$ whose signal-to-noise ratio $\eps$ will tend to zero at the end of proof. The noise variance are proportional to $K_n$ in order to keep the average signal-to-noise ratio not dependent on $K_n$. A perturbed of the original and final (decoupled) models are obtained by setting $k=1,t=0$ and $k=K_n,t=1$, respectively. The interpolation is performed on both $k$ and $t$. For each fixed $k$, at $t$ changes from $0$ to $1$, the observation in \eqref{obser1} is removed from the original model and added to the decoupled model. An interesting point is that the $(k,t=1)$ and $(k+1,t=0)$-interpolating models are statistically equivalent. This is an adjusted model of the classical interpolation model in \citep{Guerra2002TheTL}, where an interpolating path $k\in [K_n]$ is added. This is called the adaptive interpolation method. See \citep{Barbier2017TheAI} for more detailed discussion.

	Consider a set of observations $[\by,\tilby]$ from the following channels
	\begin{align}
	\begin{cases}
	\by&=\bA \bS + \bW\frac{1}{\sqrt{\gamma_k(t)}}\\
	\tilby&=\bS+ \tilbW \frac{1}{\sqrt{\lambda_k(t)}},
	\end{cases} 
	\end{align} where $\bW \sim \calN(0,\bI_{m}), \tilbW \sim \calN(0,\bI_n)$, $t \in [0,1]$ is the interpolating parameter and the ``signal-to-noise functions" $\{\gamma_k(t),\lambda_k(t)\}_{k=1}^{K_n}$ satisfy
	\begin{align}
	\gamma_k(0)&=\Delta_n^{-1}, \qquad \gamma_k(1)=0,\\
	\lambda_k(0)&=0, \qquad \lambda_k(1)=\Sigma_k^{-2},
	\end{align} as well as the following constraint
	\begin{align}
	\frac{\delta n^{\alpha-1}}{\gamma_k(t)^{-1}+E_k}+\lambda_k(t)=\frac{\delta n^{\alpha-1}}{\Delta_n +E_k }=\Sigma_k^{-2}
	\end{align}
	and thus
	\begin{align}
	\frac{d\lambda_k(t)}{dt}=-\frac{d\gamma_k(t)}{dt}\frac{\delta n^{\alpha-1}}{(1+\gamma_k(t)E_k)^2} \label{ach1}.
	\end{align}
	We also require $\gamma_k(t)$ to be strictly decreasing with $t$. The $(k,t)$-interpolating model has an associated posterior distribution, Gibbs expectation $\langle - \rangle_{k,t;\eps}$ and $(k,t)$-interpolating free energy $f_{k,t;\eps}$:
	\begin{align}
	&P_{k,t;\eps}(\bx|\bTheta):=\frac{\prod_{i=1}^n \tilP_0(x_i)e^{-\calH_{k,t;\eps}(\bx;\btheta)}}{\int\big\{\prod_{i=1}^n \tilP_0(x_i)\big\}e^{-\calH_{k,t;\eps}(\bx;\btheta)}},\\
	&\langle V(\bX)\rangle_{k,t;\eps}:=\int d\bx V(\bx)P_{k,t;\eps}(\bx|\bTheta),\\
	&f_{k,t;\eps}:=-\frac{1}{n}\bbE_{\bTheta}\bigg[\log \int \bigg\{\prod_{i=1}^n dx_i \tilP_0(x_i)\bigg\}e^{-\calH_{k,t;\eps}(\bx;\bTheta)}\bigg].
	\end{align}
	\begin{lemma} \label{easylem} Let $P_0$ have finite second moment. Then for initial and final systems 
		\begin{align}
		|f_{1,0;\eps}-f_{1,0;0}|&\leq O\bigg(\frac{\eps }{2n^{1-\alpha}}\bigg)\bbE_{S \sim P_0}[S^2]\\ |f_{K_n,1;\eps}-f_{K_n,1;0}|&\leq O\bigg(\frac{\eps }{2n^{1-\alpha}}\bigg)\bbE_{S \sim P_0}[S^2].
		\end{align}
	\end{lemma}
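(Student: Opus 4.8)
The plan is to prove both inequalities simultaneously by controlling the $\eps$-derivative of the interpolating free energy $f_{k,t;\eps}$ for a \emph{general} pair $(k,t)$ and then integrating over $\eps$; the two claimed bounds are simply the special cases $(k,t)=(1,0)$ and $(k,t)=(K_n,1)$. The key structural observation is that the only $\eps$-dependence of $\calH_{k,t;\eps}$ sits in the common perturbation term $\eps\sum_i(x_i^2/2-x_iS_i)-\sqrt{\eps}\sum_i x_i\hatW_i$, so the derivative formula I obtain does not depend on which interpolating model I am in, and a single computation covers both endpoints.

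First I would differentiate $f_{k,t;\eps}=-\tfrac1n\bbE_{\bTheta}[\log\calZ_{k,t;\eps}]$ under the expectation, which is licensed by the assumed finiteness of the second moment of $P_0$, giving $\partial_\eps f_{k,t;\eps}=\tfrac1n\bbE_\bTheta\langle\partial_\eps\calH_{k,t;\eps}\rangle$ with $\partial_\eps\calH_{k,t;\eps}=\sum_i(x_i^2/2-x_iS_i)-\tfrac{1}{2\sqrt{\eps}}\sum_i x_i\hatW_i$. The apparent $\eps^{-1/2}$ singularity is the one delicate point, and I would remove it by Gaussian integration by parts in each $\hatW_i\sim\calN(0,1)$. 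Since $\hatW_i$ enters $-\calH_{k,t;\eps}$ only through the linear term $+\sqrt{\eps}\,x_i\hatW_i$, one has $\partial_{\hatW_i}\langle x_i\rangle=\sqrt{\eps}\,(\langle x_i^2\rangle-\langle x_i\rangle^2)$, so $\bbE_{\hatW_i}[\hatW_i\langle x_i\rangle]=\sqrt{\eps}\,\bbE_{\hatW_i}[\langle x_i^2\rangle-\langle x_i\rangle^2]$ and the $\sqrt{\eps}$ cancels the prefactor.

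Next I would invoke the Nishimori identity, valid because the adaptive interpolation construction keeps every $(k,t;\eps)$-model in the Bayes-optimal class. Specifically, $\bbE_\bTheta[S_i\langle x_i\rangle]=\bbE_\bTheta[\langle x_i\rangle^2]$ lets the cross term combine with the integration-by-parts term, and a direct cancellation yields the clean expression $\partial_\eps f_{k,t;\eps}=-\tfrac{1}{2n}\sum_{i=1}^n\bbE_\bTheta[\langle x_i\rangle^2]$. Then, using $\langle x_i\rangle^2\le\langle x_i^2\rangle$ (Jensen) together with the Bayes-optimal marginal identity $\bbE_\bTheta[\langle x_i^2\rangle]=\bbE_{S_i\sim\tilP_0}[S_i^2]=(k/n)\,\bbE_{S\sim P_0}[S^2]$, I obtain the uniform bound $|\partial_\eps f_{k,t;\eps}|\le \tfrac{k}{2n}\bbE_{S\sim P_0}[S^2]$. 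Since $k=O(n^{\alpha})$ this is $O\!\big(\tfrac{1}{2n^{1-\alpha}}\big)\bbE_{S\sim P_0}[S^2]$, and integrating from $0$ to $\eps$ (the bound being independent of $\eps$) produces exactly the stated estimate with the extra factor $\eps$.

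The main obstacle I anticipate is the rigorous handling of the $\eps^{-1/2}$ term: I must verify that differentiation under the expectation is legitimate near $\eps=0$ and that the Gaussian integration by parts is justified, i.e.\ that the relevant Gibbs correlators are integrable, which again follows from the finite-second-moment assumption on $P_0$. Once that technical step is secured, the Nishimori cancellations and the $k/n=O(n^{\alpha-1})$ scaling are routine.
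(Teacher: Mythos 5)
Your proof is correct and follows essentially the same route as the paper: the paper simply cites Lemma 1, Section II of Barbier--Macris for the bound $|f_{k,t;\eps}-f_{k,t;0}|\leq \tfrac{\eps}{2}\bbE_{S\sim\tilP_0}[S^2]$ (which is exactly your $\eps$-derivative--Gaussian-integration-by-parts--Nishimori computation, here written out in full) and then applies the same sparsity substitution $\bbE_{S\sim\tilP_0}[S^2]=(k/n)\bbE_{S\sim P_0}[S^2]=O(n^{\alpha-1})\bbE_{S\sim P_0}[S^2]$. Your version is merely more self-contained; no gap.
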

    \begin{proof} Using the similar arguments as Lemma 1, Section II in \citep{Barbier2017TheAI}, we have
		\begin{align}
		|f_{1,0;\eps}-f_{1,0;0}|& \leq \frac{\eps }{2 }\bbE_{S \sim \tilP_0}[S^2]\\
		&=\frac{\eps }{2}\frac{k}{n}\bbE_{S \sim P_0}[S^2]\\
		&=O\bigg(\frac{\eps}{2n^{1-\alpha}}\bigg)\bbE_{S \sim P_0}[S^2].
		\end{align} 
		
		Similarly, we come to the other inequality.
	\end{proof}
	Now, by defining
	\begin{align}
	\Sigma_{\rm{mf}}^{-2}(\{E_k\}_{k=1}^{K_n};\Delta):=\frac{1}{K_n}\sum_{k=1}^{K_n} \Sigma_k^2,
	\end{align}
	from \eqref{defHamil1}, we have
	\begin{align}
	\calH_{K_n,1;0}(\bx;\bTheta)&=\sum_{k=1}^{K_n} h_{\rm{mf}}(\bx,\bS,\bA,\tilbW^{(k)},K_n \Sigma_k^{-2})\\
	&=\sum_{k=1}^{K_n}\frac{1}{K_n \Sigma_k^2} \sum_{i=1}^n \bigg(\frac{\barx_i^2}{2}-\sqrt{K_n \Sigma_k^2} \barx_i \tilW_{\mu}^{(k)}\bigg)\\
	&=\Sigma_{\rm{mf}}^{-2} \bigg(\sum_{i=1}^n \frac{\barx_i^2}{2} -\Sigma_{\rm{mf}} \barx_i \sum_{k=1}^{K_n} \frac{\Sigma_{\rm{mf}}}{\sqrt{K_n \Sigma_k^2}}\tilW_{\mu}^{(k)}\bigg) \label{culi}.
	\end{align}
	Since 
	\begin{align}
	\tilW:=\sum_{k=1}^{K_n} \frac{\Sigma_{\rm{mf}}}{\sqrt{K_n \Sigma_k^2}}\tilW_{\mu}^{(k)} \sim \calN(0,1),
	\end{align} it holds from \eqref{culi} that
	\begin{align}
	&\calH_{K_n,1;0}(\bx;\bTheta)=\Sigma_{\rm{mf}}^{-2} \bigg(\sum_{i=1}^n \frac{\barx_i^2}{2} -\Sigma_{\rm{mf}} \barx_i  \tilW \bigg).
	\end{align}
	Hence, we have
	\begin{align}
	f_{K_n,1;0}&=-\frac{1}{n}\bbE\bigg[\sum_{i=1}^n \log \int dx_i \tilP_0(x_i) e^{-\Sigma_{\rm{mf}}^{-2}\big(\frac{\barx_i^2}{2}-\Sigma_{\rm{mf}} \barx_i \tilW \big)}\bigg]\\
	&=\bbE\bigg[\log \int dx \tilP_0(x) e^{-\Sigma_{\rm{mf}}^{-2}\big(\frac{\barx^2}{2}-\Sigma_{\rm{mf}} \barx \tilW\big)}\bigg]\\
	&=\frac{1}{n^{1-\alpha}}i_{n,\rm{den}}(\Sigma_{\rm{mf}}(\{E_k\}_{k=1}^{K_n};\Delta_n)\label{eq55},
	\end{align} where \eqref{eq55} follows from \eqref{defiden}. 
	
	Similarly, we can show that
	\begin{align}
	f_{1,0;0}&=-\frac{1}{n}\bbE\bigg[\log \int \bigg\{\prod_{i=1}^n dx_i \tilP_0(x_i) e^{-\calH(\bx;\bA,\bS,\bW)}\}\bigg]\\
	&=\frac{f_n}{n^{1-\alpha}} \label{afact}.
	\end{align}
	
	In addition, we can prove (with $\bar{\bX}=\bX-\bS$) that
	\begin{align}
	\frac{df_{k,t;\eps}}{dt}&=\frac{1}{K_n}\big(\calA_{k,t;\eps}+\calB_{k,t;\eps}\big),\\
	\calA_{k,t;\eps}&:=\frac{d\gamma_k(t)}{dt}\frac{1}{2n}\sum_{\mu=1}^m \bbE\bigg[\bigg\langle \big[\bA \bar{\bX} \big]_{\mu}^2 -\sqrt{\frac{K_n}{\gamma_k(t)}}\big[\bA \bar{\bX}\big]_{\mu} W_{\mu}^{(k)}\bigg\rangle_{k,t;\eps}\bigg],\\
	\calB_{k,t;\eps}&:=\frac{d\lambda_k(t)}{dt}\frac{1}{2n}\sum_{i=1}^n \bbE\bigg[\bigg\langle \barX_i^2-\sqrt{\frac{K_n}{\lambda_k(t)}}\barX_i \tilW_i^{(k)}\bigg\rangle_{k,t;\eps}\bigg],
	\end{align} where $\bE$ denotes the average w.r.t. $\bX$ and all quenched random variables $\bTheta$, and $\langle - \rangle_{k,t;\eps}$ is the Gibbs average with Hamiltonian \eqref{defHamil1}. 
	
	Now, since $W_{\mu}^{k}\sim \calN(0,1)$, by using the Gaussian integral formula $\bbE[Zf(Z)]=\bbE[f'(Z)]$, we can show that
	\begin{align}
	n^{1-\alpha}\calA_{k,t;\eps}&=\frac{d\gamma_k(t)}{dt}\frac{1}{2n^{\alpha}}\sum_{\mu=1}^m \bbE\big[\big\langle\big[\bA \bar{\bX}\big]_{\mu}\big\rangle_{k,t;\eps}^2\big]\\
	&=\frac{d \gamma_k(t)}{dt}\frac{\delta}{2}\rm{ymmse}_{k,t;\eps} \label{eq66},
	\end{align} where
	\begin{align}
	{\rm{ymmse}_{k,t;\eps}}:=\frac{1}{m}\bbE\big[\big\|\bA (\langle \bX \rangle_{k,t;\eps}-\bS)\big\|^2\big] \label{Q1}
	\end{align} is called ``measurement minimum mean-square error".
	
	For $\calB_{k,t;\eps}$, we proceed similarly and find
	\begin{align}
	n^{1-\alpha}\calB_{k,t;\eps}&=\frac{d\lambda_k(t)}{dt}\frac{1}{2n^{\alpha}}\sum_{i=1}^n \bbE[\langle \barX_i^2\rangle_{k,t;\eps}]\\
	&=\frac{d\lambda_k(t)}{dt}\frac{1}{2n^{\alpha}}\bbE[\|\langle \bX\rangle_{k,t;\eps}-\bS\|^2]\\
	&=-\frac{d\gamma_k(t)}{dt}\frac{1}{(1+\gamma_k(t)E_k)^2}\frac{\delta}{2} n^{\alpha-1}\rm{mmse}_{k,t;\eps} \label{eq67},
	\end{align} where the normalized minimum mean-square-error (MMSE) defined as
	\begin{align}
	{\rm{mmse}_{k,t;\eps}}:=\frac{1}{n^{\alpha}}\bbE[\|\langle \bX\rangle_{k,t;\eps}-\bS\|^2].
	\end{align} Here, \eqref{eq67} follows from \eqref{ach1}.

	By the construction, we have the following coherency property: The $(k,t=1)$ and $(k+1,t=0)$ models are equivalent (the Hamiltonian is invariant under this change) and thus $f_{k,1;\eps}=f_{k+1,0;\eps}$ for any $k$ \citep{Barbier2017TheAI}. This implies that the $(k,t)$-interpolating free energy satisfies
	\begin{align}
	f_{1,0;\eps}&=f_{K_n,1;\eps}+\sum_{k=1}^{K_n} (f_{k,0;\eps}-f_{k,1;\eps})\\
	&=f_{K_n,1;\eps}-\sum_{k=1}^{K_n} \int_0^1 dt \frac{df_{k,t;\eps}}{dt}.
	\end{align}
	It follows that
	\begin{align}
	f_{1,0;\eps}n^{1-\alpha} &= n^{1-\alpha} f_{K_n,1;\eps}-n^{1-\alpha}\sum_{k=1}^{K_n} \int_0^1 dt \frac{df_{k,t;\eps}}{dt}\\
	&=n^{1-\alpha} f_{K_n,1;\eps}-\frac{1}{K_n}\sum_{k=1}^{K_n}\int_0^1 dt\bigg(n^{1-\alpha}\calA_{k,t;\eps}+n^{1-\alpha}\calB_{k,t;\eps}\bigg) \label{eq68}.
	\end{align}
	On the other hand, by Lemma \ref{easylem}, we have
	\begin{align}
	n^{1-\alpha}|f_{1,0,0}-f_{1,0;\eps}|\leq \frac{\eps}{2}\bbE_{S \sim P_0} \bbE[S^2] \label{eq68b},
	\end{align}
	or
	\begin{align}
	|f_n-n^{1-\alpha}f_{1,0;\eps}|\leq \frac{\eps}{2}\bbE_{S \sim P_0} \bbE[S^2] \label{eq68c}
	\end{align} where \eqref{eq68c} follows from \eqref{afact}. 
		
	From \eqref{eq66}, \eqref{eq67}, and \eqref{eq68}, we obtain
	\begin{align}
	&\int_{a_n}^{b_n} d\eps f_n =n^{1-\alpha}\int_{a_n}^{b_n}d\eps f_{1,0;\eps}\pm \frac{\eps}{2}\bbE_{S \sim P_0} \bbE[S^2] \label{eq73}\\
	&\qquad =\int_{a_n}^{b_n}n^{1-\alpha}d\eps\bigg\{f_{K_n,1;\eps}-f_{K_n,1;0}\bigg\}+ \int_{a_n}^{b_n}d\eps  i_{n,\rm{den}}\bigg(\Sigma_{\rm{mf}}\big(\{E_k\}_{k=1}^{K_n};\Delta\big)\bigg)\nn\\
	&\qquad-\frac{\delta }{2}\int_{a_n}^{b_n}d\eps\frac{1}{K_n}\sum_{k=1}^{K_n}\int_0^1dt\frac{d \gamma_k(t)}{dt} \bigg({\rm{ymmse}}_{k,t;\eps}-\frac{{\rm{mmse}}_{k,t;\eps} n^{\alpha-1}}{(1+\gamma_k(t)E_k)^2}\bigg)\pm \frac{\eps}{2}\bbE_{S \sim P_0} \bbE[S^2] \label{cuchi1},
	\end{align} where \eqref{eq73} follows from \eqref{eq68c}.
	
	The following lemma can be verified to hold for the new settings:
	\begin{lemma}\label{beo1} For any sequence $K_n \to +\infty$ and $0<a_n<b_n<1$ (that tend to zero slowly enough in the application), and trial parameters $\{E_k=E_k^{(n)}(\eps)\}_{k=1}^{K_n}$ which are differentiable, bounded and non-increasing in $\eps$, we have
		\begin{align}
		&\int_{a_n}^{b_n}\frac{1}{K_n}\sum_{k=1}^{K_n}\int_0^1 dt \frac{d\gamma_k(t)}{dt}\bigg({\rm{ymmse}_{k,t;\eps}} -\frac{\rm{mmse}_{k,t;\eps}}{n^{1-\alpha}+\gamma_k(t)\rm{mmse}_{k,t;\eps}}  \bigg)\nn\\
		&\qquad=O\bigg(\max\bigg\{o\bigg(\frac{b_n-a_n}{\Delta_n}\bigg), a_n^{-2}n^{-\gamma}\bigg\}\bigg) \label{keyconcen}
		\end{align} as $n\to \infty$ for some $0<\gamma<1$.
	\end{lemma}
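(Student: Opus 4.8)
The plan is to transport the overlap-concentration argument of the adaptive interpolation method \citep{Barbier2017TheAI,BarbierALT2016} to the present setting, replacing every $1/n$ normalization by $1/n^{\alpha}$ and re-deriving each fluctuation estimate for the sparse prior $\tilP_0$, whose second moment $\bbE_{S\sim\tilP_0}[S^2]=(k/n)\bbE_{S\sim P_0}[S^2]=O(n^{\alpha-1})$ was already used in Lemma \ref{easylem}. First I would reduce the integrand to the fluctuation of a single order parameter. Introduce the overlap $\calQ_{k,t;\eps}:=n^{-\alpha}\,\bar{\bX}^{(1)}\cdot\bar{\bX}^{(2)}$ with $\bar{\bX}^{(a)}=\bX^{(a)}-\bS$ and $\bX^{(1)},\bX^{(2)}$ two conditionally independent draws from $P_{k,t;\eps}$. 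By the Nishimori identity $\bbE\langle\calQ_{k,t;\eps}\rangle_{k,t;\eps}={\rm{mmse}}_{k,t;\eps}$. Applying Gaussian integration by parts in the entries of $\bA$ and $\bW^{(k)}$ to the definition \eqref{Q1} of ${\rm{ymmse}}_{k,t;\eps}$ and rearranging the resulting identity then expresses ${\rm{ymmse}}_{k,t;\eps}$ as the measurement-versus-vector-MMSE value $\frac{{\rm{mmse}}_{k,t;\eps}}{n^{1-\alpha}+\gamma_k(t)\,{\rm{mmse}}_{k,t;\eps}}$ (the fixed point of the induced resolvent recursion in $\gamma_k(t)$) plus a remainder bounded in absolute value by a constant multiple of $\var(\calQ_{k,t;\eps})$. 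It therefore suffices to bound $\int_{a_n}^{b_n} d\eps\,\frac{1}{K_n}\sum_{k=1}^{K_n}\int_0^1 dt\,\big|\tfrac{d\gamma_k(t)}{dt}\big|\,\var(\calQ_{k,t;\eps})$, where the factor $\int_0^1|\tfrac{d\gamma_k(t)}{dt}|\,dt=\gamma_k(0)-\gamma_k(1)=\Delta_n^{-1}$ already accounts for the $1/\Delta_n$ appearing in \eqref{keyconcen}.

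Second, I would split the variance into a thermal and a quenched part,
\begin{align}
\var(\calQ_{k,t;\eps})=\bbE\big\langle(\calQ_{k,t;\eps}-\langle\calQ_{k,t;\eps}\rangle)^2\big\rangle+\bbE\big[(\langle\calQ_{k,t;\eps}\rangle-\bbE\langle\calQ_{k,t;\eps}\rangle)^2\big],
\end{align}
and control each through the perturbation term of \eqref{defHamil1}, which is a Gaussian side channel of signal-to-noise ratio $\eps$. For the thermal part, $\eps\mapsto f_{k,t;\eps}$ is monotone with first derivative proportional to $\bbE\langle\calQ_{k,t;\eps}\rangle$ and second derivative equal to the thermal variance; integrating telescopes the second derivative into boundary values of the first, giving $\int_{a_n}^{b_n}\bbE\langle(\calQ_{k,t;\eps}-\langle\calQ_{k,t;\eps}\rangle)^2\rangle\,d\eps=O(n^{\alpha-1})$. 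Choosing $a_n,b_n$ to vanish slowly enough that $n^{\alpha-1}=o(b_n-a_n)$ (this is the ``slowly enough'' of the hypothesis) turns the bound into $o(b_n-a_n)$, and after the $t$- and $k$-averages it produces the $o\big((b_n-a_n)/\Delta_n\big)$ term.

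Third — and this is the main obstacle — comes the quenched fluctuation, where $\langle\calQ_{k,t;\eps}\rangle$ is regarded as a function of the disorder $(\bA,\bS,\{\bW^{(k)},\tilbW^{(k)}\}_k,\hatbW)$. For the Gaussian components I would use the Gaussian Poincar\'{e} inequality and for the bounded i.i.d. signal $\bS$ (recall $s_{\max}<\infty$) an Efron--Stein/bounded-difference estimate. The difficulty specific to the sub-linear regime is one of scale: the overlap is normalized by $n^{\alpha}$ while only $\sim k=O(n^{\alpha})$ coordinates of $\bS$ are active, so the usual $1/n$-rate concentration bounds, calibrated to $n$ coordinates of unit-order variance, are far too weak. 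One must instead obtain concentration at the $n^{\alpha}$-scale with an explicit polynomial rate, which is exactly the role of the generalized strong law of large numbers and the H\'{a}jek--R\'{e}nyi-type maximal inequality announced in the contributions. These should yield a pointwise bound $\bbE[(\langle\calQ_{k,t;\eps}\rangle-\bbE\langle\calQ_{k,t;\eps}\rangle)^2]=O(\eps^{-2}n^{-\gamma})$ for some $0<\gamma<1$, the $\eps^{-2}$ reflecting the $\eps^{-1/2}$ weighting of $\hatbW$ in \eqref{defHamil1}; uniformly over $\eps\ge a_n$ this is $O(a_n^{-2}n^{-\gamma})$.

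Finally, adding the two contributions and performing the $\eps$-integral, the $t$-integral (using that $\gamma_k$ is strictly decreasing, so $|\tfrac{d\gamma_k(t)}{dt}|$ integrates to $\Delta_n^{-1}$), and the average over the interpolation index $k$ — for which the coherency $f_{k,1;\eps}=f_{k+1,0;\eps}$ established above makes the $k$-sum a telescoping Riemann sum — gives the stated $O\big(\max\{o((b_n-a_n)/\Delta_n),\,a_n^{-2}n^{-\gamma}\}\big)$. I expect essentially all of the genuinely new work to sit in the third step: the Gaussian-integration-by-parts identity and the thermal telescoping are structurally identical to the linear case, whereas the quenched concentration must be re-proved from scratch at the sparse $n^{\alpha}$-scale, and it is the interplay between that rate $n^{-\gamma}$ and the admissible decay of $a_n,b_n$ that ultimately constrains how slowly the cutoffs may be taken to zero.
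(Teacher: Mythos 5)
Your plan follows the classical overlap--concentration route of the adaptive interpolation method, but it misses the structural feature that the paper's proof actually rests on, and its central estimate is left unproven. The paper splits the lemma into two regimes: for $\alpha=1$ it simply cites the known concentration bound of \citep{Barbier2017TheAI}, which is the sole source of the $O(a_n^{-2}n^{-\gamma})$ term; for $0<\alpha<1$ it needs \emph{no} concentration of the overlap at all. The reason is sparsity of the prior: the relevant order parameter is $a=\frac{1}{n}\sum_{i=1}^n X_i\bar{X}_i$, and since $|\bar{X}_i|\le 2s_{\max}$ on the support of the posterior and $\bbE_{S\sim\tilP_0}[S^2]=(k/n)\,\bbE_{S\sim P_0}[S^2]=O(n^{\alpha-1})$, one gets the deterministic second-moment bound $\bbE[\langle a^2\rangle_{k,t;\eps}]\le 4s_{\max}^2\,\bbE_{S\sim\tilP_0}[S^2]=O(n^{\alpha-1})\to 0$ uniformly in $k,t,\eps$. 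Combined with a Cauchy--Schwarz decoupling of $\bbE[\langle ab\rangle_{k,t;\eps}]$ and the fact that $x\mapsto x/(1+\gamma_k(t)x)$ has derivative bounded by $1$, this yields the pointwise, disorder-free bound $\big|{\rm ymmse}_{k,t;\eps}-\frac{{\rm mmse}_{k,t;\eps}n^{\alpha-1}}{1+\gamma_k(t){\rm mmse}_{k,t;\eps}n^{\alpha-1}}\big|\le\tilde{f}(n)=o(1)$ uniformly, and the $t$- and $\eps$-integrals (using $\gamma_k(0)-\gamma_k(1)=\Delta_n^{-1}$) produce $o((b_n-a_n)/\Delta_n)$ directly. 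No thermal/quenched split, no Poincar\'e or Efron--Stein, and no calibration of $a_n,b_n$ against $n^{\alpha-1}$ is required for this part.

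The genuine gap sits exactly where you locate ``the main obstacle'': the quenched fluctuation bound $\bbE[(\langle\calQ_{k,t;\eps}\rangle-\bbE\langle\calQ_{k,t;\eps}\rangle)^2]=O(\eps^{-2}n^{-\gamma})$ at the $n^{\alpha}$ scale is asserted rather than derived, and the tool you nominate for it --- the generalized strong law of large numbers and H\'ajek--R\'enyi-type maximal inequality --- is not the right instrument. That lemma controls maxima of partial sums of (conditionally) independent summands and is used in the paper only for the AMP state-evolution analysis (Lemma \ref{thm:chang}); by contrast $\langle\calQ_{k,t;\eps}\rangle$ is a non-additive functional of the full disorder whose concentration in the linear case rests on free-energy concentration plus convexity in the perturbation parameter, machinery you would have to rebuild from scratch at the sparse scale. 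Your Gaussian-integration-by-parts reduction, the thermal telescoping, and the $\int_0^1|\tfrac{d\gamma_k(t)}{dt}|\,dt=\Delta_n^{-1}$ bookkeeping are consistent with the paper, but as written the proposal does not establish \eqref{keyconcen}: it reproves the hard $\alpha=1$ argument in outline while leaving its key quantitative step open, instead of exploiting the vanishing prior second moment that makes the sub-linear case elementary.
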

	\begin{proof}
		This lemma is a generalization of (93) in \citep{Barbier2017TheAI}. The proof of this lemma can be found in Appendix A. 
	\end{proof}
In addition, the following fact can be proved (see Appendix B). 
	\begin{lemma} \label{basiscexnew} The following holds:
		\begin{align}
		&\big|{\rm{mmse}}_{k,t;\eps}-{\rm{mmse}}_{k,0;\eps}\big|
		=O\bigg(\frac{n^{\frac{3}{2}(1+\alpha)}}{K_n \Delta_n}\bigg).
		\end{align}
	\end{lemma}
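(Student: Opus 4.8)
The plan is to bound the $t$-derivative of ${\rm mmse}_{k,t;\eps}$ uniformly on $[0,1]$ and then integrate, exploiting that only a single interpolation block depends on $t$ and that this block is weighted by $1/K_n$. First I would invoke the Nishimori identity, which is available because every term of the interpolating Hamiltonian \eqref{defHamil1} is the Hamiltonian of a genuine Bayesian observation channel of the planted signal $\bS$; this gives
\begin{align}
{\rm mmse}_{k,t;\eps}=\frac{1}{n^{\alpha}}\Big(\bbE\big[\|\bS\|^2\big]-\bbE\big[\|\langle\bX\rangle_{k,t;\eps}\|^2\big]\Big).
\end{align}
Since $\bbE[\|\bS\|^2]=\sum_{i=1}^n\bbE[S_i^2]=k\,\bbE_{S\sim P_0}[S^2]=O(n^{\alpha})$ does not depend on $t$, it suffices to control $\tfrac{d}{dt}\bbE[\|\langle\bX\rangle_{k,t;\eps}\|^2]$, which I would rewrite as a two-replica average $\bbE[\langle\bX^{(1)}\!\cdot\bX^{(2)}\rangle_{k,t;\eps}]$ over two conditionally independent copies $\bX^{(1)},\bX^{(2)}$ drawn from the same Gibbs measure.

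The key structural point is that only the third and fourth terms of \eqref{defHamil1} depend on $t$, and both carry an overall factor $1/K_n$ since their inverse noise variances are $\gamma_k(t)/K_n$ and $\lambda_k(t)/K_n$. Hence $\partial_t\calH_{k,t;\eps}=K_n^{-1}\calG_{k,t}(\bX)$, where $\calG_{k,t}$ is a combination of $\sum_{\mu}[\bA\bar{\bX}]_{\mu}^2$, $\sum_i\barX_i^2$ and the noise-coupled terms $\sum_{\mu}[\bA\bar{\bX}]_{\mu}W_{\mu}^{(k)}$, $\sum_i\barX_i\tilW_i^{(k)}$, the coefficients being $\gamma_k'(t)$ and $\lambda_k'(t)$, which are uniformly bounded by \eqref{ach1} together with $\Delta_n=\Omega_n(1)$. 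Differentiating the two-replica average then produces a sum of connected Gibbs correlations of the overlap $\bX^{(1)}\!\cdot\bX^{(2)}$ against $\calG_{k,t}$; I would remove the explicit couplings to $W^{(k)}$ and $\tilW^{(k)}$ by Gaussian integration by parts, which turns each surviving $K_n^{-1/2}$ into a further Gibbs correlation carrying its own $K_n^{-1/2}$, so that every resulting term retains the prefactor $1/K_n$.

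It then remains to show each such correlation is $O(n^{2\alpha})$. Bounding a covariance by the product of $L^2$-norms, this reduces to the uniform second-moment estimates $\bbE[\langle(\bX^{(1)}\!\cdot\bX^{(2)})^2\rangle]=O(n^{2\alpha})$, $\bbE[\langle(\sum_{\mu}[\bA\bar{\bX}]_{\mu}^2)^2\rangle]=O(n^{2\alpha})$ and $\bbE[\langle(\sum_i\barX_i^2)^2\rangle]=O(n^{2\alpha})$. These I would establish from the boundedness of the support ($s_{\max}<\infty$) and the expected-sparsity prior, which force $\bbE[\langle\|\bX\|^2\rangle]=O(n^{\alpha})$, together with the identity $\bbE[\|\bA\bv\|^2]=\|\bv\|^2$ and Gaussian concentration of $\|\bA\bar{\bX}\|^2$ for the design $\bA$. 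Combining everything yields $\tfrac{d}{dt}\bbE[\|\langle\bX\rangle_{k,t;\eps}\|^2]=O(n^{2\alpha}/K_n)$ uniformly in $t$, hence $|\tfrac{d}{dt}{\rm mmse}_{k,t;\eps}|=O(n^{\alpha}/K_n)$, and integrating over $[0,t]\subseteq[0,1]$ gives the statement.

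The hard part will be the last step: proving the uniform (in $t\in[0,1]$, in $\eps$, and in the admissible trial parameters $\{E_k\}$) second-moment and concentration bounds for the Gibbs observables $\sum_{\mu}[\bA\bar{\bX}]_{\mu}^2$ and $\sum_i\barX_i^2$, and carefully tracking the Gaussian integration-by-parts terms so that the noise-coupled contributions genuinely inherit the full $1/K_n$ factor rather than only $1/\sqrt{K_n}$. Everything else is the standard replica/covariance bookkeeping already used to derive $\calA_{k,t;\eps}$ and $\calB_{k,t;\eps}$.
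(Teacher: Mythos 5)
Your proposal follows essentially the same route as the paper's Appendix~B: differentiate ${\rm mmse}_{k,t;\eps}$ in $t$, observe that only the $k$-th interpolation block depends on $t$ so that $\partial_t\calH_{k,t;\eps}$ carries an overall $1/K_n$, bound the resulting Gibbs correlations by $O(n^{2\alpha})$ via Cauchy--Schwarz together with second-moment estimates from the bounded-support sparse prior, and integrate over $[0,1]$. The only differences are cosmetic (you reduce to $\bbE[\|\langle\bX\rangle\|^2]$ via Nishimori and treat the noise-coupled terms by Gaussian integration by parts, where the paper works directly with the MSE and drops them using $\bbE[W_\mu^{(k)}]=0$), so the argument is correct and matches the paper's proof.
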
 
Based on the proof of Lemma \ref{beo1}, another interesting fact can also be derived.
\begin{lemma}\label{lemaat}
\begin{align}
\rm{ymmse}_{1,0;0}=\frac{{\rm{mmse}_{1,0;0}}n^{\alpha-1}}{1+{\rm{mmse}_{1,0;0}} n^{\alpha-1}/\Delta_n}+o_n(1) \label{cure1a}.
\end{align} 
\end{lemma}
\begin{proof}
We can obtain \eqref{cure1a} by setting $(k=1,t=0,\eps=0)$ in Eq.~\eqref{cubest} of the proof of Lemma \ref{beo1} with noting that $\gamma_k(0)=\Delta_n^{-1}$. 
\end{proof}
	Return to the proof of our main theorem. It is easy to verify the following identity:
	\begin{align}
	\psi(E_k;\Delta_n):=\frac{\delta}{2}\int_0^1 dt \frac{d \gamma_k(t)}{dt}\bigg(\frac{E_k}{(1+\gamma_k(t)E_k)^2}-\frac{E_k}{1+\gamma_k(t)E_k}\bigg).
	\end{align} Let
	\begin{align}
	\tilf_{n,\rm{RS}}(\{E_k\}_{k=1}^{K_n};\Delta_n):=i_{n,\rm{den}}\bigg(\Sigma_{\rm{mf}}\{E_k\}_{k=1}^{K_n};\Delta_n\bigg) +\frac{1}{K_n}\sum_{k=1}^{K_n}\psi(E_k;\Delta_n) \label{defpsi}.
	\end{align}
	From \eqref{cuchi1} and Lemma \ref{beo1}, we obtain as $a_n \to 0$ that
	\begin{align}
	\int_{a_n}^{2a_n} d\eps f_n 
	&=\int_{a_n}^{2a_n}n^{1-\alpha}d\eps\bigg\{f_{K_n,1;\eps}-f_{K_n,1;0}\bigg\} + \int_{a_n}^{2a_n} d\eps  \tilf_{n,\rm{RS}}(\{E_k\}_{k=1}^{K_n};\Delta_n)\nn\\
	&\quad + \frac{\delta }{2K_n}\int_{a_n}^{2a_n} d\eps \sum_{k=1}^{K_n}\int_0^1 dt \frac{d\gamma_k(t)}{dt} \frac{\gamma_k(t)(E_k-{\rm{mmse}_{k,t;\eps}} n^{\alpha-1})^2}{(1+\gamma_k(t)E_k)^2(1+\gamma_k(t){\rm{mmse}_{k,t;\eps}}n^{\alpha-1})}\nn\\
	&\qquad+ O\bigg(\max\bigg\{o\bigg(\frac{a_n}{\Delta_n}\bigg), a_n^{-2}n^{-\gamma}\bigg\}\bigg) \label{fundeq11}.
	\end{align}
	Now, since $\gamma_k(t)$ is non-creasing in $t \in [0,1]$, it holds that $\frac{d\gamma_k(t)}{dt}\leq 0$. Hence, from \eqref{fundeq11}, we obtain
	\begin{align}
	\int_{a_n}^{2a_n} d\eps f_n &\leq \int_{a_n}^{2a_n}n^{1-\alpha}d\eps\bigg\{f_{K_n,1;\eps}-f_{K_n,1;0}\bigg\}\nn\\
	&\quad + \int_{a_n}^{2a_n} d\eps  \tilf_{n,\rm{RS}}(\{E_k\}_{k=1}^{K_n};\Delta_n)+ O\bigg(\max\bigg\{o\bigg(\frac{a_n}{\Delta_n}\bigg), a_n^{-2}n^{-\gamma}\bigg\}\bigg) \label{upper}.
	\end{align}
	By setting $E_k=\argmin_{E\in [0,\nu_n]} f_{n,\rm{RS}}(E;\Delta_n)$ for all $k \in [K_n]$, from \eqref{upper}, we have
	\begin{align}
	f_n \leq O(1) \eps+ \min_{E\in [0,\nu_n]} f_{n,\rm{RS}}(E;\Delta_n)+O\bigg(\max\bigg\{o\bigg(\frac{1}{\Delta_n}\bigg), a_n^{-3}n^{-\gamma}\bigg\}\bigg)
	\end{align} for some $\gamma>0$. By taking $\eps \to 0$, we can achieve the following upper bound:
	\begin{align}
	f_n -\min_{E\in [0,\nu_n]} f_{n,\rm{RS}}(E;\Delta_n) \leq O\bigg(\max\bigg\{o\bigg(\frac{1}{\Delta_n}\bigg), a_n^{-3}n^{-\gamma}\bigg\}\bigg) \label{upperbd}.
	\end{align}
	On the other hand, from Lemma \ref{basiscexnew} and \eqref{fundeq11}, by choosing $K_n=\Omega(n^b)$ for some sufficient large $b$ such that $|{\rm{mmse}}_{k,t;\eps}-{\rm{mmse}}_{k,0;\eps}|$ decays sufficiently fast, we obtain
	\begin{align}
	\int_{a_n}^{2a_n} d\eps f_n 
	&=\int_{a_n}^{2a_n}n^{1-\alpha}d\eps\bigg\{f_{K_n,1;\eps}-f_{K_n,1;0}\bigg\} + \int_{a_n}^{2a_n} d\eps  \tilf_{n,\rm{RS}}(\{E_k\}_{k=1}^{K_n};\Delta_n)\nn\\
	&\quad + \frac{\delta}{2K_n}\int_{a_n}^{2a_n} d\eps \sum_{k=1}^{K_n}\int_0^1 dt \frac{d\gamma_k(t)}{dt} \frac{\gamma_k(t)(E_k-{\rm{mmse}_{k,0;\eps}} n^{\alpha-1})^2}{(1+\gamma_k(t)E_k)^2(1+\gamma_k(t){\rm{mmse}_{k,0;\eps}}n^{1-\alpha})}\nn\\
	&\qquad+O\bigg(\max\bigg\{o\bigg(\frac{a_n}{\Delta_n}\bigg), a_n^{-2}n^{-\gamma}\bigg\}\bigg) \label{fundeq11a}.
	\end{align}
	By choosing $E_k={\rm{mmse}_{k,0;\eps}} n^{\alpha-1}$ for all $k \in [K_n]$, it holds that
	\begin{align}
	E_k&={\rm{mmse}_{1,0;\eps}} n^{\alpha-1} \\
	&=\frac{1}{n^{\alpha}}\bbE\big[\|\bS-\bbE[\bS|\bY]\|^2\big] n^{\alpha-1}\\
	&\leq \frac{1}{n^{\alpha}} \bbE[\|\bS\|^2] n^{\alpha-1} \label{buchi}\\
	&= \frac{n}{n^{\alpha}}  \bbE_{S \sim \tilP_0} [S^2] n^{\alpha-1}\\
	&= \frac{n}{n^{\alpha}}  \frac{n^{\alpha}}{n} \bbE_{S \sim P_0} [S^2] n^{\alpha-1}\\
	&=\bbE_{S \sim P_0} [S^2] n^{\alpha-1}\\
	&=\nu_n \label{bestchi},
	\end{align} where $\nu_n$ is defined in Theorem \ref{mainthm}. Here, \eqref{buchi} follows from the fact that MMSE estimation gives the lowest MSE.
	
	Hence, from \eqref{fundeq11a} we have
	\begin{align}
	\int_{a_n}^{2a_n} d\eps f_n& 
	=\int_{a_n}^{2a_n}n^{1-\alpha}d\eps\bigg\{f_{K_n,1;\eps}-f_{K_n,1;0}\bigg\}\nn\\
	&\quad + \int_{a_n}^{2a_n} d\eps  \tilf_{n,\rm{RS}}(\{E_k\}_{k=1}^{K_n};\Delta_n)+O\bigg(\max\bigg\{o\bigg(\frac{a_n}{\Delta_n}\bigg), a_n^{-2}n^{-\gamma}\bigg\}\bigg) \label{fundeq11b}
	\end{align} 
	Now, let $\Sigma_k^{-2}:=\delta n^{\alpha-1}/(E_k+\Delta_n)$ for all $k \in [K_n]$, then it holds that $\Sigma_k^{-2}\geq \frac{\delta n^{\alpha-1}}{\nu_n+\Delta_n}$ by \eqref{bestchi} . For a given $\Delta_n$, set $\psi_{\Delta_n}(\Sigma^{-2})=\psi(\delta n^{\alpha-1}/\Sigma^{-2}-\Delta_n;\Delta_n)$. Since $\psi_{\Delta_n}(\cdot)$ is a convex function, from \eqref{defpsi}, we are easy to see that
	\begin{align}
	\tilf_{n,\rm{RS}}(\{E_k\}_{k=1}^{K_n};\Delta_n)&=i_{n,\rm{den}}\bigg(\Sigma_{\rm{mf}}\bigg(\{E_k\}_{k=1}^{K_n};\Delta_n\bigg)\bigg) +\frac{1}{K_n}\sum_{k=1}^{K_n}\psi_{\Delta}(\Sigma_k^{-2})\\
	&\quad \geq i_{n,\rm{den}}\bigg(\Sigma_{\rm{mf}}\bigg(\{E_k\}_{k=1}^{K_n};\Delta_n\bigg)\bigg)+ \psi_{\Delta_n}\bigg( \Sigma_{\rm{mf}}^{-2}\{E_k\}_{k=1}^{K_n};\Delta_n\bigg) \\
	&\quad \geq \min_{\Sigma\in \big[0,\sqrt{\frac{\nu_n+\Delta_n}{\delta n^{\alpha-1}}}\big]} \bigg(i_{n,\rm{den}}(\Sigma)+ \psi_{\Delta_n}(\Sigma^{-2})\bigg)\\
	&\quad \geq \min_{E\in [0,\nu_n]} f_{n,\rm{RS}}(E;\Delta_n) \label{eq97}
	\end{align} 
	From \eqref{fundeq11b} and \eqref{eq97}, we obtain a lower bound
	\begin{align}
	f_n \geq \min_{E\in [0,\nu_n]} f_{n,\rm{RS}}(E;\Delta_n) +O(1)\eps+O\bigg(\max\bigg\{o\bigg(\frac{1}{\Delta_n}\bigg), a_n^{-3}n^{-\gamma}\bigg\}\bigg) \label{lowerbd}.
	\end{align}
	From \eqref{upperbd} and \eqref{lowerbd}, we have
	\begin{align}
	f_n - \min_{E\in [0,\nu_n]} f_{n,\rm{RS}}(E;\Delta_n)=O\bigg(\max\bigg\{o\bigg(\frac{1}{\Delta_n}\bigg), a_n^{-3}n^{-\gamma}\bigg\}\bigg), 
	\end{align}
	or
	\begin{align}
	\frac{I(\bS;\bY|\bA)}{n^{\alpha}}- \min_{E\in [0,\nu_n]} f_{n,\rm{RS}}(E ;\Delta_n) = O\bigg(\max\bigg\{o\bigg(\frac{1}{\Delta_n}\bigg), a_n^{-3}n^{-\gamma}\bigg\}\bigg) \label{green1} 
	\end{align} which leads to \eqref{buchi1} by choosing the sequence $a_n \to 0$ and $a_n^{-3} n^{-\gamma} \to 0$. 
	
	On the other hand, by Lemma \ref{lemaat}, we have the following relation:
	\begin{align}
	\rm{ymmse}_{1,0;0}=\frac{{\rm{mmse}_{1,0;0}}n^{\alpha-1}}{1+{\rm{mmse}_{1,0;0}} n^{\alpha-1}/\Delta_n}+o_n(1). \label{cure1}
	\end{align} 
	
	Now, denote by
	\begin{align}
	\tili_{n,\rm{den}}(\Sigma)&:= I(S;S+\tilW \Sigma),\\
	\tSigma(E(\Delta_n);\Delta_n)^{-2}&=\frac{\delta}{E(\Delta_n)+\Delta_n},
	\end{align} then from \eqref{bet1} and \eqref{bet2}, we have
	\begin{align}
	i_{n,\rm{den}}(\Sigma)&=n^{1-\alpha}\tili_{n,\rm{den}}(\Sigma) \label{bet3},\\
	\Sigma(E(\Delta_n);\Delta_n)^{-2}&:=n^{\alpha-1}\tSigma(E(\Delta_n);\Delta_n)^{-2} \label{bet4}.
	\end{align}
	Then, from \eqref{defnrs}, we have
	\begin{align}
	\frac{df_{n,\rm{RS}}(\tilE(\Delta_n);\Delta_n)}{d\Delta_n^{-1}}&=\frac{d\psi(\tilE(\Delta_n);\Delta_n)}{d\Delta_n^{-1}}+\frac{di_{n,\rm{den}}(\Sigma(\tilE(\Delta_n);\Delta_n)}{d\Delta_n^{-1}} \\
	&=\frac{d\psi(\tilE(\Delta_n);\Delta_n)}{d\Delta_n^{-1}}+ \bigg(\frac{di_{n,\rm{den}}(\Sigma(\tilE(\Delta_n);\Delta_n)}{\Sigma(\tilE(\Delta_n);\Delta_n)^{-2}}\bigg)\bigg(\frac{d\Sigma(\tilE(\Delta_n);\Delta_n)^{-2}}{d\Delta_n^{-1}}\bigg) \\
	&=\frac{d\psi(\tilE(\Delta_n);\Delta_n)}{d\Delta_n^{-1}}+ \bigg(\frac{d\tili_{n,\rm{den}}(\Sigma(\tilE(\Delta_n);\Delta_n)}{\Sigma(\tilE(\Delta_n);\Delta_n)^{-2}}\bigg)\bigg(\frac{d\tSigma(\tilE(\Delta_n);\Delta_n)^{-2}}{d\Delta_n^{-1}}\bigg) \\
	&=\frac{\delta}{2}\bigg(\frac{\tilE(\Delta_n)}{1+\tilE(\Delta_n)/\Delta_n}\bigg) \label{last},
	\end{align} where \eqref{last} follows from \citep{BarbierALT2016}. 
	
    Hence, for $n \to \infty$, we have 
	\begin{align}
	{\rm{ymmse}_{1,0;0}}&=\frac{1}{\delta n^{\alpha}}\frac{dI(\bS;\bY|\bA)}{d\Delta_n^{-1}} \label{culi1}\\
	&=\frac{2}{\delta}\bigg(\frac{df_{n,\rm{RS}}(\tilE(\Delta_n);\Delta_n)}{d\Delta_n^{-1}}\bigg) \label{eq99} \\
	&=\frac{\tilE(\Delta_n)}{1+\tilE(\Delta_n)/\Delta_n} \label{cure2},
	\end{align} where \eqref{culi1} follows from \citep{GuoShamaiVerdu2005} and \eqref{Q1} with $m=\delta n^{\alpha}$, \eqref{eq99} follows from \eqref{green1}, and \eqref{cure2} follows from \eqref{last}.

	From \eqref{cure1} and \eqref{cure2}, we obtain \eqref{eq35eqpre1}. This concludes our proof of Theorem \ref{mainthm}.
\end{proof}
\section{Algorithm and performance guarantee} \label{sec:alg}
In this section, we propose a way to modify the Approximate Message Passing (AMP) algorithm \citep{BayatiMonta2011} to make it work for sub-linear regimes. See our following Algorithm 1. Compared with the AMP in \citep{BayatiMonta2011}, we multiply an extra term $n^{1-\alpha}$ in the steps 1 (initialize) and 5 (state evolution).  

\begin{algorithm}[tbh]\label{ALG1}
	\caption{AMP for sub-linear regimes.}
	\label{alg:example}
	\begin{algorithmic}
		\STATE {\bfseries Input:} observation $\by$, matrix sizes $m, n$, other parameters $\alpha, \delta$, number of iterations ${\rm{itermax}}$, $t=1$, $U_0 \sim \tilP_0, W \sim \calN(0,1)$, $\{\eta_t\}_{t=1}^{\rm{itermax}}$ are given Lipschitz continuous functions.
		\REPEAT
		\STATE Initialize $\tau = \sqrt{\Delta_n +n^{1-\alpha}\bbE_{S\sim \tilP_0}[S^2]/\delta}, \bz=\b0, \hatbx=\b0, d=0$.
		\STATE $\bz \gets \by-\bA \hatbx + \frac{1}{\delta}\bz d$
		\STATE $\bh \gets \bA^* \bz+ \hatbx$
		\STATE $\hatbx \gets \eta_t(\bh,\tau)$,\quad $d\gets \rm{Mean} \big(\frac{d\eta_t}{dx} (\bh,\tau)\big)$
		\STATE $\tau \gets \sqrt{\Delta_n + (n^{1-\alpha}/\delta)\bbE[(\eta_t(U_0+ \tau W,\tau)-U_0
			)^2]}   $
		\STATE $t \gets t+1$
		\UNTIL{t=\rm{itermax}}
		\STATE {\bfseries Output:} $\hatbx$.
	\end{algorithmic}
\end{algorithm}
From now on, we denote by $\hatbx^{(t)}, \by^{(t)}, \bz^{(t)}, \bh^{(t)},\tau_t$ the value of $\hatbx, \by, \bz, \bh,\tau$ at the iteration $t$, respectively. 

First, we prove a few important lemmas. We begin with a generalization of the general law of large numbers in \citep[Theorem 2.1]{Fazekas2001AGA}. Our generalization may be of independent interest.
\begin{lemma}  \label{GSLLN} Let $\{b_n\}_{n=1}^{\infty}$ be a nondecreasing unbounded sequence of positive numbers, and $\{S_n\}_{n=1}^{\infty}$ be a sequence of random variables. Let $\{\nu_n\}_{n=1}^{\infty}$ be nonnegative numbers. Let $r>0$ and $\rho\geq 0$ be two fixed numbers. Assume that for each $n\geq 1$ 
	\begin{align}
	\bbE\bigg[\max_{1\leq l\leq n} \big|S_l\big|^r\bigg]\leq d_n^{1/(1+\rho)} \sum_{i=1}^n \nu_i \label{cond1}
	\end{align} for some positive and non-increasing sequence  $\{d_n\}_{n=1}^{\infty}$ with $d_1=1$. Under the condition that $d_n b_n \to \infty$ and
	\begin{align}
	\sum_{l=1}^{\infty} \frac{\nu_l}{b_l^r d_l^{r-1/(1+\rho)}} <\infty \label{assum1}, 
	\end{align}
	then
	\begin{align}
	\lim_{n\to \infty} \frac{S_n}{d_n b_n}=0, \qquad \mbox{a.s.} \label{G2}.
	\end{align}
\end{lemma}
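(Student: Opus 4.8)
The plan is to recognize the statement as a Fazekas--Klesov--type strong law and to supply the two ingredients it rests on: a Hájek--Rényi type maximal inequality adapted to the two auxiliary sequences $\{d_l\}$ and $\{b_l\}$, and a Kronecker/tail argument that upgrades an $L^1$ bound to almost sure convergence. The first move is to strip the $n$-dependence off the prefactor in \eqref{cond1}. Since $\{d_l\}$ is non-increasing, $d_n^{1/(1+\rho)}\le d_l^{1/(1+\rho)}$ for every $l\le n$, so \eqref{cond1} upgrades to the partial-sum form $\bbE[\max_{1\le l\le n}|S_l|^r]\le \sum_{l=1}^n \alpha_l$ with $\alpha_l:=d_l^{1/(1+\rho)}\nu_l$. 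Taking $c_l:=d_l b_l$ as normalizer (unbounded, since $d_nb_n\to\infty$), hypothesis \eqref{assum1} reads precisely $\sum_{l\ge1}\alpha_l/c_l^{\,r}=\sum_{l\ge1}\nu_l/(b_l^r d_l^{\,r-1/(1+\rho)})<\infty$, so everything reduces to deducing $S_n/c_n\to0$ a.s.\ from $\bbE[\max_{l\le n}|S_l|^r]\le\sum_{l\le n}\alpha_l$ together with $\sum_l\alpha_l/c_l^r<\infty$.

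The engine is the maximal inequality $\bbE[\max_{1\le l\le n}|S_l|^r/c_l^r]\le \sum_{l=1}^n \alpha_l/c_l^r$, which I would establish by summation by parts. Writing $M_l:=\max_{j\le l}|S_j|^r$ (non-decreasing, $M_0:=0$) and using that $c_l^{-r}$ is non-increasing, one gets $\max_{l\le n}c_l^{-r}M_l\le\sum_{j\le n}c_j^{-r}(M_j-M_{j-1})$; taking expectations and performing a second Abel summation against the non-decreasing sequence $a_j:=\bbE[M_j]\le A_j:=\sum_{i\le j}\alpha_i$ yields the claimed bound. Letting $n\to\infty$ then shows $\sup_l|S_l|^r/c_l^r\in L^1$.

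For the almost sure conclusion I would control the tail $R_N:=\sup_{l>N}|S_l|^r/c_l^r$, which is non-increasing in $N$ and hence converges a.s.\ to some $R_\infty$. Applying the maximal inequality to the indices $l>N$ gives $\bbE[R_N]\le c_N^{-r}\bbE[\max_{l\le N}|S_l|^r]+\sum_{l>N}\alpha_l/c_l^r\le A_N/c_N^r+\sum_{l>N}\alpha_l/c_l^r$. The second term vanishes as $N\to\infty$ by convergence of the series, and the first vanishes by Kronecker's lemma (using $c_N\uparrow\infty$ and $\sum_l\alpha_l/c_l^r<\infty$, so that $A_N/c_N^r\to0$). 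Thus $\bbE[R_\infty]=\lim_N\bbE[R_N]=0$, so $R_\infty=0$ a.s., i.e.\ $\limsup_l|S_l|/c_l=0$ a.s., which is exactly \eqref{G2}.

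The main obstacle is that the effective normalizer $c_l=d_l b_l$ need not be monotone: $\{d_l\}$ decreases while $\{b_l\}$ increases, yet both the summation-by-parts step and Kronecker's lemma require $c_l^{-r}$ non-increasing. I would handle this by carrying the factor $d_l$ explicitly through the two Abel summations instead of absorbing it into the denominator; this produces, alongside the principal series $\sum_l \nu_l/(b_l^r d_l^{\,r-1/(1+\rho)})$, a cross term of the form $\sum_l b_l^{-r}\big(d_l^{\,1/(1+\rho)-r}-d_{l-1}^{\,1/(1+\rho)-r}\big)\sum_{i<l}\nu_i$ arising precisely from the opposing monotonicities, and the crux is to show it is dominated by the principal series (it vanishes whenever $c_l$ is non-decreasing, as in the power-law regimes $d_l\asymp l^{-\theta}$, $b_l\asymp l^{\kappa}$ with $\kappa>\theta$ relevant to the AMP analysis). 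Finally, the hypotheses $d_1=1$ and $d_l$ non-increasing play exactly the two roles above: the former legitimizes the per-summand reduction of the prefactor, and the latter keeps $d_l^{\,1/(1+\rho)-r}$ non-decreasing (for $r\ge1/(1+\rho)$), which is what makes the second Abel summation sign-definite.
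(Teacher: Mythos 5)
Your reduction of \eqref{cond1} to the summand form $\bbE[\max_{1\le l\le n}|S_l|^r]\le\sum_{l=1}^n\alpha_l$ with $\alpha_l=d_l^{1/(1+\rho)}\nu_l$ is correct (it uses exactly the monotonicity of $d$), and you correctly observe that \eqref{assum1} then reads $\sum_l\alpha_l/(d_lb_l)^r<\infty$. But at that point you have reduced the lemma to the Fazekas--Klesov strong law with normalizer $c_l=d_lb_l$, and---as you yourself note---that theorem, its H\'{a}jek--R\'{e}nyi maximal inequality, and Kronecker's lemma all require the normalizer to be non-decreasing, which $d_lb_l$ need not be. Handling this non-monotonicity is the entire content of the lemma beyond the cited Theorem 2.1 of Fazekas--Klesov, and your proposal does not close it: the cross term $\sum_l b_l^{-r}\big(d_l^{1/(1+\rho)-r}-d_{l-1}^{1/(1+\rho)-r}\big)\sum_{i<l}\nu_i$ produced by your double Abel summation is merely asserted to be ``dominated by the principal series,'' with no argument. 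There is no reason it should be in general: it couples the partial sums $\sum_{i<l}\nu_i$ to the increments of $d_l^{1/(1+\rho)-r}$, and the hypotheses of the lemma impose no relation between these two quantities. Your parenthetical sign-definiteness claim also needs $r\ge 1/(1+\rho)$, which is not assumed (only $r>0$ and $\rho\ge 0$). So the proof has a genuine gap at its crux, and likewise the invocation of Kronecker's lemma with ``$c_N\uparrow\infty$'' assumes the very monotonicity that is missing.

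The paper circumvents the monotonicity problem differently. First, it proves the maximal inequality for the normalizer $\beta_l d_l$ with $\beta_l$ an \emph{arbitrary} non-decreasing positive sequence, by partitioning the indices into blocks $A_i=\{k:\,c^i\le\beta_k d_k<c^{i+1}\}$ according to the \emph{value} of $\beta_k d_k$ rather than by index ranges; within each block it bounds $\max_{l\in A_i}|S_l|^r$ by $\max_{k\le k(i)}|S_k|^r$ and invokes \eqref{cond1} at the last index $k(i)$ of the block, using $d_{k(i)}\le d_j$ for $j\le k(i)$. This blocking never requires $\beta_l d_l$ to be monotone. Second, instead of Kronecker's lemma, the almost-sure statement is obtained by choosing $\beta_n=\max_{1\le k\le n}b_k t_k^{1/(2r)}$ with $t_k$ the tail of the series in \eqref{assum1}: this $\beta_n$ is non-decreasing by construction, still makes the weighted series converge (via Dini's theorem), and satisfies $\beta_n/b_n\to 0$; then $\sup_l|S_l|/(\beta_l d_l)<\infty$ a.s.\ combined with $\beta_l/b_l\to 0$ yields \eqref{G2}. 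To salvage your route you would have to either prove the cross-term bound under the stated hypotheses (which I do not believe is possible without extra assumptions) or replace the Abel-summation step with a value-based blocking of this kind.
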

\begin{remark} For $d_n=1$ for all $n$, this lemma recovers the general strong law of large numbers in \citep[Theorem 2.1]{Fazekas2001AGA}.
\end{remark}
\begin{proof}
	See Appendix C for a detailed proof.	
\end{proof}
Next, we recall the following lemma from \citep{BayatiMonta2011}.
\begin{lemma} \label{lem:lipschitz} Let $\phi: \bbR^q \to \bbR$ be a pseudo-Lipschitz of order $k$, then the following hold:
	\begin{itemize}
		\item There is a constant $L'$ such that for all $\bx \in \bbR^q: |\phi(\bx)|\leq L'(1+\|\bx\|^k)$.
		\item $\phi$ is locally Lipschitz, that is for any $Q>0$,  there exists a constant $L_{Q,q}<\infty$ such that for all $\bx, \by \in [-Q,Q]^q$,
		\begin{align}
		|\phi(\bx)-\phi(\by)|\leq L_{Q,q}\|\bx-\by\|.
		\end{align}
Further, $L_{Q,q}\leq c[1+(Q\sqrt{q})^{k-1}]$ for some constant $c$.
	\end{itemize}
\end{lemma}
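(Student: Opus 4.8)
The plan is to derive both claims directly from the defining pseudo-Lipschitz inequality
\[
|\phi(\bx)-\phi(\by)|\le L\big(1+\|\bx\|^{k-1}+\|\by\|^{k-1}\big)\|\bx-\by\|,
\]
which holds for all $\bx,\by\in\bbR^q$ by hypothesis, so essentially everything follows by plugging in convenient choices of the arguments and by tidying up the resulting constants.

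For the first bullet I would specialize the inequality to $\by=\mathbf{0}$. This gives $|\phi(\bx)-\phi(\mathbf{0})|\le L(1+\|\bx\|^{k-1})\|\bx\|=L\|\bx\|+L\|\bx\|^{k}$, and hence, by the triangle inequality, $|\phi(\bx)|\le |\phi(\mathbf{0})|+L\|\bx\|+L\|\bx\|^{k}$. The only term that does not already have the desired form is the linear one, so I would absorb it using the elementary bound $\|\bx\|\le 1+\|\bx\|^{k}$ (valid since $k>1$: bound $\|\bx\|$ by $1$ when $\|\bx\|\le 1$ and by $\|\bx\|^{k}$ otherwise). All lower-order contributions then collapse into $1+\|\bx\|^{k}$, yielding $|\phi(\bx)|\le L'(1+\|\bx\|^{k})$ with, for instance, $L'=|\phi(\mathbf{0})|+2L$.

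For the second bullet I would exploit that every point of the cube $[-Q,Q]^q$ satisfies $\|\bx\|\le Q\sqrt{q}$. Substituting this bound for both $\|\bx\|^{k-1}$ and $\|\by\|^{k-1}$ in the pseudo-Lipschitz inequality gives $|\phi(\bx)-\phi(\by)|\le L\big(1+2(Q\sqrt{q})^{k-1}\big)\|\bx-\by\|$ for all $\bx,\by\in[-Q,Q]^q$, so the local Lipschitz constant can be taken to be $L_{Q,q}=L\big(1+2(Q\sqrt{q})^{k-1}\big)$, which is finite and of the claimed form $L_{Q,q}\le c\,[1+(Q\sqrt{q})^{k-1}]$ with $c=2L$.

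I do not anticipate any genuine obstacle, since the statement is a routine unpacking of the definition. The only points that require a little care are the interpolation inequality $\|\bx\|\le 1+\|\bx\|^{k}$ used to merge the linear and degree-$k$ terms in the first part, and keeping track of the constants in the second part so that the final constant $c$ in the bound $L_{Q,q}\le c\,[1+(Q\sqrt{q})^{k-1}]$ is genuinely independent of both $Q$ and $q$.
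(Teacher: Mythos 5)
Your proof is correct. Note that the paper offers no proof of this lemma at all --- it is simply recalled from \citep{BayatiMonta2011} --- and your argument (setting $\by=\mathbf{0}$ and absorbing the linear term via $\|\bx\|\le 1+\|\bx\|^{k}$ for the first bullet, then bounding $\|\bx\|,\|\by\|\le Q\sqrt{q}$ on the cube for the second) is exactly the standard derivation given in that reference, with correctly tracked constants.
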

Now, we prove the following important lemma.
\begin{lemma} \label{thm:chang} Let $\{f_t\}_{t\geq 0}$ and $\{g_t\}_{t\geq 0}$ be two sequences of functions, where for each $t \in \bbZ^+$, $f_t:\bbR^2 \to \bbR$ and $g_t:\bbR^2 \to \bbR$ are assumed to be Lipschitz continuous. Given $\tilbw \in \bbR^m$ and $\bs_0 \in \bbR^n$, define the sequence of vectors $\bh^{(t)}, \bq^{(t)} \in \bbR^n$ and $\bb^{(t)},\boldm^{(t)} \in \bbR^m$ such that
	\begin{align}
	\bh^{(t+1)}&=\bA^* \boldm^{(t)}-\zeta_t \bq^{(t)},\qquad  \boldm^{(t)}=g_t(\bb^{(t)},\tilbw) \label{defbm}\\
	\bb^{(t)}&=\bA \bq^{(t)}-\lambda_t \boldm^{(t-1)}, \quad \bq^{(t)}=f_t(\bh^{(t)},\bs_0) \label{deffbq},
	\end{align} where
	$
	\zeta_t=\langle g_t'(\bb^t,\tilbw)\rangle, \lambda_t=\frac{1}{\delta} \langle f_t'(\bh^{(t)},\bs_0)\rangle
	$ (both derivatives are with respect to the first argument.)
	Assume that 
	\begin{align}
	\sigma_0^2=\frac{n^{1-\alpha}}{ \delta}\bigg(\frac{\bbE\big[\|\bq^{(0)}\|^2\big]}{n}\bigg)
	\end{align} is positive and finite, for a sequence of initial conditions of increasing dimensions. State evolution defines quantities $\{\tau_t^2\}_{t\geq 0}$ and $\{\sigma_t^2\}_{t\geq 0}$ via
	\begin{align}
	\tau_t^2&=\bbE\big[g_t(\sigma_t Z,\tilW)^2\big] \label{rectaut},\\
	\sigma_t^2&=\frac{n^{1-\alpha}}{\delta}\bbE\big[f_t\big(\tau_{t-1}Z,U_0\big)^2\big] \label{recursigmat}
	\end{align} where $\tilW \sim \calN(0,\Delta_n)$ and $U_0 \sim \tilP_0$ which are independent of $Z \sim \calN(0,1)$. Then, for any pseudo-Lipschitz function $\psi: \bbR^2 \to \bbR_+$ of order $k$ and $t\geq 0$, it holds almost surely that
	\begin{align}
	\lim_{n \to \infty} \frac{1}{n^{\alpha}} \sum_{i=1}^n \phi_h(h_i^{(1)},h_i^{(2)},\cdots, h_i^{(t+1)},s_{0,i})-n^{1-\alpha} \bbE\bigg[\phi_h(\tau_0 Z_0,\tau_1 Z_1,\cdots, \tau_t Z_t, U_0)\bigg]=0 \label{bat1},\\
	\lim_{m \to \infty} \frac{1}{m} \sum_{i=1}^m \phi_b(b_i^{(1)},b_i^{(2)},\cdots, b_i^{(t+1)},\tilw_i)-\bbE\bigg[\phi_b(\sigma_0 \hatZ_0,\sigma_1 \hatZ_1,\cdots, \sigma_t \hatZ_t,\tilW)\bigg]=0 \label{bat2},
	\end{align} where $(Z_0,Z_1,\cdots,Z_t)$ and $(\hatZ_0,\hatZ_1,\cdots,\hatZ_t)$ are two zero-mean Gaussian vectors independent of $U_0$ and $W$, with $Z_i, \hatZ_i \sim \calN(0,1)$. 
\end{lemma}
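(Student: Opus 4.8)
The plan is to adapt the conditioning technique of Bolthausen and of Bayati--Montanari \citep{BayatiMonta2011} to the sub-linear normalization $m=\delta n^{\alpha}$, proving \eqref{bat1} and \eqref{bat2} simultaneously by induction on $t$. The backbone is the observation that, conditioned on the $\sigma$-algebra $\calG_t$ generated by the past iterates $\{\bq^{(s)},\boldm^{(s)},\bh^{(s)},\bb^{(s)}\}_{s\le t}$ together with $\bs_0$ and $\tilbw$, the Gaussian matrix $\bA$ is subject to the linear constraints $\bA\bq^{(s)}=\bb^{(s)}+\lambda_s\boldm^{(s-1)}$ and $\bA^*\boldm^{(s)}=\bh^{(s+1)}+\zeta_s\bq^{(s)}$ coming from \eqref{defbm}--\eqref{deffbq}. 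Because $\bA$ has i.i.d.\ Gaussian entries, its conditional law is again Gaussian, and I would write $\bA\stackrel{d}{=}\bE_t+\bP^{\perp}\tilbA\,\bP^{\perp}$, where $\bE_t$ is the conditional mean matrix fixed by the constraints and $\tilbA$ is a fresh independent copy of $\bA$ compressed to the orthogonal complements of the past $\bq$'s and $\boldm$'s. The role of the Onsager terms $\zeta_t\bq^{(t)}$ and $\lambda_t\boldm^{(t-1)}$, whose coefficients are tuned to the sub-linear dimension ratio, is precisely to cancel the deterministic drift coming from $\bE_t$, so that the leading contribution to each new iterate is an effectively fresh Gaussian vector.

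First I would set up inductive hypotheses $\calH_t$ asserting \eqref{bat1}--\eqref{bat2} up to step $t$, together with auxiliary concentration statements that the empirical Gram matrices $\tfrac{1}{n}\langle\bq^{(s)},\bq^{(r)}\rangle$ and $\tfrac{1}{m}\langle\boldm^{(s)},\boldm^{(r)}\rangle$ concentrate on their state-evolution predictions and that $\zeta_s,\lambda_s$ converge. Given $\calH_t$, the conditional decomposition expresses $\bb^{(t+1)}=\bA\bq^{(t+1)}-\lambda_{t+1}\boldm^{(t)}$ as a sum of the projection of $\bq^{(t+1)}$ onto $\mathrm{span}\{\bq^{(0)},\dots,\bq^{(t)}\}$ fed through $\bE_t$, which after Onsager cancellation contributes only lower-order terms, plus a fresh centered Gaussian with variance $\sigma_{t+1}^2$ given by \eqref{recursigmat}. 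The key scaling input is that $\tfrac{1}{m}\bbE\|\bq^{(t+1)}\|^2=\tfrac{n}{m}\cdot\tfrac{1}{n}\bbE\|\bq^{(t+1)}\|^2$, so the ratio $n/m=n^{1-\alpha}/\delta$ is exactly what produces the $n^{1-\alpha}/\delta$ prefactor in the $\sigma_t^2$ recursion; the symmetric computation on $\bA^*\boldm^{(t)}$ produces $\tau_t^2$ in \eqref{rectaut}. Propagating a pseudo-Lipschitz test function $\phi_b$ (resp.\ $\phi_h$) through this decomposition and invoking Lemma \ref{lem:lipschitz} to bound the increments then closes the induction.

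The analytic work is to upgrade these convergence statements from ``in probability'' to almost sure, which is where Lemma \ref{GSLLN} enters. For each empirical average I would verify the moment bound \eqref{cond1} with $b_l,d_l,\nu_l$ dictated by the sub-linear scaling --- the number of summands being either $n$ for the $\bh$-statements (normalized by $n^{\alpha}$) or $m=\delta n^{\alpha}$ for the $\bb$-statements --- and then check that the summability condition \eqref{assum1} holds, so that \eqref{G2} delivers the stated almost-sure limits. The pseudo-Lipschitz growth, together with finiteness of all moments of the Gaussian iterates and of $U_0,\tilW$, supplies the polynomial moment control required by \eqref{cond1}.

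The hard part will be controlling the error terms in the conditional decomposition uniformly under the \emph{slower} growth $m=\Theta(n^{\alpha})$ with $\alpha<1$. In the classical linear regime the projection residuals onto the $(t{+}1)$-dimensional constraint subspace are negligible because they are $O(t/m)=O(t/n)$ relative to the fresh Gaussian part; when $m=\delta n^{\alpha}$ they are only $O(t/n^{\alpha})$, so the rates at which the Gram matrices concentrate and at which the projected components vanish must be re-derived with the correct exponent of $n$, and the H\'ajek--R\'enyi type maximal inequality is needed to control the worst fluctuation over the growing index range. Provided $t$ is held fixed while $n\to\infty$ these residuals still vanish, but the bookkeeping of the $n^{\alpha}$ versus $n^{1-\alpha}$ factors at every step --- in particular ensuring that the mixed normalization in $\sigma_0^2=\tfrac{n^{1-\alpha}}{\delta}\tfrac{\bbE\|\bq^{(0)}\|^2}{n}$ stays finite and positive --- is the delicate point that distinguishes this argument from the standard one.
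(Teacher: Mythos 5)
Your overall architecture (Bolthausen/Bayati--Montanari conditioning, Onsager cancellation, induction on $t$, and Lemma \ref{GSLLN} to upgrade to almost-sure convergence) matches the paper's, and your identification of the $n/m=n^{1-\alpha}/\delta$ ratio as the source of the prefactors in the state evolution is correct. However, there is a genuine gap at the step you treat as routine. For \eqref{bat1} you must show that $\frac{1}{n^{\alpha}}\big(T_n-\bbE[T_n]\big)\to 0$, where $T_n=\sum_{i=1}^n\phi_h(h_i^{(1)},\dots,h_i^{(t+1)},s_{0,i})$ is a sum of $n$ terms while the normalization is only $n^{\alpha}$. You claim that ``pseudo-Lipschitz growth, together with finiteness of all moments \dots supplies the polynomial moment control required by \eqref{cond1}.'' That is not enough: generic moment bounds give $\var(T_n)=O(n)$, hence fluctuations of order $\sqrt{n}$, and $\sqrt{n}=o(n^{\alpha})$ fails for $\alpha\le 1/2$; even for $\alpha>1/2$ you leave the choice of $b_l,d_l,\nu_l$ in Lemma \ref{GSLLN} unspecified, and the naive choice does not satisfy \eqref{assum1} at the $n^{\alpha}$ scale. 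The statement is only saved by the \emph{sparsity of the prior}: conditioning on the value of $s_{0,i}$ (which takes finitely many values $a_0=0,a_1,\dots,a_B$ and is nonzero with probability $k/n=O(n^{\alpha-1})$), and using that $\phi_h$ maps into $\bbR_+$ so that $(\sum_b x_b)^2\ge\sum_b x_b^2$ (step \eqref{awa1}), the paper shows that each summand has conditional variance carrying a factor $n^{(\alpha-1)/q}$ after a H\"older step, whence $\bbE[\var(T_n\mid\calF_{t+2,t+1})]=O(n^{1+(\alpha-1)/q})=o(n^{2\alpha})$ for a suitable $q=1+\rho$ depending on $\alpha$ (steps \eqref{bug1}--\eqref{tfact4}). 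This variance reduction is what makes Lemma \ref{GSLLN} applicable with $b_l=l$, $d_l=l^{\alpha-1}$, and it is also precisely why the lemma restricts the test functions to nonnegative co-domain --- a restriction your proposal never uses or explains.

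A secondary mismatch: you locate the ``hard part'' in re-deriving the rates of the conditional Gaussian decomposition when the constraint subspace of dimension $t+1$ is compared against $m=\delta n^{\alpha}$ rows. The paper does not re-derive these; it imports the decomposition and the $\vec{o}_t(1)$ error terms from Bayati--Montanari essentially unchanged (for fixed $t$ they still vanish as $n\to\infty$), and instead spends its effort on (i) showing that $\bs_0$ remains conditionally i.i.d.\ given the filtration $\calF_{t+2,t+1}$ so that Kolmogorov's maximal inequality applies to $T_n$, and (ii) the sparsity-driven conditional-variance computation above. Without point (ii) your argument does not close.
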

\begin{remark} Some remarks are in order.
	\begin{itemize}	
		\item The proof is based on \citep[Theorem 1]{BayatiMonta2011}. As in the converse proof, we need to make use of the sparsity of the signal to achieve \eqref{bat1} and \eqref{bat2}. Two equations \eqref{bat1} and \eqref{rectaut} are the main differences between the proof of Lemma \ref{thm:chang} and the proof of  \citep[Theorem 1]{BayatiMonta2011}. To show this fact, we need to use Lemma \ref{GSLLN} instead of \citep[Theorem 3]{BayatiMonta2011}.
		\item The states $\{\tau_t\}$ are defined in non-asymptotic sense. This means that we allow them to depend on $n$. In \citep{BayatiMonta2011}, all states are defined in the asymptotic sense.
		\item Compared with \citep[Theorem 3]{BayatiMonta2011}, we constraint the set of pseudo-Lipschitz functions with co-domain $\bbR_+$ instead of $\bbR$. This is likely caused by our proof technique.
	\end{itemize}
\end{remark}
\begin{proof}
	The proof of Lemma \ref{thm:chang} is based on \citep[Proof of Theorem 3]{BayatiMonta2011} with some important changes to account for the new settings. In the following, we outline the proof and present all these changes.
	\begin{itemize}
		\item \emph{Step 1:} Let $\calF_{t_1,t_2}^{(n)}:=\sigma(\bb_0^{t_1-1},\boldm_0^{t_1-1},\bh_1^{t_2}, \bq_0^{t_2},\tilbw)$, which is the $\sigma$-algebra generated by all random variables in the bracket. Note that this $\sigma$-algebra is slightly different from the $\sigma$-algebra in \citep[Proof of Theorem 2]{BayatiMonta2011} since it does not covers $\bs_0$. 
		
		Let $\boldm_{\|}^{(t)}$ and $\bq_{\|}^{(t)}$ be orthogonal projections of $\boldm^{(t)}$ and $\bq^{(t)}$ onto $\rm{span}(\boldm^{(0)},\cdots, \boldm^{(t-1)})$ and $\rm{span}(\bq^{(0)},\cdots, \bq^{(t-1)})$, respectively. Then, we can express
		\begin{align}
		\boldm_{\|}^{(t)}=\sum_{i=1}^{t-1}\alpha_i \boldm^{(i)}, \qquad \bq_{\|}^{(t)}=\sum_{i=1}^{t-1}\beta_i \bq^{(i)}
		\end{align} for some tuples $(\alpha_1,\alpha_2,\cdots,\alpha_{t-1})$ and $(\beta_1,\beta_2,\cdots,\beta_{t-1})$, respectively. Define $\boldm_{\perp}^{(t)}=\boldm^{(t)}-\boldm_{\|}^{(t)}$ and $\bq_{\perp}^{(t)}=\bq^{(t)}-\bq_{\|}^{(t)}$.  Then, it can be shown that \citep{BayatiMonta2011}:
		\begin{align}
		\bh^{(t+1)}\big|\calF_{t+1,t}^{(n)}\cup \sigma(\bs_0) &\stackrel{(d)}{=}\sum_{i=0}^{t-1}\alpha_i \bh^{(i+1)}+\tilbA^* \boldm_{\perp}^{(t)}+ \tilbQ_{t+1}\vec{o}_{t+1}(1) \label{fact1}\\
		\bb^{(t)}\big|\calF_{t,t}^{(n)}\cup \sigma(\bs_0)&\stackrel{(d)}{=}\sum_{i=0}^{t-1}\beta_i\bb^{(i)}+\tilbA \bq_{\perp}^{(t)}+ \tilbM_t \vec{o}_t(1) \label{fact2},
		\end{align} where $\tilbA$ is an independent copy of $\bA$, and the matrices $\tilbQ_t$ and $\tilbM_t$ are such their columns form orthogonal bases for $\mbox{span}(\boldm^{(0)},\cdots, \boldm^{(t-1)})$ and $\mbox{span}(\bq^{(0)},\cdots, \bq^{(t-1)})$, respectively. 
		
		In addition, let 
		\begin{align}
		\bM_t&:=\big[\boldm^{(0)}|\boldm^{(1)}|\cdots| \boldm^{(t-1)}\big],\\
		\bQ_t&:=\big[\bq^{(0)}|\bq^{(1)}|\cdots| \bq^{(t-1)}\big],\\
		\bX_t&:=\bA^*\bM_t\\
		\bY_t&:=\bA \bQ_t.
		\end{align} Here, we denote by $[\ba_1|\ba_2|\cdots\ba_k]$ the matrix with columns $\ba_1,\ba_2,\cdots,\ba_k$. Then, from Lemma \citep[Lemma 10]{BayatiMonta2011}, it can show that
		\begin{align}
		\bh^{(t+1)}\big| \calF_{t+1,t}^{(n)}\cup \sigma(\bs_0)&\stackrel{(d)}{=} \bH_t(\bM_t^*\bM_t)^{-1}\bM_t^*\boldm_{\|}^{(t)}+ \bP_{\bQ_{t+1}}^{\perp} \tilbA^* \bP_{\bM_t}^{\perp}\boldm^{(t)}+\bQ_t \vec{o}_t(1) \label{lady1},\\
		\bb^{(t)}\big| \calF_{t,t}^{(n)}\cup \sigma(\bs_0)&\stackrel{(d)}{=} \bB_t(\bQ_t^*\bQ_t)^{-1}\bQ_t^*\bq_{\|}^{(t)}+ \bP_{\bM_t}^{\perp} \tilbA \bP_{\bQ_t}^{\perp}\bq^{(t)}+\bM_t \vec{o}_t(1) \label{lady2},
		\end{align} where
		\begin{align}
		\bB_t&:=\big[\bb^{(0)}|\bb^{(1)}|\cdots| \bb^{(t-1)}\big],\\
		\bH_t&:=\big[\bh^{(0)}|\bh^{(1)}|\cdots| \bh^{(t-1)}\big],\\
		\bP_{\bQ_t}^{\perp}&:=\bI-\bP_{\bQ_t},\\
		\bP_{\bM_t}^{\perp}&:=\bI-\bP_{\bM_t},
		\end{align} and $\bP_{\bM_t}$ and $\bP_{\bQ_t}$ are orthogonal projectors onto column spaces of $\bQ_t$ and $\bM_t$, respectively. 
		\item \emph{Step 2:} By using Lemma \ref{GSLLN}, for all pseudo-Lipschitz $\phi_h: \bbR^{t+2}\to \bbR_+$ of order $k$, it holds almost surely that
		\begin{align}
		\lim_{n \to \infty} \frac{1}{n^{\alpha}} \bigg(\sum_{i=1}^n \phi_h(h_i^{(1)},h_i^{(2)},\cdots, h_i^{(t+1)},s_{0,i})-\bbE\bigg[\phi_h(h_i^{(1)},h_i^{(2)},\cdots, h_i^{(t+1)},s_{0,i})\bigg] \bigg)=0
		\label{cur1}.
		\end{align}
		To show \eqref{cur1}, we set
		\begin{align}
		T_n:=\sum_{i=1}^n \phi_h(h_i^{(1)},h_i^{(2)},\cdots, h_i^{(t+1)},s_{0,i}) \label{defSn}.
		\end{align}
		As \citep{BayatiMonta2011}, given $\calF_{t+2,t+1}$, the effects of all terms containing $\vec{o}_{t+1}$ in the distribution of $\bh^{(t+1)}$ and $\bb^{(t+1)}$ in \eqref{fact1}, \eqref{fact2} can be neglected. This fact can be easily explained as follows. Since the limits of $\bh^{(t+1)}$ and $\bb^{(t+1)}$ are Gaussian vectors which have bounded moments, by applying Lemma \ref{lem:lipschitz} and the dominated convergence theorem \citep{Billingsley}, the orders of limits and expectations are interchangeable. Hence, we only need to work with $\phi_h(\lim_{n\to \infty} \bh^{(t+1)})$ and $\phi_b(\lim_{n\to \infty} \bb^{(t+1)})$ instead of $\phi_h(\bh^{(t+1)})$ or $\phi_b(\bh^{(t+1)})$ inside all the expectations of these random variables. 
		
		In all the proofs in this paper, as \cite{BayatiMonta2011}, we also define $\bbE[f(\calF,X)|\calF]$ as the expectation of the random function $f(\calF,X)$ given that all the random in the sigma-algebra $\calF$ are fixed. This also means that $\bbE[f(\calF,X)|\calF]$ is a constant, not a random variable as in the standard definition of the conditional expectation in probability \citep{Billingsley}. 
		
		Now, let
		\begin{align}
		\calF_{t+2,t+1}&:=\calF_{t+2,t+1}^{\infty}\\
		&=\bigcup_{n=1}^{\infty} \calF_{t+2,t+1}^{(n)}.
		\end{align}
		
		Then, for any $\nu>0$ and $\gamma>0$, we have
		\begingroup
		\allowdisplaybreaks
		\begin{align}
		&\bbE\bigg[\max_{1\leq l\leq n} |T_l-\bbE[T_l|\calF_{t+2,t+1} ]|^{2-\nu} \bigg|\calF_{t+2,t+1}\bigg]\nn\\
		&\qquad =\int_0^{\infty}\bbP\big[\max_{1\leq l\leq n}|T_l-\bbE[T_l|\calF_{t+2,t+1}]|^{2-\nu} >t\bigg|\calF_{t+2,t+1}\big]dt\\
		&\qquad=\int_0^{n^{\gamma}} \bbP\big[\max_{1\leq l\leq n}|T_l-\bbE[T_l|\calF_{t+2,t+1}]|^{2-\nu} >t\bigg|\calF_{t+2,t+1}\big]dt\nn\\
		&\qquad \qquad + \int_{n^{\gamma}}^{\infty} \bbP\big[\max_{1\leq l\leq n}|T_l-\bbE[T_l|\calF_{t+2,t+1}]|^{2-\nu} >t\big|\calF_{t+2,t+1}\big]dt\\
		&\qquad \leq n^{\gamma} + \int_{n^{\gamma}}^{\infty} \bbP\big[\max_{1\leq l\leq n}|T_l-\bbE[T_l|\calF_{t+2,t+1}]|^2>t^{\frac{2}{2-\nu}}\bigg|\calF_{t+2,t+1}\big]dt\\
		&\qquad = n^{\gamma}+ \int_{n^{\gamma}}^{\infty} \bbP\big[\max_{1\leq l\leq n}|T_l-\bbE[T_l|\calF_{t+2,t+1}]|>t^{\frac{1}{2-\nu}}\bigg|\calF_{t+2,t+1} \big]dt\\
		&\qquad \leq n^{\gamma}+ \int_{n^{\gamma}}^{\infty} \frac{\var(T_n|\calF_{t+2,t+1})}{t^{\frac{2}{2-\nu}}} dt  \label{komo}\\
		&\qquad= n^{\gamma}+  \bbE(T_n|\calF_{t+2,t+1})\int_{n^{\gamma}}^{\infty}\frac{1}{t^{\frac{2}{2-\nu}}} dt\\
		&\qquad= n^{\gamma}+ o(\var(T_n|\calF_{t+2,t+1})) \label{cute1},
		\end{align}
		\endgroup where \eqref{komo} follows from Kolmogorov's maximal inequality \citep{Billingsley} since given $\calF_{t+2,t+1}$, $T_n-\bbE[T_n|\calF_{t+2,t+1}]$ is a sum of independent random variables with zero means for each $n\geq 1$ by the i.i.d. generation of the sequence $\bs_0=(s_{0,1},s_{0,2},\cdots)$.
		
		It follows that
		\begin{align}
		\var(T_n|\calF_{t+2,t+1})&=\bbE\bigg[ \big(T_n-\bbE[T_n|\calF_{t+2,t+1} ]\big)^2\bigg|\calF_{t+2,t+1} \bigg]\\
		&= \sum_{i=1}^n \var\bigg[\phi_h(h_i^{(1)},h_i^{(2)},\cdots, h_i^{(t+1)},s_{0,i}) \bigg|\calF_{t+2,t+1}\bigg] \label{taf1} \\
		&=\sum_{i=1}^n \bbE\bigg[\phi_h^2(h_i^{(1)},h_i^{(2)},\cdots, h_i^{(t+1)},s_{0,i})  \bigg|\calF_{t+2,t+1}\bigg]\nn\\
		& \qquad -\bigg(\bbE\big[\phi_h(h_i^{(1)},h_i^{(2)},\cdots, h_i^{(t+1)},s_{0,i})  \bigg|\calF_{t+2,t+1}\big]\bigg)^2 \label{paty1},
		\end{align} where \eqref{taf1} follows from the fact that given $\calF_{t+2,t+1}$, $T_n-\bbE[T_n|\calF_{t+2,t+1}]$ is a sum of independent random variables with zero means for each $n\geq 1$ by the i.i.d. generation of the sequence $\bs_0=(s_{0,1},s_{0,2},\cdots)$.

		Now, we have
		\begin{align}
		&\bbE\bigg[\phi_h^2(h_i^{(1)},h_i^{(2)},\cdots, h_i^{(t+1)},s_{0,i})  \bigg|\calF_{t+2,t+1}\bigg]\nn\\
		& =\bbE\bigg[\phi_h^2(h_i^{(1)},h_i^{(2)},\cdots, h_i^{(t+1)},0)  \bigg|\calF_{t+2,t+1}, s_{0,i}=0\bigg]\bbP\bigg[s_{0,i}=0\bigg|\calF_{t+2,t+1}\bigg]\nn\\
		&\qquad + \sum_{b=1}^B \bbE\bigg[\phi_h^2(h_i^{(1)},h_i^{(2)},\cdots, h_i^{(t+1)},a_b)  \bigg|\calF_{t+2,t+1}, s_{0,i}=a_b\bigg]\bbP\bigg[s_{0,i}=a_b\bigg|\calF_{t+2,t+1}\bigg] \label{Y2}.
		\end{align} 
		
		On the other hand, we also have
		\begingroup
		\allowdisplaybreaks
		\begin{align}
		&\bigg(\bbE_{s_{0,i}}\big[\phi_h(h_i^{(1)},h_i^{(2)},\cdots, h_i^{(t+1)},s_{0,i})  \bigg|\calF_{t+2,t+1}\big]\bigg)^2\nn\\
		& =\bigg(\bbE\bigg[\phi_h(h_i^{(1)},h_i^{(2)},\cdots, h_i^{(t+1)},0)  \bigg|\calF_{t+2,t+1}, s_{0,i}=0\bigg]\bbP\bigg[s_{0,i}=0\bigg|\calF_{t+2,t+1}\bigg]\nn\\
		&\qquad +\sum_{b=1}^B \bbE\bigg[\phi_h(h_i^{(1)},h_i^{(2)},\cdots, h_i^{(t+1)},a_b)  \bigg|\calF_{t+2,t+1}, s_{0,i}= a_b\bigg]\bbP\bigg[s_{0,i}=a_b \bigg|\calF_{t+2,t+1}\bigg]\bigg)^2\\
		&\geq \bigg(\bbE\bigg[\phi_h(h_i^{(1)},h_i^{(2)},\cdots, h_i^{(t+1)},0)  \bigg|\calF_{t+2,t+1}, s_{0,i}=0\bigg]\bbP\bigg[s_{0,i}=0\bigg|\calF_{t+2,t+1}\bigg]\bigg)^2\nn\\
		&\qquad +\sum_{b=1}^B \bigg(\bbE\bigg[\phi_h(h_i^{(1)},h_i^{(2)},\cdots, h_i^{(t+1)},a_b)  \bigg|\calF_{t+2,t+1}, s_{0,i}=a_b \bigg]\bbP\bigg[s_{0,i}= a_b\bigg|\calF_{t+2,t+1}\bigg]\bigg)^2\label{awa1}\\
		& = \bigg(\bbE\bigg[\phi_h^2(h_i^{(1)},h_i^{(2)},\cdots, h_i^{(t+1)},0)  \bigg|\calF_{t+2,t+1}, s_{0,i}=0\bigg]\bigg)\bigg(\bbP\bigg[s_{0,i}=0\bigg|\calF_{t+2,t+1}\bigg]\bigg)^2\nn\\
		&+\sum_{b=1}^B \bigg(\bbE\bigg[\phi_h^2(h_i^{(1)},h_i^{(2)},\cdots, h_i^{(t+1)},s_{i,0})  \bigg|\calF_{t+2,t+1}, s_{0,i}=a_b \bigg]\bigg)\bigg(\bbP\bigg[s_{0,i}=a_b \bigg|\calF_{t+2,t+1}\bigg]\bigg)^2 \label{Y3}
		\end{align}
		\endgroup
		where \eqref{awa1} follows from $\phi_h: \bbR^{t+2}\to \bbR_+$ and $(x_1+x_2+\cdots+x_B)^2 \geq \sum_{i=1}^B x_i^2$ if $x_i\geq 0$ for all $i \in [B]$, and \eqref{Y3} follows from the fact that given $\calF_{t+2,t+1}$ and $s_{i,0}$, $\phi_h(h_i^{(1)},h_i^{(2)},\cdots, h_i^{(t+1)},s_{i,0})$ are constants.
		
		From \eqref{paty1}, \eqref{Y2}, and \eqref{Y3}, we obtain
		\begingroup
		\allowdisplaybreaks
		\begin{align}
		&\bbE\bigg[ \big(T_n-\bbE[T_n\big|\calF_{t+2,t+1}]\big)^2\bigg|\calF_{t+2,t+1} \bigg]\nn\\
		& =\sum_{i=1}^n \bigg(\bbE\bigg[\phi_h^2(h_i^{(1)},h_i^{(2)},\cdots, h_i^{(t+1)},0)  \bigg|\calF_{t+2,t+1}, s_{0,i}=0\bigg]\bbP\bigg[s_{0,i}= 0\bigg|\calF_{t+2,t+1}\bigg]\bigg)\nn\\
		& \qquad \qquad \times \bbP\bigg[s_{0,i}\neq 0\bigg|\calF_{t+2,t+1}\bigg]\nn\\
		& +\sum_{i=1}^n \sum_{b=1}^B \bigg(\bbE\bigg[\phi_h^2(h_i^{(1)},h_i^{(2)},\cdots, h_i^{(t+1)},a_b)  \bigg|\calF_{t+2,t+1}, s_{0,i}=a_b\bigg] \bbP\bigg[s_{0,i}\neq a_b \bigg|\calF_{t+2,t+1}\bigg]\bigg)\nn\\
		& \qquad \qquad \times \bbP\bigg[s_{0,i}= a_b \bigg|\calF_{t+2,t+1}\bigg] \label{bug1}\\
		& =\sum_{i=1}^n \bigg(\bbE\bigg[\phi_h^2(h_i^{(1)},h_i^{(2)},\cdots, h_i^{(t+1)},0)  \bigg|\calF_{t+2,t+1}, s_{0,i}=0\bigg]\bbP\bigg[s_{0,i}= 0\bigg|\calF_{t+2,t+1}\bigg]\bigg)\nn\\
		& \qquad \qquad \times \bbP\bigg[s_{0,i}\neq 0\bigg|\calF_{t+2,t+1}\bigg]\nn\\
		&  +\sum_{i=1}^n \sum_{b=1}^B \bigg(\bbE\bigg[\phi_h^2(h_i^{(1)},h_i^{(2)},\cdots, h_i^{(t+1)},a_b)  \bigg|\calF_{t+2,t+1}\bigg] \bbP\bigg[s_{0,i}\neq a_b \bigg|\calF_{t+2,t+1},s_{0,i}\neq a_b\bigg]\bigg)\nn\\
		& \qquad \qquad \times \bbP\bigg[s_{0,i}= a_b \bigg|\calF_{t+2,t+1}\bigg] \label{bug2}\\
		& \leq \sum_{i=1}^n \bbE\bigg[\phi_h^2(h_i^{(1)},h_i^{(2)},\cdots, h_i^{(t+1)},0)  \bigg| s_{0,i}=0\bigg]\bbP\bigg[s_{0,i}\neq 0\bigg|\calF_{t+2,t+1}\bigg]\nn\\
		&+ \sum_{i=1}^n \sum_{b=1}^B \bbE\bigg[\phi_h^2(h_i^{(1)},h_i^{(2)},\cdots, h_i^{(t+1)},s_{0,i})  \bigg|s_{0,i}\neq a_b \bigg]\bbP\bigg[s_{0,i}=a_b\bigg|\calF_{t+2,t+1}\bigg] \label{fix1}\\
		&\leq \sum_{i=1}^n\bbE\bigg[\phi_h^2(h_i^{(1)},h_i^{(2)},\cdots, h_i^{(t+1)},0)  \bigg|\calF_{t+2,t+1}\bigg]\bbP\bigg[s_{0,i}\neq 0\bigg|\calF_{t+2,t+1}\bigg]  \nn\\
		&  + \sum_{i=1}^n\sum_{b=1}^B \bbE\bigg[\phi_h^2(h_i^{(1)},h_i^{(2)},\cdots, h_i^{(t+1)},a_b)  \bigg| \calF_{t+2,t+1}\bigg]\bbP\bigg[s_{0,i}=a_b\bigg|\calF_{t+2,t+1}\bigg]\label{fix2},
		\end{align}
		\endgroup 
		where \eqref{bug2} follows from the fact that given $\calF_{t+2,t+1}$, $\phi_h^2(h_i^{(1)},h_i^{(2)},\cdots, h_i^{(t+1)},a_b) $ does not depends $s_{0,i}=a_b$ or $s_{0,i}\neq a_b$, and \eqref{fix1} follows from $\bbE[Y|B]\bbP(B)\leq \bbE[Y|B]\bbP(B)+ \bbE[Y|B^c]\bbP(B^c)=\bbE[Y]$ if $Y\geq 0$ a.s.
		
		From \eqref{fix1} and \eqref{fix2}, we obtain
		\begingroup
		\allowdisplaybreaks
		\begin{align}
		&\bbE\bigg[\bbE\bigg[ \big(T_n-\bbE[T_n|\calF_{t+2,t+1}]\big)^2\big|\calF_{t+2,t+1} \bigg]\bigg]\nn\\
		&\qquad \leq  \sum_{i=1}^n\bbE\bigg[\bbE\bigg[\phi_h^2(h_i^{(1)},h_i^{(2)},\cdots, h_i^{(t+1)},0)  \bigg|\calF_{t+2,t+1}\bigg]\bbP\bigg[s_{0,i}\neq 0\bigg| \calF_{t+2,t+1}\bigg] \bigg] \nn\\
		&\qquad \qquad + \sum_{i=1}^n \sum_{b=1}^B  \bbE\bigg[\bbE\bigg[\phi_h^2(h_i^{(1)},h_i^{(2)},\cdots, h_i^{(t+1)},a_b)  \bigg|\calF_{t+2,t+1}\bigg]\bbP\bigg[s_{0,i}\neq 0 \bigg| \calF_{t+2,t+1}\bigg]\bigg]
		\label{fix3}.
		\end{align}
		\endgroup
		
		Now, denote by $a_0:=0$. Then, by \citep{BayatiMonta2011}, given $b \in [B]\cup\{0\}$ and $p>1$, it holds that
		\begin{align}
		\frac{1}{n}\sum_{i=1}^n \bbE\bigg[\phi_h^{2p} (h_i^{(1)},h_i^{(2)},\cdots, h_i^{(t+1)},a_b) \bigg]<E_{p,b,t},
		\end{align} for some $E_{p,b,t}<\infty$. Since $B<\infty$, it holds that
		\begin{align}
		\frac{1}{n}\sum_{i=1}^n \bbE\bigg[\phi_h^{2p} (h_i^{(1)},h_i^{(2)},\cdots, h_i^{(t+1)},a_b) \bigg]<E_{p,t} \label{cusum1}, 
		\end{align} where $E_{p,t}:=\max_{b \in [B]} E_{p,b,t}$.
		
		Hence, for any $p>1$ and $q>1$ such that $\frac{1}{p}+\frac{1}{q}=1$, we have
		\begingroup
		\allowdisplaybreaks
		\begin{align}
		&\bbE\bigg[\bbE\bigg[\phi_h^2(h_i^{(1)},h_i^{(2)},\cdots, h_i^{(t+1)},a_b)  \bigg|\calF_{t+2,t+1}\bigg]\bbP\bigg[s_{0,i}\neq 0 \bigg| \calF_{t+2,t+1}\bigg]\bigg]\nn\\
		&\qquad \leq \bigg(\bbE\bigg[\bigg(\bbE\bigg[\phi_h^2(h_i^{(1)},h_i^{(2)},\cdots, h_i^{(t+1)},a_b)  \bigg|\calF_{t+2,t+1}\bigg]\bigg)^p\bigg]\bigg)^{\frac{1}{p}}\nn\\
		&\qquad \qquad \times \bigg(\bbE\bigg[\bigg(\bbP\bigg[s_{0,i}\neq 0 \bigg| \calF_{t+2,t+1}\bigg]\bigg)^q\bigg]\bigg)^{\frac{1}{q}} \label{holder}\\
		&\qquad = \bigg(\bbE\bigg[\bbE\bigg[\phi_h^{2p}(h_i^{(1)},h_i^{(2)},\cdots, h_i^{(t+1)},a_b)  \bigg|\calF_{t+2,t+1}\bigg]\bigg]\bigg)^{\frac{1}{p}}\nn\\
		&\qquad \qquad \times \bigg(\bbE\bigg[\bigg(\bbP\bigg[s_{0,i}\neq 0 \bigg| \calF_{t+2,t+1}\bigg]\bigg)^q\bigg]\bigg)^{\frac{1}{q}} \label{tact}\\
		&\qquad \leq  \bigg(\bbE\bigg[\phi_h^{2p}(h_i^{(1)},h_i^{(2)},\cdots, h_i^{(t+1)},a_b)\bigg]\bigg)^{\frac{1}{p}} \bigg(\bbE\bigg[\bbP\bigg[s_{0,i}\neq 0 \bigg| \calF_{t+2,t+1}\bigg]\bigg]\bigg)^{\frac{1}{q}} \label{tact1}\\
		&\qquad= \bigg(\bbE\bigg[\phi_h^{2p}(h_i^{(1)},h_i^{(2)},\cdots, h_i^{(t+1)},a_b)\bigg]\bigg)^{\frac{1}{p}} \bigg(\bbP\big[s_{0,i}\neq 0\big]\bigg)^{\frac{1}{q}} \\
		&\qquad= \bigg(\frac{k}{n}\bigg)^{\frac{1}{q}} \bigg(\bbE\bigg[\phi_h^{2p}(h_i^{(1)},h_i^{(2)},\cdots, h_i^{(t+1)},a_b)\bigg]\bigg)^{\frac{1}{p}}\\
		&\qquad\leq L n^{(\alpha-1)/q} \bigg(\bbE\bigg[\phi_h^{2p}(h_i^{(1)},h_i^{(2)},\cdots, h_i^{(t+1)},a_b)\bigg]\bigg)^{\frac{1}{p}} \label{small0} \\
		&\qquad \leq L n^{(\alpha-1)/q} E_{p,t}^{1/p} \label{tfact4}
		\end{align}
		\endgroup
		for some $0<L<\infty$, where \eqref{holder} follows from H\"{o}lder's inequality \citep{Royden}, \eqref{tact} follows from given $\calF_{t+2,t+1}$, $\phi_h^2(h_i^{(1)},h_i^{(2)},\cdots, h_i^{(t+1)},a_b)$ is a constant, \eqref{tact1} follows from $q>1$, \eqref{small0} follows from $k=O(n^{\alpha})$, and \eqref{tfact4} follows from \eqref{cusum1}.
		
		From \eqref{cute1} and \eqref{tfact4}, for any $\nu>0$ and $\gamma>0$, we have
		\begin{align}
		&\bbE\bigg[\bbE\bigg[\max_{1\leq l\leq n} |T_l-\bbE[T_l|\calF_{t+2,t+1} ]|^{2-\nu} \bigg|\calF_{t+2,t+1}\bigg]\bigg]\nn\\
		&\qquad \leq  n^{\gamma}+ (B+1)L n n^{(\alpha-1)/q} E_{p,t}^{1/p}
		\label{astar}. 
		\end{align} 
		
		Then, by setting $b_l=l, d_l=l^{\alpha-1},\nu=1/2, r=2-\nu,\rho=\frac{1}{2}\bone\{\alpha\geq 1/2\}+ \frac{\alpha}{2(2-3\alpha)}\bone \{\alpha<\frac{1}{2}\}, q=1+\rho$ and $\nu_l=1+L(B+1) E_{p,t}^{1/p}$ for all $l \in \bbZ^+$, we have
		\begin{align}
		\gamma&:=1+ \frac{\alpha-1}{q}\\
		&\geq 1+ \alpha-1 \\
		&=\alpha>0.
		\end{align}
		In addition, by these settings, from Lemma \ref{GSLLN}, we also have
		\begin{align}
		\sum_{l=1}^{\infty} \frac{\nu_l}{b_l^r d_l^{r-1/(1+\rho)}}&=
		((B+1)L E_{p,t}^{1/p}+1)\sum_{l=1}^{\infty}\frac{1}{l^{2-\nu} l^{(\alpha-1)(2-\nu-1/(1+\rho))}}\\
		&= ((B+1)L E_{p,t}^{1/p}+1)\sum_{l=1}^{\infty}\frac{1}{l^{(2-\nu)\alpha+ (1-\alpha)/(1+\rho)}}\\
		&<\infty \label{cufact},
		\end{align} where \eqref{cufact} follows from $(2-\nu)\alpha+ (1-\alpha)/(1+\rho)>1$ with the set value of $\rho$ (note that $0< \alpha\leq 1$). In addition, $\{b_l=l\}$ is a non-decreasing sequence, $\{d_l=l^{\alpha-1}\}$ is a non-increasing sequence with $d_1=1$, and $b_l d_l=l^{\alpha} \to \infty$ as $l\to \infty$. Hence, by applying Lemma \ref{GSLLN}, given $\calF_{t+2,t+1}$, we have
		\begin{align}
		\frac{1}{n^{\alpha}} \big( T_n-\bbE[T_n\big|\calF_{t+2,t+1}]\big) \to 0, \qquad a.s. \label{gh}
		\end{align} Hence, we obtain \eqref{cur1} from \eqref{gh} and \eqref{defSn}.
		
		Similarly, by using Lemma \ref{GSLLN}, for all pseudo-Lipschitz $\phi_b: \bbR^{t+2}\to \bbR_+$ of order $k$, it holds almost surely that
		\begin{align}
		\lim_{m\to \infty} \frac{1}{m}\bigg(\sum_{i=1}^m \phi_b(b_i^{(1)},b_i^{(2)},\cdots,b_i^{(t+1)},\tilw_i)-\bbE\bigg[\phi_b(b_i^{(1)},b_i^{(2)},\cdots,b_i^{(t+1)},\tilw_i)\bigg]\bigg)=0 \label{batcha1}.
		\end{align}
		To show \eqref{batcha1}, we set
		\begin{align}
		\tilT_m:=\sum_{i=1}^m \phi_b(b_i^{(1)},b_i^{(2)},\cdots,b_i^{(t+1)},\tilw_i) \label{deftilSm}.
		\end{align}
		Then, by \citep{BayatiMonta2011}, it holds that
		\begin{align}
		\frac{1}{m}\sum_{i=1}^m \bbE\bigg[\phi_b^2(b_i^{(1)},b_i^{(2)},\cdots,b_i^{(t+1)},\tilw_{0,i})\bigg]<\tilE_{1,t}< \infty.
		\end{align}
		Then, by setting $b_l=l, d_l=1, r=2, \rho=0$, and $\nu_l=\tilE_{1,t}$ for all $l\in \bbZ^+$ in Lemma \ref{GSLLN}, it follows that
		\begin{align}
		\sum_{l=1}^{\infty}\frac{\nu_l}{b_l^r d_l^{r-1/(1+\rho)}}&=\tilE_{1,t}\sum_{l=1}^{\infty}\frac{1}{l^2}\\
		&<\infty.
		\end{align}
		Similar to the proof of \eqref{cur1}, by applying Lemma \ref{GSLLN}, given $\calF_{t+2,t+1}$, we have
		\begin{align}
		\frac{1}{m}\big(\tilT_m-\bbE[\tilT_m\big|\calF_{t+2,t+1}]\big)\to 0, \quad a.s. \label{tk2}
		\end{align}
		Hence, we obtain \eqref{batcha1} from \eqref{deftilSm} and \eqref{tk2}.
		\item \emph{Step 3:} From \citep[Lemma 2]{BayatiMonta2011}, it holds that $\big[\tilbA^* \boldm_{\perp}^{(t)}]_i \sim \calN(0, \frac{1}{m} \|\boldm_{\perp}^{(t)} \|^2)$. Hence, from \eqref{fact1}, we have
		\begin{align}
		&\bbE\bigg[\phi_h(h_i^{(1)},h_i^{(2)},\cdots, h_i^{(t+1)},s_{0,i})\bigg|\calF_{t+2,t+1}\cup \sigma(\bs_0)\bigg]\\
		&\qquad =\bbE\bigg[\phi_h\bigg(h_i^{(1)},h_i^{(2)},\cdots, \sum_{r=0}^{t-1}\alpha_r h_i^{(r+1)}+\frac{\|\boldm_{\perp}^{(t)} \|Z}{\sqrt{m}},s_{0,i}\bigg)\bigg|\calF_{t+2,t+1}\cup \sigma(\bs_0)\bigg]
		\end{align}  where $Z \sim \calN(0,1)$. It follows that
		\begin{align}
		& \bbE\bigg[\phi_h(h_i^{(1)},h_i^{(2)},\cdots, h_i^{(t+1)},s_{0,i})\bigg]\nn\\
		&\qquad =\bbE\bigg[\bbE\bigg[\phi_h(h_i^{(1)},h_i^{(2)},\cdots, h_i^{(t+1)},s_{0,i})\bigg|\calF_{t+2,t+1}\cup \sigma(\bs_0)\bigg]\bigg] \label{h1}\\
		&\qquad =\bbE\bigg[\bbE\bigg[\phi_h\bigg(h_i^{(1)},h_i^{(2)},\cdots, \sum_{r=0}^{t-1}\alpha_r h_i^{(r+1)}+\frac{\|\boldm_{\perp}^{(t)} \|Z}{\sqrt{m}},s_{0,i}\bigg)\bigg|\calF_{t+2,t+1}\cup \sigma(\bs_0)\bigg]\bigg]\\
		&\qquad=\bbE\bigg[\phi_h\bigg(h_i^{(1)},h_i^{(2)},\cdots, \sum_{r=0}^{t-1}\alpha_r h_i^{(r+1)}+\frac{\|\boldm_{\perp}^{(t)} \|Z}{\sqrt{m}},s_{0,i}\bigg)\bigg], \label{h2}
		\end{align} where \eqref{h1} and \eqref{h2} follow from the tower property of the conditional expectation \citep{Durrett}.
		
		Hence, from \eqref{cur1} and \eqref{h2}, we obtain
		\begin{align}
		&\lim_{n \to \infty} \frac{1}{n^{\alpha}} \bigg(\sum_{i=1}^n \phi_h(h_i^{(1)},h_i^{(2)},\cdots, h_i^{(t+1)},s_{0,i})\nn\\
		&\qquad \qquad -\bbE\bigg[\phi_h\bigg(h_i^{(1)},h_i^{(2)},\cdots, \sum_{r=0}^{t-1}\alpha_r h_i^{(r+1)}+\frac{\|\boldm_{\perp}^{(t)} \|Z}{\sqrt{m}},s_{0,i}\bigg)\bigg] \bigg)=0 \label{cur1b}.
		\end{align}
		Similarly, from \eqref{fact2} and the tower property of the conditional expectation, we can show that
		\begin{align}
		\bbE\bigg[\phi_b(b_i^{(1)},b_i^{(2)},\cdots, b_i^{(t+1)},\tilw_i)\bigg]=\bbE\bigg[\phi_b\bigg(b_i^{(1)},b_i^{(2)},\cdots, \sum_{r=0}^{t-1}\beta_r b_i^{(r)}+\frac{\|\bq_{\perp}^{(t)} \|Z}{\sqrt{m}},\tilw_i\bigg)\bigg] \label{h3}.
		\end{align}
		From \eqref{batcha1} and \eqref{h3}, we obtain
		\begin{align}
		&\lim_{m\to \infty} \frac{1}{m}\bigg(\sum_{i=1}^m \phi_b(b_i^{(1)},b_i^{(2)},\cdots,b_i^{(t+1)},\tilw_i)\nn\\
		&\qquad-\bbE\bigg[\phi_b\bigg(b_i^{(1)},b_i^{(2)},\cdots, \sum_{r=0}^{t-1}\beta_r b_i^{(r)}+\frac{\|\bq_{\perp}^{(t)} \|Z}{\sqrt{m}},\tilw_i\bigg)\bigg]\bigg)=0 \label{batcha2}.
		\end{align}
		By using induction, from \eqref{cur1b} and \eqref{batcha2}, we have
		\begin{align}
		&\lim_{n\to \infty} \frac{1}{n^{\alpha}}\bigg(\sum_{i=1}^n \phi_h(h_i^{(1)},h_i^{(2)},\cdots, h_i^{(t+1)},s_{0,i})\nn\\
		&\qquad -\bbE\bigg[\phi_h(\ttau_0 Z_0,\ttau_1 Z_1,\cdots, \ttau_t Z_t, U_0)\bigg]\bigg)=0 \label{test2}\\
		&\lim_{m\to \infty} \frac{1}{m}\bigg(\sum_{i=1}^m \phi_b(b_i^{(1)},b_i^{(2)},\cdots, b_i^{(t+1)},\tilw_i)\nn\\
		&\qquad -\bbE\bigg[\phi_b(\tsigma_0 \hatZ_0,\tsigma_1 \hatZ_1,\cdots, \tsigma_t \hatZ_t, \tilW)\bigg]\bigg)=0 \label{test3},
		\end{align} where
		\begin{align}
		\ttau_t^2&:=\bbE\bigg[\bigg(\sum_{r=0}^{t-1}\alpha_r \ttau_r Z_r +\frac{\|\boldm_{\perp}^{(t)} \|Z}{\sqrt{m}}\bigg)^2\bigg], \label{tag1}\\
		\tsigma_t^2&:=\bbE\bigg[\bigg(\sum_{r=0}^{t-1}\beta_r \tsigma_r \hatZ_r +\frac{\|\bq_{\perp}^{(t)} \|Z}{\sqrt{m}}\bigg)^2\bigg] \label{tag2},
		\end{align} and $(Z_1,Z_2,\cdots, Z_{t-1},Z_t)$ and $(\hatZ_1,\hatZ_2,\cdots,\hatZ_{t-1},\hatZ_t)$ are two Gaussian vectors independent of $U_0$ and $W$ with $Z_i, \hatZ_i \sim \calN(0,1)$.
		\item \emph{Step 4:} Finally, we show that 
		\begin{align}
		\ttau_t^2 - \tau_t^2 \to 0 \label{beca1},\\
		\tsigma_t^2- \sigma_t^2 \to 0\label{beca2}, 
		\end{align} where $\tau_t$ and $\sigma_t$ follow the state evolutions in \eqref{rectaut} and \eqref{recursigmat}, respectively. 
		
		Indeed, by setting
		\begin{align}
		\phi_h(\nu^{(1)},\nu^{(2)},\cdots, \nu^{(t+1)},u)&:=\big(\nu^{(t+1)}\big)^2,\\ \phi_b(\nu^{(1)},\nu^{(2)},\cdots, \nu^{(t+1)},u)&:=\big(\nu^{(t+1)}\big)^2
		\end{align}
		for all $(\nu^{(1)},\nu^{(2)},\cdots, \nu^{(t+1)},u) \in \bbR^{t+2}$ for all $(\nu^{(1)},\nu^{(2)},\cdots, \nu^{(t+1)},u) \in \bbR^{t+2}$, from \eqref{test2} and \eqref{test3}, we have
		\begin{align}
		\lim_{n\to \infty} \frac{1}{n^{\alpha}} \bigg( \big\|\bh^{(t+1)}\big\|^2-n \bbE\bigg[\bigg(\sum_{r=0}^{t-1}\alpha_r\ttau_r Z_r+\frac{\|\boldm_{\perp}^{(t)} \|Z}{\sqrt{m}}\bigg)^2\bigg]\bigg)&=0, \quad \forall t\geq 1 \label{cur3},\\
		\lim_{n\to \infty} \frac{1}{m} \bigg( \big\|\bb^{(t+1)}\big\|^2-m \bbE\bigg[\bigg(\sum_{r=0}^{t-1}\alpha_r\tsigma_r \hatZ_r+\frac{\|\bq_{\perp}^{(t)} \|Z}{\sqrt{m}}\bigg)^2\bigg]\bigg)&=0, \quad \forall t\geq 1 \label{cur4}.
		\end{align}
		It follows from \eqref{tag1}, \eqref{cur3}  and \eqref{tag2}, \eqref{cur4} that
		\begin{align}
		\lim_{n\to \infty} \frac{1}{n^{\alpha}} \bigg( \big\|\bh^{(t+1)}\big\|^2-n\ttau_t^2\bigg)=0 \label{qfact1},\\
		\lim_{m\to \infty} \frac{1}{m} \bigg( \big\|\bb^{(t+1)}\big\|^2-m\tsigma_t^2\bigg)=0 \label{qfact1b}.
		\end{align}
		On the other hand, from \eqref{lady1} and \eqref{lady2}, we can prove that (cf. \citep[Eq. (3.18) and (3.19)]{BayatiMonta2011})
		\begin{align}
		\lim_{n\to \infty}\bigg( \frac{\|\bh^{(t+1)}\|^2}{n}-\frac{\|\boldm^{(t)}\|^2}{m}\bigg)=0 \label{eq195},\\
		\lim_{n\to \infty}\bigg( \|\bb^{(t+1)}\|^2-\lim_{n\to \infty}\|\bq^{(t)}\|^2\bigg)=0 \label{eq195b},
		\end{align} where $\boldm^{(t)}=g_t(\bb^{(t)},\tilbw)$ and $\bq^{(t)}=f_t(\bh^{(t)},\bs_0)$ as \eqref{defbm} and \eqref{deffbq}, respectively. Hence, from \eqref{qfact1} -- \eqref{eq195b}, we obtain
		\begin{align}
		\lim_{n \to \infty}  \frac{\big\|g_t(\bb^{(t)},\tilbw)\big\|^2}{m}- \ttau_t^2=0 \label{qfact2},\\
		\lim_{m \to \infty} \frac{1}{m}  \big\|f_t(\bh^{(t)},\bs_0)\big\|^2-\tsigma_t^2=0 \label{qfact2b}.
		\end{align}
		Furthermore, by setting 
		\begin{align}
		\phi_b(\nu^{(1)},\nu^{(2)},\cdots, \nu^{(t+1)},u)&:=g_t\big(\nu^{(t+1)},u\big)^2,\\  \phi_h(\nu^{(1)},\nu^{(2)},\cdots, \nu^{(t+1)},u)&:=f_t\big(\nu^{(t+1)},u\big)^2 
		\end{align}
		for all $(\nu^{(1)},\nu^{(2)},\cdots, \nu^{(t+1)},u) \in \bbR^{t+2}$, from \eqref{test3} and \eqref{test2}, we have
		\begin{align}
		\lim_{m\to \infty} \frac{1}{m} \big\|g_t(\bb^{(t)},\tilbw)\big\|^2-\bbE[g_t(\tsigma_tZ,\tilW)^2]&=0 \label{qfact3},\\
		\lim_{n\to \infty} \frac{\big\|f_t(\bh^{(t)},\bs_0)\big\|^2}{n^{\alpha}} -n^{1-\alpha}\bbE[f_t(\ttau_tZ,U_0)^2]&=0 \label{qfact3b}.
		\end{align}
		From \eqref{qfact2}, \eqref{qfact3}, and \eqref{qfact2b}, \eqref{qfact3b}, and $m=\delta n^{\alpha}$, we finally obtain
		\begin{align}
		\lim_{n\to \infty}  \ttau_t^2-\bbE[g_t(\ttau_tZ,\tilW)^2]=0 \label{eq199},\\
		\lim_{n\to \infty} \tsigma_t^2 - \frac{n^{1-\alpha}}{\delta}\bbE[f_t(\ttau_tZ,U_0)^2]&=0 \label{eq200}.
		\end{align} 
		Finally, we achieve \eqref{beca1} and \eqref{beca2} by combining \eqref{eq199}, \eqref{eq200}  and \eqref{rectaut}, \eqref{recursigmat}.
	\end{itemize} 
\end{proof}
Finally, we prove the following theorem.
\noindent
\begin{theorem} \label{thm:par0} Let $\hatbx^{(t)}=(\hatx_1^{(t)},\hatx_2^{(t)},\cdots, \hatx_n^{(t)})$ be the estimate of $\bS$ at the step $t$ in Algorithm 1. Then, for any pseudo-Lipschitz function $\psi: \bbR^2 \to \bbR_+$ of order $k$ and all $t\geq 0$, the following holds almost surely
	\begin{align}
	\lim_{n\to \infty}\bigg( \bigg[\frac{1}{n^{\alpha}}\sum_{i=1}^n\psi(\hatx_i^{(t+1)},S_i)\bigg]-\delta (\tau_t^2-\Delta_n)\bigg)=0 \label{eqtwoh},
	\end{align}
	where $\tau_t$ satisfies the following state evolution:
	\begin{align}
	\tau_0^2&=\Delta_n + \frac{n^{1-\alpha}}{\delta}\bbE_{S \sim \tilP_0}[S^2] \label{eq12},\\
	\tau_{t+1}^2&=\Delta_n +\frac{n^{1-\alpha}}{\delta}\bbE\big[\psi\big(\eta_t(U_0+\tau_t Z),U_0\big)\big] \quad \forall t\in \bbZ_+ \label{eq13},
	\end{align} where $U_0$ and $Z$ are independent, $U_0 \sim \tilP_0$ and $Z \sim \calN(0,1)$.
\end{theorem}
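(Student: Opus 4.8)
The plan is to derive Theorem \ref{thm:par0} as a specialization of Lemma \ref{thm:chang}: I would recast Algorithm 1 into the abstract iteration \eqref{defbm}--\eqref{deffbq}, invoke the pseudo-Lipschitz state-evolution limit \eqref{bat1}, and then read off the claim. First I would perform the Bayati--Montanari change of variables. Writing $\by=\bA\bS+\bW\sqrt{\Delta_n}$ and setting the error $\bq^{(t)}:=\hatbx^{(t)}-\bS$, the update $\bz\gets\by-\bA\hatbx+\frac{1}{\delta}\bz\,d$ becomes $\bz^{(t)}=-\bA\bq^{(t-1)}+\bW\sqrt{\Delta_n}+\frac{1}{\delta}d^{(t-1)}\bz^{(t-1)}$, while $\bh\gets\bA^*\bz+\hatbx$ gives $\bh^{(t)}-\bS=\bA^*\bz^{(t)}+\bq^{(t-1)}$. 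Identifying $\boldm^{(t)}$ with a sign-adjusted residual, taking the noise $\tilbw:=\bW\sqrt{\Delta_n}$ (so $\tilW\sim\calN(0,\Delta_n)$), and choosing $f_t(h,s):=\eta_{t-1}(s+h,\tau_{t-1})-s$ together with an affine $g_t$, I would verify that the empirical Onsager coefficient $\frac{1}{\delta}d^{(t-1)}=\frac{1}{\delta}\,\mathrm{Mean}\big(\eta_{t-1}'(\bh^{(t-1)},\tau_{t-1})\big)$ coincides with $\lambda_t=\frac{1}{\delta}\langle f_t'(\bh^{(t)},\bs_0)\rangle$ of \eqref{deffbq}, and analogously match $\zeta_t$. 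Since each $\eta_t$ is Lipschitz by hypothesis and $g_t$ is affine, the induced $f_t,g_t$ are Lipschitz, as Lemma \ref{thm:chang} demands; note the one-step index shift between the lemma's $f_t$ and the algorithm's denoiser $\eta_{t-1}$.

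Next I would fix the initialization and specialize the state evolution. As $\hatbx^{(0)}=\mathbf{0}$ we get $\bq^{(0)}=-\bS$, hence $\sigma_0^2=\frac{n^{1-\alpha}}{\delta}\,\frac{\bbE[\|\bq^{(0)}\|^2]}{n}=\frac{n^{1-\alpha}}{\delta}\,\bbE_{S\sim\tilP_0}[S^2]$, which is positive and finite under the distributional assumption on $\bS$. Plugging $g_t(b,w)=b-w$ with $\tilW\sim\calN(0,\Delta_n)$ independent of $\sigma_tZ$ into \eqref{rectaut} gives $\tau_t^2=\bbE[(\sigma_tZ-\tilW)^2]=\sigma_t^2+\Delta_n$, and \eqref{recursigmat} with $f_t(h,s)=\eta_{t-1}(s+h,\tau_{t-1})-s$ gives $\sigma_t^2=\frac{n^{1-\alpha}}{\delta}\,\bbE[(\eta_{t-1}(U_0+\tau_{t-1}Z,\tau_{t-1})-U_0)^2]$. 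Combining these and reindexing reproduces exactly the recursions \eqref{eq12}--\eqref{eq13}, and identifies the scalar $\tau$ maintained inside Algorithm 1 with the state-evolution parameter.

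Finally I would apply \eqref{bat1} to the composite test function
\begin{align}
\phi_h\big(h^{(1)},\dots,h^{(t+1)},s\big):=\psi\big(\eta_t(s+h^{(t+1)},\tau_t),s\big),\nonumber
\end{align}
which is pseudo-Lipschitz of the same order $k$ (as $\psi$ is pseudo-Lipschitz and $\eta_t$ is Lipschitz) and maps into $\bbR_+$ (since $\psi$ does); both are precisely the hypotheses of Lemma \ref{thm:chang}. Using $\hatx_i^{(t+1)}=\eta_t(S_i+h_i^{(t+1)},\tau_t)$ with the abstract coordinate $h_i^{(t+1)}$ converging, in the pseudo-Lipschitz sense, to $\tau_tZ$, the limit \eqref{bat1} yields almost surely
\begin{align}
\lim_{n\to\infty}\bigg(\frac{1}{n^{\alpha}}\sum_{i=1}^n\psi(\hatx_i^{(t+1)},S_i)-n^{1-\alpha}\bbE\big[\psi(\eta_t(U_0+\tau_tZ,\tau_t),U_0)\big]\bigg)=0.\nonumber
\end{align}
By \eqref{eq13} the right-hand expectation equals $\delta(\tau_{t+1}^2-\Delta_n)$, which is \eqref{eqtwoh} up to the algorithm's within-iteration index convention for $\tau$.

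I expect the main obstacle to be bookkeeping rather than a new estimate: reproducing both Onsager corrections exactly, propagating Lipschitzness of $f_t,g_t$ across all $t$ by induction so that Lemma \ref{thm:chang} applies at every step, and---most delicately---keeping the composite test function inside the nonnegative pseudo-Lipschitz class $\psi:\bbR^2\to\bbR_+$ for which \eqref{bat1} was established (the restriction to $\bbR_+$ was exactly what the sparsity argument in Lemma \ref{thm:chang} forced, so one cannot relax it here). I would also have to track the $n$-dependence of the non-asymptotic states $\tau_t,\sigma_t$ through the recursion and confirm the single index shift between the algorithm's $\tau$-update and the recursions \eqref{eq12}--\eqref{eq13}.
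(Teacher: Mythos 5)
Your proposal matches the paper's own proof essentially step for step: both specialize Lemma \ref{thm:chang} by setting $\bq^{(t)}=\hatbx^{(t)}-\bS$, $f_t(\cdot,s)=\eta_{t-1}(\cdot)-s$, an affine $g_t$ with $\tilbw=\bW\sqrt{\Delta_n}$, and $\bq^{(0)}=-\bS$, then apply \eqref{bat1} to the composite test function $\psi(\eta_t(\cdot),s)$ and read off the recursion from \eqref{rectaut}--\eqref{recursigmat}. The only deviations are an immaterial sign convention for $\bh^{(t)}$ and your (correct) flag of the $\tau_t$ versus $\tau_{t+1}$ index convention in \eqref{eqtwoh}, which the paper itself glosses over.
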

\begin{remark} Some remarks are in order.
\begin{itemize}
\item At $\alpha=1$, Theorem \ref{thm:par0} recovers \citep[Theorem 1]{BayatiMonta2011}.
\item $\tau_t$ depends on $\alpha$ though $\tau_0$ since $\tilP_0$ is a function of $\alpha$.
\end{itemize}
\end{remark}
\begin{proof} 
	First, in Lemma \ref{thm:chang}, let
	\begin{align}
	\tilbw&:=\bw\sqrt{\Delta_n},\\
	\bh^{(t+1)}&:=\bs-(\bA^* \bz^{(t)}+ \hatbx^{(t)}) \label{bes1} \\
	\bq^{(t)}&:=\hatbx^{(t)}-\bs\\
	\bb^{(t)}&:=\tilbw-\bz^{(t)}\\
	\boldm^{(t)}&:=-\bz^{(t)},\\
	\bs_0&:=\bs.
	\end{align}
	In addition, the function $f_t$ and $g_t$ are given by 
	\begin{align}
	f_t(\bu,\bs)&:=\eta_{t-1}(\bs-\bu)-\bs\\ g_t(\bu,\tilbw)&:=\bu-\tilbw \label{defftgt}
   \end{align} and the initial condition $\bq^{(0)}:=-\bs$. 
	Then, we recover Algorithm 1 as a special case. Hence, by defining:
	$
	\phi_h(v^{(1)},v^{(2)},\cdots,v^{(t+1)},s):=\psi(\eta_t(s-v^{(t+1)}),s), \qquad \forall (v^{(1)},v^{(2)},\cdots,v^{(t+1)})\in \bbR^{t+1},
	$ which is a pseudo-Lipschitz function \citep{BayatiMonta2011}, from Lemma \ref{thm:chang}, we have
	\begin{align}
	\lim_{n\to \infty} \frac{1}{n^{\alpha}}\sum_{i=1}^n\psi\big(\eta_t(s_i-h_i^{(t+1)}),s_i\big)-n^{1-\alpha}\bbE\big[\psi(\eta_t(U_0+\tau_t Z_t),U_0)\big]=0, \qquad a.s. \label{bes2}
	\end{align}
	Note that, by Algorithm 1, we have
	\begin{align}
	\hatbx^{(t+1)}=\eta_t\big(\bA^* \bz^{(t)}+ \hatbx^{(t)}\big) \label{bes3}.
	\end{align}
	Hence, from \eqref{bes1}, \eqref{bes2}, and \eqref{bes3}, we obtain
	\begin{align}
	\lim_{n\to \infty} \frac{1}{n^{\alpha}}\sum_{i=1}^n\psi(\hatx_i^{(t+1)},s_i)-n^{1-\alpha}\bbE\big[\psi(\eta_t(U_0+\tau_t Z_t),U_0)\big]=0,
	\end{align}  where $Z_t \sim \calN(0,1)$.
	
	Furthermore, by \eqref{rectaut}, \eqref{recursigmat}, and \eqref{defftgt}, we have
	\begin{align}
	\tau_{t+1}^2&=\Delta_n + \sigma_{t+1}^2\\
	&=\Delta_n + \frac{n^{1-\alpha}}{\delta}\bbE\big[\psi\big(\eta_t(U_0+\tau_t Z),U_0\big) \big]\bigg)\\
	&=\Delta_n + \frac{n^{1-\alpha}}{\delta}\bbE\big[\psi\big(\eta_t(U_0+\tau_t Z),U_0\big) \big] \label{pq1}
	\end{align} with
	\begin{align}
	\tau_0^2&=\Delta_n+\sigma_0^2\\
	&=\Delta_n+ \frac{n^{1-\alpha}}{\delta }\bigg(\frac{\bbE\big[\|\bq^{(0)}\|^2\big]}{n}\bigg)\\
	&=\Delta_n+\frac{n^{1-\alpha}}{\delta }\bigg(\frac{\bbE\big[\|\bS\|^2\big]}{n}\bigg)\\
	&=\Delta_n+\frac{n^{1-\alpha}}{\delta} \bbE_{S \sim \tilP} \bbE[S^2].
	\end{align}
	From \eqref{bes2} and \eqref{pq1}, we obtain \eqref{eqtwoh}. This concludes our proof of Theorem \ref{thm:par0}.
\end{proof}

By setting $\psi(x,y)=(x-y)^2$ and using \eqref{eq13}, the following corollary is easily derived from Theorem \ref{thm:par0}.
\begin{corollary} \label{thm2} With the same notations as Theorem \ref{thm:par0}, the following holds:
	\begin{align}
	\lim_{n\to \infty} \bigg(\bigg[\frac{1}{n^{\alpha}}\sum_{i=1}^n \big(\hatx_i^{(t)}-S_i\big)^2\bigg]-\delta \big(\tau_t^2- \Delta_n \big)\bigg)=0 \label{bapcai},
	\end{align} where $\tau_t$ satisfies the following state evolution:
	\begin{align}
	\tau_0^2&=\Delta_n + \frac{n^{1-\alpha}}{\delta}\bbE_{S \sim \tilP_0}[S^2] \label{eq12cor},\\
	\tau_{t+1}^2&=\Delta_n +\frac{n^{1-\alpha}}{\delta}\bbE\big[\big(\eta_t(U_0+\tau_t Z)-U_0\big)^2\big] \quad \forall t\in \bbZ_+ \label{eq13cor},
	\end{align} where $U_0 \sim \tilP_0$ and $Z \sim \calN(0,1)$.
\end{corollary}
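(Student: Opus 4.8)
The plan is to obtain Corollary \ref{thm2} as a direct specialization of Theorem \ref{thm:par0} to the quadratic test function $\psi(x,y) := (x-y)^2$. The only hypothesis that genuinely needs checking is that this $\psi$ belongs to the admissible class for Theorem \ref{thm:par0}, namely that it is a pseudo-Lipschitz function of some finite order $k$ \emph{with codomain $\bbR_+$}. First I would record that $\psi(x,y) = (x-y)^2 \geq 0$ for every $(x,y)\in \bbR^2$, so $\psi:\bbR^2 \to \bbR_+$; this nonnegativity is not a cosmetic requirement, since the proof of Lemma \ref{thm:chang} (on which Theorem \ref{thm:par0} rests) uses it crucially at \eqref{awa1} when lower-bounding the conditional second moment that controls the variance of $T_n$. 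Next I would verify the pseudo-Lipschitz bound of order $k=2$: writing $u=(x,y)$, $u'=(x',y')$ and factoring $\psi(u)-\psi(u') = \big((x-y)+(x'-y')\big)\big((x-y)-(x'-y')\big)$, the first factor is bounded by $C_1(1+\|u\|+\|u'\|)$ and the second by $C_2\|u-u'\|$, giving $|\psi(u)-\psi(u')| \leq C(1+\|u\|+\|u'\|)\|u-u'\|$, which is exactly the pseudo-Lipschitz condition of order $2$.

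With the hypotheses in place, I would substitute $\psi(x,y)=(x-y)^2$ into the conclusion \eqref{eqtwoh} of Theorem \ref{thm:par0}. The empirical average $\frac{1}{n^\alpha}\sum_{i=1}^n \psi(\hatx_i^{(t+1)},S_i)$ collapses to $\frac{1}{n^\alpha}\sum_{i=1}^n (\hatx_i^{(t+1)}-S_i)^2$, which is \eqref{bapcai} up to the iteration-index convention. In parallel, the same substitution turns the state-evolution recursion \eqref{eq13} into $\tau_{t+1}^2 = \Delta_n + \frac{n^{1-\alpha}}{\delta}\bbE\big[(\eta_t(U_0+\tau_t Z)-U_0)^2\big]$, which is precisely \eqref{eq13cor}; the initial condition \eqref{eq12} does not involve $\psi$ and so carries over verbatim to \eqref{eq12cor}.

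I do not expect a substantive obstacle, as the corollary is a pure instantiation of the theorem. The single point deserving care is the membership check of the previous paragraph: because Theorem \ref{thm:par0} and Lemma \ref{thm:chang} are stated only for nonnegative pseudo-Lipschitz test functions $\psi:\bbR^2\to\bbR_+$ rather than for general $\bbR$-valued ones, one must confirm that the squared-error loss lies in this restricted class before quoting the theorem. Once that is confirmed, both the almost-sure convergence \eqref{bapcai} and the two state-evolution identities \eqref{eq12cor}--\eqref{eq13cor} follow with no further computation.
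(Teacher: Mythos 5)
Your proposal matches the paper's own derivation: the paper obtains Corollary \ref{thm2} exactly by setting $\psi(x,y)=(x-y)^2$ in Theorem \ref{thm:par0} and reading off \eqref{eq13}. Your additional verification that the squared error is nonnegative and pseudo-Lipschitz of order $2$ is a correct and worthwhile elaboration of the membership check that the paper leaves implicit.
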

\section{Numerical Evaluations}
In this section, we compare the normalized MMSE fundamental limit in Theorem \ref{mainthm} and the normalized MSE of Algorithm 1 in Corollary \ref{thm2} for the Bernoulli-Rademacher prior. Here, the normalization means that we divide the total mean square error by $k=n^{\alpha}$. 

More specifically, let $\Delta_n=\Delta=s_{\max}<\infty$ and $\tilP_0(s)=(1-\frac{k}{n})\delta(s)+\frac{1}{2}(\frac{k}{n})(\delta(s-\sqrt{\Delta})+\delta(s+\sqrt{\Delta}))$, which is the Bernoulli-Rademacher distribution (cf. \citep{Barbier2020}). With this assumption, we have
\begin{align}
i_{n,\rm{den}}(\Sigma)
&=n^{1-\alpha}I(S;S+\tilW \Sigma)\\
&=n^{1-\alpha}\bigg[H(Y)-\frac{1}{2}\log(2\pi e\Sigma^2)\bigg],
\end{align}	where
\begin{align}
f_Y(y)
&=\bigg(1-\frac{k}{n}\bigg)\frac{1}{\Sigma\sqrt{2\pi}}\exp\bigg(-\frac{y^2}{2\Sigma^2}\bigg)\nn\\
&\quad+\frac{1}{2\Sigma \sqrt{2\pi}}\bigg(\frac{k}{n}\bigg)\bigg(\exp\bigg(-\frac{(y-\sqrt{\Delta})^2}{2\Sigma^2}\bigg) +\exp\bigg(-\frac{(y+\sqrt{\Delta})^2}{2\Sigma^2}\bigg)\bigg).
\end{align}

For this prior distribution, we run Algorithm 1 for $\rm{itermax}=10$ iterations with the denoiser defined as following:
\begin{align}
&\eta(x,\tau)=\bbE[S|S+\tau Z=x]\\
&=\frac{\frac{1}{2}\big(\frac{k}{n}\sqrt{\Delta}\big)\big[\exp(\frac{x\sqrt{\Delta}}{\tau^2})-\exp(-\frac{x\sqrt{\Delta}}{\tau^2})\big]}{\big(1-\frac{k}{n}\big)\exp\big(\frac{\Delta}{2\tau^2}\big)+\frac{1}{2}\big(\frac{k}{n}\big)\big[\exp(\frac{x\sqrt{\Delta}}{\tau^2})+\exp(-\frac{x\sqrt{\Delta}}{\tau^2})\big]}.
\end{align}
 
This denoiser has the following derivative:
\begin{align}
\frac{d\eta(x,\tau)}{dx}&=\frac{\frac{\Delta}{2\tau^2}\big(1-\frac{k}{n}\big)\frac{k}{n}\exp\big(\frac{\Delta}{2\tau^2}\big)\big[\exp(\frac{x\sqrt{\Delta}}{\tau^2})+\exp(-\frac{x\sqrt{\Delta}}{\tau^2})\big]}{\big(\big(1-\frac{k}{n}\big)\exp\big(\frac{\Delta}{2\tau^2}\big)+\frac{1}{2}\big(\frac{k}{n}\big)\big[\exp(\frac{x\sqrt{\Delta}}{\tau^2})+\exp(-\frac{x\sqrt{\Delta}}{\tau^2})\big]\big)^2}\nn\\
&+ \frac{\big(\frac{k}{n}\big)^2\frac{\Delta}{\tau^2}}{\big(\big(1-\frac{k}{n}\big)\exp\big(\frac{\Delta}{2\tau^2}\big)+\frac{1}{2}\big(\frac{k}{n}\big)\big[\exp(\frac{x\sqrt{\Delta}}{\tau^2})+\exp(-\frac{x\sqrt{\Delta}}{\tau^2})\big]\big)^2}.
\end{align}
\begin{figure}[H]
	\centering
	\includegraphics[width=0.8\linewidth]{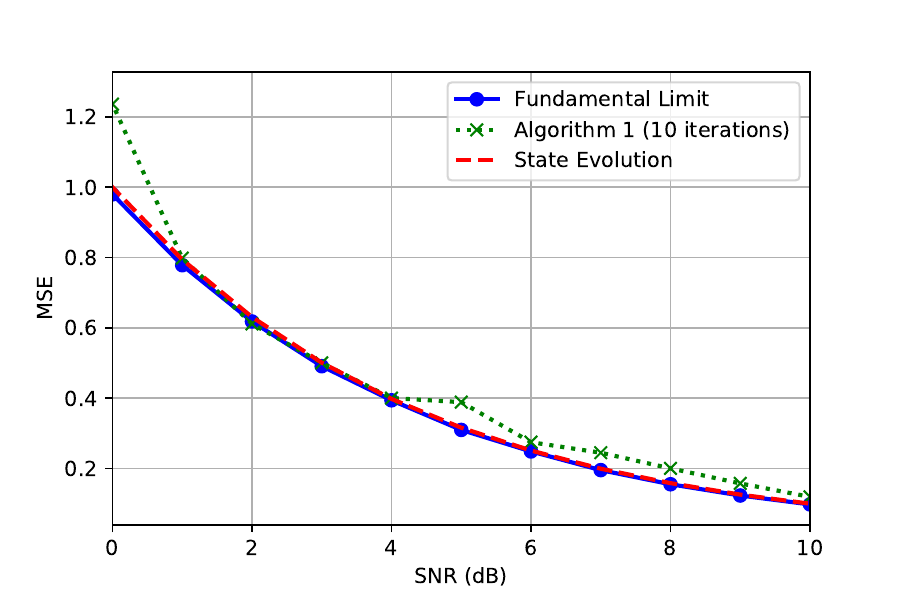}
	\caption{MMSE and the MSE of Algorithm 1 as a function of  SNR at $\alpha=0.5$ and $\delta=0.5$ for $n=1000$. Here, $SNR:=-10\log (\Delta_n/\delta)$ (dB).}
	\label{fig:DMK2}
\end{figure} 
\begin{figure}[H]
	\centering
	\includegraphics[width=0.8\linewidth]{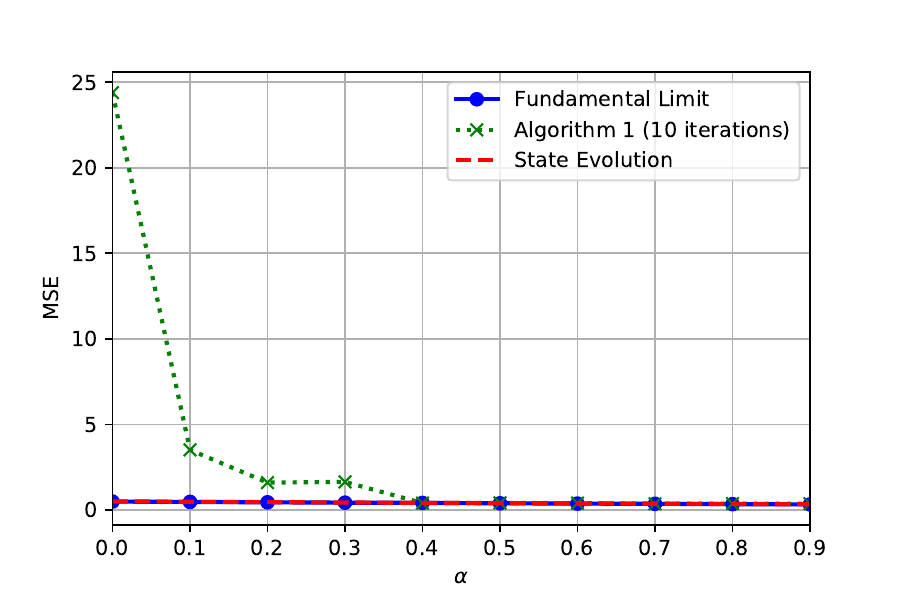}
	\caption{MSE of Algorithm 1, State Evolution, and MSE fundamental limit as functions of $\alpha$ at $\delta=0.5, SNR=10\log (2 \alpha)$ dB for $n=1000$.}
	\label{fig:DMK3}
\end{figure}
\begin{figure}[H]
	\centering
	\includegraphics[width=0.8\linewidth]{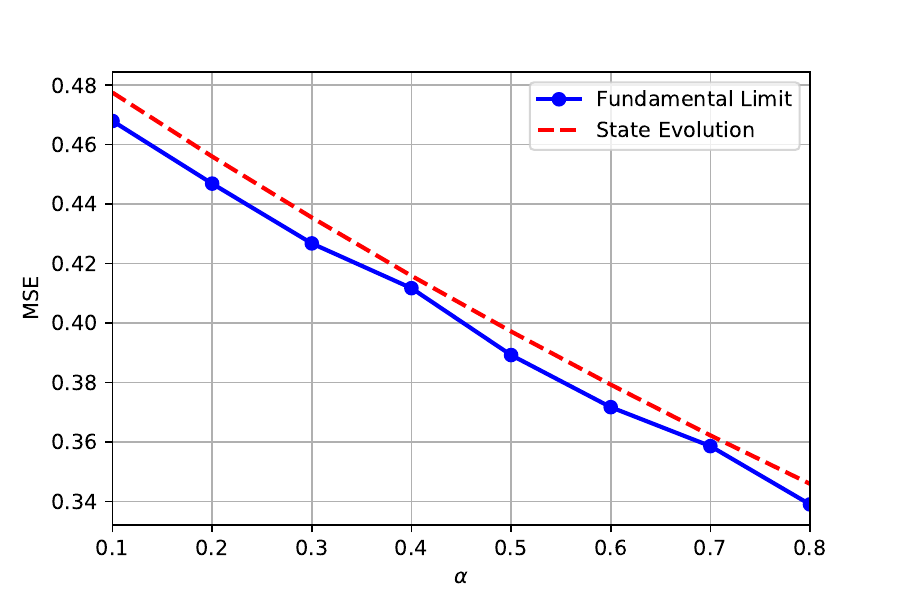}
	\caption{State Evolution of Algorithm 1 vs. Fundamental Limit as functions of $\alpha$ at $\delta=0.5, SNR=10\log (2 \alpha)$ dB for $n=1000$.}
	\label{fig:DMK5}
\end{figure}
\begin{figure}[H]
	\centering
	\includegraphics[width=0.8\linewidth]{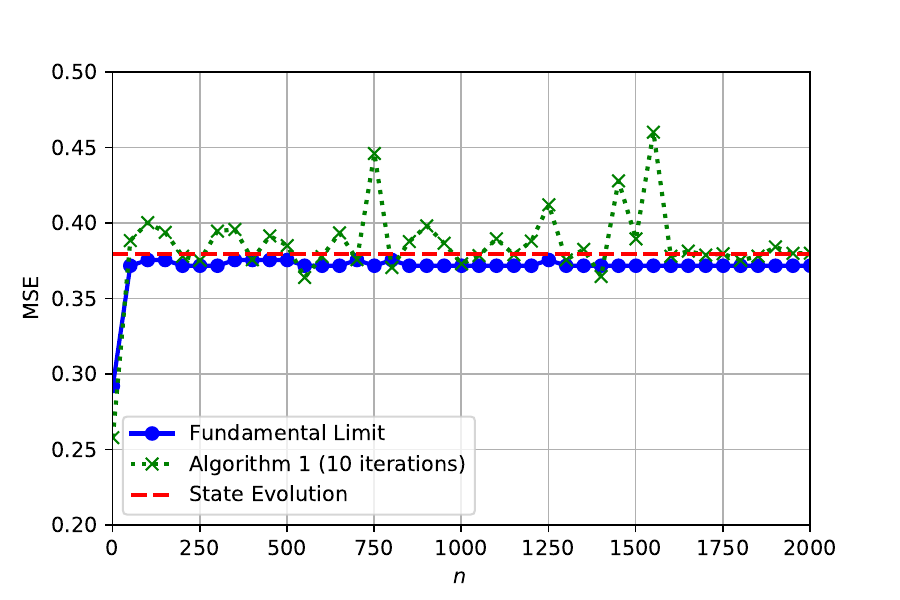}
	\caption{MSE of Algorithm 1, State Evolution, and MSE fundamental limit as functions of $n$ at $\delta=0.5$ and $\alpha=0.6$.}
	\label{fig:DMKL}
\end{figure}

More specifically, let $\Delta_n=\Delta$ and $\tilP_0(s)=(1-\frac{k}{n})\delta(s)+\frac{1}{2}(\frac{k}{n})(\delta(s-1)+\delta(s+1))$, which is the Bernoulli-Rademacher distribution (cf. \citep{Barbier2020}). With this assumption, we have
\begin{align}
i_{n,\rm{den}}(\Sigma)
&=n^{1-\alpha}I(S;S+\tilW \Sigma)\\
&=n^{1-\alpha}\bigg[H(Y)-\frac{1}{2}\log(2\pi e\Sigma^2)\bigg],
\end{align}	where $Y=S+\tilW \Sigma$ and
\begin{align}
f_Y(y)
&=\bigg(1-\frac{k}{n}\bigg)\frac{1}{\Sigma\sqrt{2\pi}}\exp\bigg(-\frac{y^2}{2\Sigma^2}\bigg)\nn\\
&\quad+\frac{1}{2\Sigma \sqrt{2\pi}}\bigg(\frac{k}{n}\bigg)\bigg[\exp\bigg(-\frac{(y-1)^2}{2\Sigma^2}\bigg) +\exp\bigg(-\frac{(y+1)^2}{2\Sigma^2}\bigg)\bigg].
\end{align}

In the first experiment, we set $n=300$ and run AMP in Algorithm 1 for $10$ iterations. Fig.~\ref{fig:DMK2} shows that the MSE achieved by Algorithm 1 via Monte-Carlo simulation is very close to the MMSE fundamental limit in Theorem \ref{mainthm}. The state evolution in Corollary \ref{thm2} tracks the MMSE fundamental limit in Theorem \ref{mainthm} very well. This plot also hints us that a judicious modification of the existing AMPs for linear regimes (for example,~\citep{Donoho2009a}) can work well for sub-linear regimes. 

Fig.~\ref{fig:DMK3} plots the MSE as a function of $\alpha \in (0,1)$ for $\rm{SNR}=10\log (2 \alpha)$ dB for all $\alpha \in [0,1]$. As we can observe from the plot, the gap between the state evolution and fundamental limit is very small. However, there is big gap between the state evolution and MSE from Algorithm 1 at low $\alpha$'s. This can be explained by observing that $m=\delta n^{\alpha}$ is very small at small values of $\alpha$ (for example, $m=1$ at $n=1000$ and $\alpha=0.1$), so the LLNs in Lemma \ref{thm:chang} do not hold. To have a better view of the relationship between the MMSE fundamental limit in Theorem 1 and the state evolution of Algorithm 1, we zoom out it in Fig. \ref{fig:DMK5}.  
 
In Fig.~\ref{fig:DMKL}, we plot MSE as a function of $n$ for fixed $\alpha=0.8$ and $\delta=0.5$. The figure shows that the gap among fundamental limit in Theorem 1, the state evolution of Algorithm 1 in Corollary 14, and MSE of Algorithm 1 nearly coincide to each other at $n$ sufficiently large.





\acks{The author is grateful to Prof.\ Ramji Venkataramanan, the University of Cambridge, for many suggestions to improve the manuscript. The authors also would like to thank the action editor and reviewers for many useful suggestions to improve the manuscript during the review process.}

\vskip 0.2in
\bibliographystyle{unsrt}
\bibliography{isitbib}
\newpage
\appendix
\section*{Appendix A.} \label{beo1proof}
In this Appendix, we provide a proof for Lemma \ref{beo1}.\\

For $\alpha=1$, it is known from \citep[Eq.~(93)]{Barbier2017TheAI} that
\begin{align}
\int_{a_n}^{b_n}\frac{1}{K_n}\sum_{k=1}^{K_n}\int_0^1 dt \frac{d\gamma_k(t)}{dt}\bigg({\rm{ymmse}_{k,t;\eps}} 
-\frac{\rm{mmse}_{k,t;\eps}}{n^{1-\alpha}+\gamma_k(t)\rm{mmse}_{k,t;\eps}}  \bigg) =O(a_n^{-2}n^{-\gamma}) \label{bunhi1a}
\end{align} for some $0<\gamma<1$.

Now, assume that $0 < \alpha<1$\footnote{Our proof of Lemma \ref{beo1} for $\alpha<1$ is simpler than the proof in \citep{Barbier2017TheAI} for $\alpha=1$. More specifically, the proof of concentration inequality in \eqref{eq153} has been simplified by making use of signal sparsity for $\alpha<1$.}. Observe that
\begin{align}
n^{\alpha-1}{\rm{mmse}_{k,t;\eps}}&=\frac{1}{n}\sum_{i=1}^n\bbE\big[(S_i-\langle X_i\rangle_{k,t;\eps})^2\big]\\
&=\frac{1}{n}\sum_{i=1}^n \bbE\big[\big(\langle S_i-X_i \rangle_{k,t;\eps}\big)^2\big]\\
&=\frac{1}{n}\sum_{i=1}^n \bbE\big[\langle \barX_i \rangle_{k,t;\eps}^2\big]
\end{align} since by definition $\barX_i:=X_i-S_i$ for all $ i \in [n]$.

Now, by \citep[Section 6]{BarbierALT2016}, for all $k \in [K_n]$ we have
\begin{align}
{\rm{ymmse}_{k,t;\eps}}=\calY_{1,k}-\calY_{2,k} \label{eq125},
\end{align}
where
\begin{align}
\calY_{1,k}&:=\bbE\bigg[\frac{1}{m}\sum_{\mu=1}^m \big(W_{\mu}^{(k)}\big)^2\frac{1}{n}\sum_{i=1}^n \langle X_i \barX_i\rangle_{k,t;\eps}\bigg], \label{eq126}\\
\calY_{2,k}&:=\sqrt{\gamma_k(t)}\bbE\bigg[\frac{1}{m}\sum_{\mu=1}^m W_{\mu}^{(k)} \bigg\langle [\bA \bar{\bX}]_{\mu}\frac{1}{n}\sum_{i=1}^n X_i \barX_i \bigg\rangle_{k,t;\eps}\bigg] \label{eq127}.
\end{align}
By the law of large numbers, $\frac{1}{m}\sum_{\mu=1}^m \big(W_{\mu}^{(k)}\big)^2=1+o_n(1)$ almost surely, so we have
\begin{align}
\calY_{1,k}&:=\frac{1}{n}\bbE\bigg[\sum_{i=1}^n \langle X_i \barX_i\rangle_{k,t;\eps}\bigg]+o_n(1) \label{eq128b}\\
&=\frac{1}{n}\sum_{i=1}^n \bbE\big[\big(\langle \barX_i(S_i+\barX_i) \rangle_{k,t;\eps}\big)\big]+o_n(1) \label{eq128a}\\
&=\frac{1}{n}\sum_{i=1}^n \bbE\big[\langle \barX_i^2 \rangle_{k,t;\eps}\big]+ \frac{1}{n}\sum_{i=1}^n  \bbE[\langle \barX_i\rangle_{k,t;\eps} S_i]+o_n(1)\\
&=\frac{1}{n}\sum_{i=1}^n \bbE\big[\langle \barX_i^2 \rangle_{k,t;\eps}\big]+ \bbE\big[\langle \barq_{\bX,\bS}\rangle_{k,t;\eps}\big]+o_n(1),
\end{align}
where
\begin{align}
\barq_{\bX,\bS}:=\frac{1}{n}\sum_{i=1}^n S_i\barX_i.
\end{align} Here, \eqref{eq128a} follows from $X_i=\barX_i+S_i$.

Now, for any $a, b$, by Cauchy-Schwarz inequality observe that
\begin{align}
\bbE[\langle ab \rangle_{k,t;\eps}]&=\bbE[\langle a \rangle_{k,t;\eps}]\bbE[\langle b\rangle_{k,t;\eps}]+ \bbE[\langle (a-\bbE[\langle a \rangle_{k,t;\eps}])b \rangle_{k,t;\eps}]\\
&= \bbE[\langle a \rangle_{k,t;\eps}]\bbE[\langle b\rangle_{k,t;\eps}]+ O\bigg(\bbE\bigg[\sqrt{\langle (a-\bbE[\langle a \rangle_{k,t;\eps}]\big)^2 \rangle_{k,t;\eps}\langle b^2 \rangle_{k,t;\eps}}\bigg]\bigg)\\
&= \bbE[\langle a \rangle_{k,t;\eps}]\bbE[\langle b\rangle_{k,t;\eps}]+ O\bigg(\sqrt{\bbE\big[\langle (a-\bbE[\langle a \rangle_{k,t;\eps}]\big)^2 \rangle_{k,t;\eps}\big]\bbE\big[\langle b^2 \rangle_{k,t;\eps}\big]}\bigg) \label{barbier}.
\end{align}
Let $a=\frac{1}{n}\sum_{i=1}^n X_i \barX_i$ and $b=W_\mu^{(k)}[\bA \bar{\bX}]_{\mu}$, then we have
\begin{align}
\bbE\bigg[W_{\mu}^{(k)}]\bigg\langle [\bA \bar{\bX}]_{\mu}\frac{1}{n}\sum_{i=1}^n X_i \barX_i\bigg\rangle_{k,t;\eps}\bigg]=\bbE[\langle ab \rangle_{k,t;\eps}] \label{toiyeu}.
\end{align}
On the other hand,  by Cauchy-Schwarz inequality, we also have
\begingroup
\allowdisplaybreaks
\begin{align}
\bbE[\langle b^2 \rangle_{k,t;\eps}]&=\bbE\bigg[\bigg\langle\bigg( W_\mu^{(k)}[\bA \bar{\bX}]_{\mu}\bigg)^2\bigg \rangle_{k,t;\eps}\bigg]\\
&\leq \bbE\bigg[\sqrt{\big\langle \big(W_\mu^{(k)}\big)^4 \big \rangle_{k,t;\eps}\big\langle \big([\bA \bar{\bX}]_{\mu}\big)^4\big\rangle_{k,t;\eps} }\bigg]\\
&\leq \sqrt{\bbE\big[\big\langle \big(W_\mu^{(k)}\big)^4 \big \rangle_{k,t;\eps}\big\langle \big( [\bA \bar{\bX}]_{\mu}\big)^4\big\rangle_{k,t;\eps}\big] }\\
&\leq \sqrt[4]{\bbE\big[\big\langle \big(W_\mu^{(k)}\big)^4 \big \rangle_{k,t;\eps}^2\big] \bbE\big[\big\langle \big( [\bA \bar{\bX}]_{\mu}\big)^4\big\rangle_{k,t;\eps}^2\big] }\\
&\leq \sqrt[4]{\bbE\big[ \big(W_\mu^{(k)}\big)^8 \big] \bbE\big[\big\langle \big([\bA \bar{\bX}]_{\mu}\big)^8\big\rangle_{k,t;\eps}\big] }\\
&=\sqrt[4]{ 105\bbE\big[\big\langle \big( [\bA \bar{\bX}]_{\mu}\big)^8\big\rangle_{k,t;\eps}\big] }\\
&=\sqrt[4]{ 105\bbE\big[\big\langle \big( [\bA \bX]_{\mu} -[\bA \bS]_{\mu}\big)^8\big\rangle_{k,t;\eps}\big] }\\
&=\sqrt[4]{ 105\bbE\bigg[\bigg\langle \sum_{i=0}^8 {8 \choose i}(-1)^i [\bA \bX]_{\mu}^i  [\bA \bS]_{\mu}^{8-i}\bigg\rangle_{k,t;\eps}\bigg] }\\
&=	\sqrt[4]{ 105 \sum_{i=0}^8 {8 \choose i}(-1)^i \bbE\big[\big\langle[\bA \bX]_{\mu}^i\big\rangle_{k,t;\eps}  [\bA \bS]_{\mu}^{8-i}\big] }\\
&\leq \sqrt[4]{ 105 \sum_{i=0}^8 {8 \choose i}\sqrt{ \bbE\big[\big\langle[\bA \bX]_{\mu}^{2i} \big\rangle_{k,t;\eps}\big]  \bbE\big[[\bA \bS]_{\mu}^{2(8-i)}\big]} }\\
&=\sqrt[4]{ 105 \sum_{i=0}^8 {8 \choose i} \sqrt{\bbE\big[[\bA \bS]_{\mu}^{2i}\big]  \bbE\big[[\bA \bS]_{\mu}^{2(8-i)}\big] }} \\
&=\sqrt[4]{ \frac{105}{n^8} \sum_{i=0}^8 {8 \choose i} \sqrt{\bbE\big[[\sqrt{n}\bA \bS]_{\mu}^{2i}\big]  \bbE\big[[\sqrt{n}\bA \bS]_{\mu}^{2(8-i)}\big] }} 
\label{layloi}.
\end{align}
\endgroup
Now, we have
\begin{align}
[\sqrt{n}\bA \bS]_{\mu}=\sum_{i=1}^n \sqrt{n} A_{\mu,i}S_i.
\end{align}
Note that
\begin{align}
\var(\sqrt{n} A_{\mu,i} S_i)&=n\bbE[A_{\mu,i}^2]\bbE_{S_i \sim \tilP_0}[S_i^2]\\
&=\frac{n}{m}\bbE_{S_i \sim \tilP_0}[S_i^2]\\
&=\frac{n}{\delta n^{\alpha}} \frac{n^{\alpha}}{n}\bbE_{S\sim P_0}[S^2]\\
&=\frac{1}{\delta}\bbE_{S\sim P_0} [S^2].
\end{align}
Hence, by the central limit theorem, we have
\begin{align}
[\sqrt{n}\bA \bS]_{\mu} \to \calN\bigg(0, \frac{1}{\delta}\bbE_{S \sim P_0}[S^2]\bigg).
\end{align}
It follows that $\bbE\big[[\sqrt{n}\bA \bS]_{\mu}^{2(8-i)}\big]$ and $ \bbE\big[[\sqrt{n}\bA \bS]_{\mu}^{2i}\big] $ are bounded for each $i \in [8]$. Hence, $\bbE[\langle b^2 \rangle_{k,t;\eps}]$ goes to zero uniformly in $\mu$ as $n \to \infty$ by observing \eqref{layloi}.  

Furthermore, we have
\begingroup
\allowdisplaybreaks
\begin{align}
\bbE\big[\langle (a-\bbE[\langle a \rangle_{k,t;\eps}]\big)^2\rangle_{k,t;\eps} \big]&=\bbE[\langle a^2 \rangle_{k,t;\eps}]-\bbE\big[\big(\langle a \rangle_{k,t;\eps}\big)^2\big]\\
& \leq \bbE[\langle a^2 \rangle_{k,t;\eps}]\\
&= \bbE\bigg[\bigg\langle \bigg(\frac{1}{n}\sum_{i=1}^n X_i \barX_i\bigg)^2\bigg \rangle_{k,t;\eps}\bigg]\\
&\leq  \bbE\bigg[\bigg\langle \bigg(\frac{1}{n}\sum_{i=1}^n |X_i| |\barX_i|\bigg)^2\bigg \rangle_{k,t;\eps}\bigg]\\
&\leq  \bbE\bigg[\bigg\langle \bigg(\frac{1}{n}\sum_{i=1}^n |X_i| 2s_{\max}\bigg)^2\bigg \rangle_{k,t;\eps}\bigg] \label{eqsmax}\\
&=4 s_{\max}^2\bbE\bigg[\bigg(\frac{1}{n}\sum_{i=1}^n |X_i| \bigg)^2\bigg]\\
&\leq \frac{4s_{\max}^2}{n}\bbE\bigg[\sum_{i=1}^n X_i^2\bigg]\\
&= 4s_{\max}^2 \bbE_{S \sim \tilP_0} [S^2]\\
&=4s_{\max}^2 \frac{n^{\alpha}}{n}\bbE_{S \sim P_0} [S^2]\\
&=O_n\bigg(\frac{1}{n^{1-\alpha}}\bigg) \to 0 \label{eq153}
\end{align} 
\endgroup
as $0\leq \alpha<1$, where \eqref{eqsmax} follows from the fact that $|\barX_i|=|X_i-S_i| \leq |X_i|+|S_i|  \leq 2s_{\max}$.

From \eqref{barbier}, \eqref{toiyeu}, \eqref{layloi}, and \eqref{eq153}, we obtain
\begin{align}
\bbE\bigg[W_{\mu}^{(k)}\bigg\langle [\bA \bar{\bX}]_{\mu}\frac{1}{n}\sum_{i=1}^n X_i \barX_i\bigg\rangle_{k,t;\eps}\bigg]&= \bbE[\langle a \rangle_{k,t;\eps}]\bbE[\langle b \rangle_{k,t;\eps}]+o_n(1)\\
&=\bbE\bigg[\bigg \langle \frac{1}{n}\sum_{i=1}^n X_i \barX_i \bigg \rangle_{k,t;\eps} \bigg]\bbE\bigg[W_{\mu}^{(k)}\langle [\bA\bar{\bX}]_{\mu}\rangle_{k,t;\eps}\bigg]+o_n(1),
\end{align} where $o_n(1) \to 0$ uniformly in $\mu$.

It follows that
\begin{align}
\calY_{2,k}&=\sqrt{\gamma_{k}(t)}\frac{1}{m}\sum_{\mu=1}^m\bbE\bigg[W_{\mu}^{(k)}\bigg\langle [\bA \bar{\bX}]_{\mu}\frac{1}{n}\sum_{i=1}^n X_i \barX_i\bigg\rangle_{k,t;\eps}\bigg] \\
&=\sqrt{\gamma_{k}(t)} \bbE\bigg[\bigg \langle \frac{1}{n}\sum_{i=1}^n X_i \barX_i \bigg \rangle_{k,t;\eps}\bigg] \bigg(\frac{1}{m}\sum_{\mu=1}^m  \bbE\bigg[W_{\mu}^{(k)}\langle [\bA\bar{\bX}]_{\mu}\rangle_{k,t;\eps}\bigg]\bigg)+o_n(1) \label{final1}.
\end{align}
Now, by \citep[Eq.~(26)]{BarbierALT2016}, we have
\begin{align}
{\rm{ymmse}_{k,t;\eps}}=\frac{1}{m \sqrt{\gamma_k(t)}}\sum_{\mu=1}^m \bbE\bigg[W_{\mu}^{(k)}\langle [\bA\bar{\bX}]_{\mu}\rangle_{k,t;\eps}\bigg]. \label{final2}
\end{align}
Hence, from \eqref{final1} and \eqref{final2}, we obtain
\begin{align}
\calY_{2,k}&=\gamma_k(t)\bbE\bigg[\bigg \langle \frac{1}{n}\sum_{i=1}^n X_i \barX_i \bigg \rangle_{k,t;\eps}\bigg]{\rm{ymmse}_{k,t;\eps}}+o_n(1)\\
&=\gamma_k(t){\rm{ymmse}_{k,t;\eps}} \mathcal{\tilY}_{1,k}+o_n(1) \label{eq128}, 
\end{align} where
\begin{align}
\mathcal{\tilY}_{1,k}=\bbE\bigg[\bigg \langle \frac{1}{n}\sum_{i=1}^n X_i \barX_i \bigg \rangle_{k,t;\eps}\bigg].
\end{align}

It follows from \eqref{eq125}--\eqref{eq128} and \eqref{eq128} that
\begin{align}
{\rm{ymmse}_{k,t;\eps}}&=\calY_{1,k}-\calY_{2,k}\\
&=\mathcal{\tilY}_{1,k}+o_n(1)-\mathcal{\tilY}_{1,k}\gamma_k(t) {\rm{ymmse}_{k,t;\eps}}+o_n(1)\\
&=\mathcal{\tilY}_{1,k}-\mathcal{\tilY}_{1,k}\gamma_k(t) {\rm{ymmse}_{k,t;\eps}}+o_n(1).
\end{align}
This leads to
\begin{align}
{\rm{ymmse}_{k,t;\eps}}=\frac{\mathcal{\tilY}_{1,k}}{1+\gamma_k(t) \mathcal{\tilY}_{1,k}}+o_n(1).
\end{align}
Then, it holds that
\begin{align}
{\rm{ymmse}_{k,t;\eps}}-\frac{{\rm{mmse}_{k,t;\eps}}n^{\alpha-1}}{1+\gamma_k(t){\rm{mmse}_{k,t;\eps}}n^{\alpha-1} }&=\frac{\mathcal{\tilY}_{1,k}}{1+\gamma_k(t) \mathcal{\tilY}_{1,k}}-\frac{{\rm{mmse}_{k,t;\eps}}n^{\alpha-1}}{1+\gamma_k(t){\rm{mmse}_{k,t;\eps}}n^{\alpha-1} } +o_n(1)\label{eqkey}.
\end{align}
Now, observe that
\begin{align}
\big|\mathcal{\tilY}_{1,k}-{\rm{mmse}_{k,t;\eps}}n^{\alpha-1}\big|&=\bigg|\frac{1}{n}\sum_{i=1}^n \bbE\big[\langle \barX_i^2 \rangle_{k,t;\eps}\big]-\frac{1}{n}\sum_{i=1}^n \bbE\big[\langle \barX_i \rangle_{k,t;\eps}^2\big]+\bbE[\langle\barq_{\bX,\bS}\rangle_{k,t;\eps}]\bigg|+o_n(1)\\
&\leq \bigg|\frac{1}{n}\sum_{i=1}^n \bbE\big[\langle \barX_i^2 \rangle_{k,t;\eps}\big]-\frac{1}{n}\sum_{i=1}^n \bbE\big[\langle \barX_i \rangle_{k,t;\eps}^2\big]\bigg|+\bigg|\bbE[\langle\barq_{\bX,\bS}\rangle_{k,t;\eps}]\bigg|+o_n(1)\\
& \leq \frac{1}{n}\sum_{i=1}^n \bbE\big[\langle \barX_i^2 \rangle_{k,t;\eps}\big]+\frac{1}{n}\sum_{i=1}^n \bbE[|S_i \langle \barX_i \rangle_{k,t;\eps}|]+o_n(1)\\
&=\sum_{i=1}^n \bbE\big[\langle (X_i-S_i)^2 \rangle_{k,t;\eps}\big]+\frac{1}{n}\sum_{i=1}^n \bbE[|S_i \langle (X_i-S_i) \rangle_{k,t;\eps}|]+o_n(1)\\
& \leq \frac{2}{n}\sum_{i=1}^n \bbE\big[\langle X_i^2 + S_i^2 \rangle_{k,t;\eps}\big]+\frac{1}{n}\sum_{i=1}^n\sqrt{ \bbE[S_i^2] \bbE[\langle (X_i-S_i) \rangle_{k,t;\eps}^2]} \label{cauchy}\\
& \leq \frac{2}{n}\sum_{i=1}^n \bbE\big[\langle X_i^2 + S_i^2 \rangle_{k,t;\eps}\big]+\frac{1}{n}\sum_{i=1}^n\sqrt{ \bbE[S_i^2] \bbE[\langle 2(S_i^2+X_i^2) \rangle_{k,t;\eps}]} \\
&=6 \bbE_{S \sim \tilP_0} \bbE[S^2] \label{iid}\\
&= \frac{6 n^{\alpha}}{n}\bbE_{S \sim P_0} \bbE[S^2]\\
&:=f(n) \label{bound},
\end{align} where $f(n)=O\big(\frac{1}{n^{1-\alpha}}\big)=o(1) $ uniformly in $k,t$ if $0 \leq \alpha <1$. Here, \eqref{cauchy} follows from Cauchy–Schwarz inequality, \eqref{iid} follows from the i.i.d. assumption of the sequence $\{S_i\}_{i=1}^n$ and $\{X_i\}_{i=1}^n$ under $\tilP_0$, and \eqref{bound} follows from the assumption that $\bbE_{S \sim P_0}[S^4]<\infty$.

Now, let
\begin{align}
g_{k,t}(x):=\frac{x}{1+\gamma_k(t)x}.
\end{align} It is easy to see that $g_{k,t}(x)$ is an increasing function for $x \geq 0$. More over, we have
\begin{align}
0<g_{k,t}'(x)=\frac{1}{(1+\gamma_k(t)x)^2} \leq 1
\end{align} uniformly in $k,t$ for all $x \geq 0$ (since $\gamma_k(t)\geq 0$ uniformly in $k,t$). Hence, we have
\begin{align}
&\bigg|{\rm{ymmse}_{k,t;\eps}}-\frac{{\rm{mmse}_{k,t;\eps}}n^{\alpha-1}}{1+\gamma_k(t){\rm{mmse}_{k,t;\eps}}n^{\alpha-1} }\bigg|\nn\\
&\qquad=\bigg|\frac{\mathcal{\tilY}_{1,k}}{1+\gamma_k(t) \mathcal{\tilY}_{1,k}}-\frac{{\rm{mmse}_{k,t;\eps}}n^{\alpha-1}}{1+\gamma_k(t){\rm{mmse}_{k,t;\eps}}n^{\alpha-1} }\bigg|+o_n(1)\\
&\qquad\leq \bigg|g_{k,t}({\rm{mmse}_{k,t;\eps}}n^{\alpha-1}\pm f(n))-g_{k,t}({\rm{mmse}_{k,t;\eps}}n^{\alpha-1})\bigg|+o_n(1)\\
&\qquad=|g_{k,t}'(\theta) f(n)|+o_n(1)
\end{align} for some $\theta >0$ by Taylor's expansion. This means that
\begin{align}
\bigg|{\rm{ymmse}_{k,t;\eps}}-\frac{{\rm{mmse}_{k,t;\eps}}n^{\alpha-1}}{1+\gamma_k(t){\rm{mmse}_{k,t;\eps}}n^{\alpha-1} }\bigg| \leq \tilf(n) \label{cubest}
\end{align} uniformly in $k,t$ where $\tilf(n):=f(n)+o_n(1)$.

Then, it holds that
\begin{align}
&\bigg|\int_{a_n}^{b_n}d\eps\frac{1}{K_n}\sum_{k=1}^{K_n}\int_0^1 dt \frac{d\gamma_k(t)}{dt}\bigg({\rm{ymmse}_{k,t;\eps}} 
-\frac{\rm{mmse}_{k,t;\eps}}{n^{1-\alpha}+\gamma_k(t)\rm{mmse}_{k,t;\eps}}  \bigg)\bigg|\nn\\
& \leq \int_{a_n}^{b_n}d\eps \frac{1}{K_n}\sum_{k=1}^{K_n}\bigg|\int_0^1 dt \frac{d\gamma_k(t)}{dt}\bigg| \tilf(n)\\
&=(b_n-a_n) \tilf(n) \frac{1}{K_n}\sum_{k=1}^{K_n} \big|\gamma_k(1)-\gamma_k(0)\big|\\
&=(b_n-a_n) \tilf(n) \frac{1}{K_n}\sum_{k=1}^{K_n}\frac{1}{\Delta_n}\\
&=(b_n-a_n) \tilf(n) \frac{1}{\Delta_n}\\
&=o\bigg(\frac{b_n-a_n}{\Delta_n}\bigg)
\end{align} as $a_n, b_n \to 0$. Hence, we have
\begin{align}
\int_{a_n}^{b_n}\frac{1}{K_n}\sum_{k=1}^{K_n}\int_0^1 dt \frac{d\gamma_k(t)}{dt}\bigg({\rm{ymmse}_{k,t;\eps}} 
-\frac{\rm{mmse}_{k,t;\eps}}{n^{1-\alpha}+\gamma_k(t)\rm{mmse}_{k,t;\eps}}  \bigg)=o\bigg(\frac{b_n-a_n}{\Delta_n}\bigg) \label{bunhi2}.
\end{align}
From \eqref{bunhi1a} and \eqref{bunhi2}, we obtain \eqref{keyconcen} for all $0\leq \alpha\leq 1$.
\section*{Appendix B.} \label{basiscexnew:proof} 
In this Appendix, we provide a proof for Lemma \ref{basiscexnew}. \\

Observe that
	\begin{align}
	{\rm{mmse}}_{k,t;\eps}-{\rm{mmse}}_{k,0;\eps}=\int_0^t \frac{d\rm{mmse}_{k,\nu;\eps}}{d\nu} d\nu \label{ufac0}.
	\end{align}
	Now, we have
	\begin{align}
	n^{\alpha}\frac{d\rm{mmse}_{k,\nu;\eps}}{d\nu}&=\frac{d}{d\nu}\bbE[\|\langle \bX\rangle_{k,\nu;\eps}-\bS\|^2]\\
	&=\frac{d}{d\nu}\bbE[\|\langle \bX\rangle_{k,\nu;\eps}\|^2]-2 \bbE\bigg[\bS^T \frac{d}{d\nu}\langle \bX\rangle_{k,\nu;\eps}\bigg]+ \frac{d}{d\nu}\bbE[\|\bS\|^2]\\
	&=2\frac{d}{d\nu}\bbE[\|\bS\|^2]-2 \bbE\bigg[\bS^T \frac{d}{d\nu}\langle \bX\rangle_{k,\nu;\eps}\bigg]\\
	&=-2 \bbE\bigg[\bS^T \frac{d}{d\nu}\langle \bX\rangle_{k,\nu;\eps}\bigg] \label{C1}. 
	\end{align}
	Now, it is easy to see that
	\begin{align}
	\frac{d}{d\nu}\langle \bX\rangle_{k,\nu;\eps}&=\langle \bX \rangle_{k,\nu;\eps}\bigg\langle \frac{d \calH_{k,\nu;\eps}(\bX,\bTheta)}{d\nu}\bigg\rangle-\bigg\langle \bX \frac{d \calH_{k,\nu;\eps}(\bX,\bTheta)}{d\nu}\bigg \rangle_{k,\nu;\eps}.
	\end{align}
	Define
	\begin{align}
	q_{\bx,\bs}:=\frac{1}{n^{\alpha}}\sum_{i=1}^n x_i s_i,
	\end{align} which is a normalized overlap between $\bx$ and $\bs$.
	
	Let $\bX'$ is a replica of $\bX$, i.e. $P_{k,\nu;\eps}(\bx,\bx'|\btheta)=P_{k,\nu;\eps}(\bx|\btheta)P_{k,\nu;\eps}(\bx'|\btheta)$, then it holds that
	\begin{align}
	\bbE\bigg[\bS^T \frac{d}{d\nu}\langle \bX\rangle_{k,\nu;\eps}\bigg]&=n^{\alpha}\bbE\bigg[\langle q_{\bX,\bS} \rangle_{k,\nu;\eps}\bigg\langle \frac{d \calH_{k,\nu;\eps}(\bX,\bTheta)}{d\nu}\bigg\rangle  -\bigg\langle q_{\bX,\bS} \frac{d \calH_{k,\nu;\eps}(\bX,\bTheta)}{d\nu}\bigg \rangle_{k,\nu;\eps}\bigg]\\
	&=n^{\alpha}\bbE\bigg[\bigg\langle q_{\bX,\bS}\bigg( \frac{d \calH_{k,\nu;\eps}(\bX',\bTheta)}{d\nu} -\frac{d \calH_{k,\nu;\eps}(\bX,\bTheta)}{d\nu}\bigg)\bigg \rangle_{k,\nu;\eps}\bigg].
	\end{align}
	Now, observe that
	\begin{align}
	\frac{d \calH_{k,\nu;\eps}(\bX,\bTheta)}{d\nu}=\frac{d}{d\nu} h\bigg(\bX,\bS,\bA,\bW^{(k)}, \frac{K_n}{\gamma_k(\nu)}\bigg)+\frac{d}{d\nu} h_{\rm{mf}}\bigg(\bX,\bS,\tilbW^{(k)},\frac{K_n}{\lambda_k(\nu)}\bigg),
	\end{align} where
	\begin{align}
	\frac{d}{d\nu} h\bigg(\bX,\bS,\bA,\bW^{(k)}, \frac{K_n}{\gamma_k(\nu)}\bigg)=\frac{1}{2K_n} \bigg(\frac{d\gamma_k(\nu)}{d\nu}\bigg)\bigg(\sum_{\mu=1}^m [\bA \overline{\bX}]_{\mu}^2-\sqrt{\frac{K_n}{\gamma_k(\nu)}}\sum_{\mu=1}^m[\bA \overline{\bX}]_{\mu} W_{\mu}^{(k)}\bigg),
	\end{align}
	and
	\begin{align}
	\frac{d}{d\nu} h_{\rm{mf}}\bigg(\bX,\bS,\tilbW^{(k)},\frac{K_n}{\lambda_k(\nu)}\bigg)=\frac{1}{2K_n} \bigg(\frac{d\lambda_k(\nu)}{d\nu}\bigg)\bigg(\sum_{i=1}^n\barX_i^2-\sqrt{\frac{K_n}{\lambda_k(\nu)}}\sum_{i=1}^n \barX_i \tilW_i^{(k)}\bigg).
	\end{align}
	Using the fact that $\bbE[W_\mu^{(k)}]\sim \calN(0,1)$ and $\bbE[\tilW_\mu^{(k)}]\sim \calN(0,1)$ and the fact that $\bbE[Zf(Z)]=\bbE[f'(Z)]$ for $Z \sim \calN(0,1)$, we finally have
	\begin{align}
	&\bbE\bigg[\bS^T \frac{d}{d\nu}\langle \bX\rangle_{k,\nu;\eps}\bigg]=\frac{n^{\alpha}}{2K_n}\bigg(\frac{d\gamma_k(\nu)}{d\nu}\bigg)\bbE\big[g(\bX',\bS)-g(\bX,\bS)\big] \label{ufact},
	\end{align}
	where
	\begin{align}
	g(\bx,\bs):=\sum_{\mu=1}^m \langle [\bA \overline{\bx}]_{\mu} q_{\bx,\bs}\rangle  \langle [\bA \overline{\bx}]_{\mu} \rangle -\frac{\delta n^{\alpha-1}}{(1+\gamma_k(\nu) E_k)^2}\sum_{i=1}^n \langle \barx_i q_{\bx,\bs}\rangle_{k,\nu;\eps} \langle \barx_i \rangle_{k,\nu;\eps}.
	\end{align}
	Hence, we have
	\begin{align}
	\big|\bbE\big[g(\bX,\bS)\big]\big|&\leq \bbE\bigg[\bigg|\sum_{\mu=1}^m \langle [\bA \overline{\bX}]_{\mu} q_{\bX,\bS}\rangle_{k,\nu;\eps}  \langle [\bA \overline{\bX}]_{\mu} \rangle_{k,\nu;\eps}\bigg|\bigg]\nn\\ 
	&\qquad + \frac{\delta n^{\alpha-1}}{(1+\gamma_k(\nu) E_k)^2}\bbE\bigg[\bigg|\sum_{i=1}^n \langle \barX_i q_{\bX,\bS}\rangle_{k,\nu;\eps}\langle \barX_i \rangle_{k,\nu;\eps}\bigg|\bigg]\\
	& \leq \sum_{\mu=1}^m \bbE\bigg[\big|\langle [\bA \overline{\bX}]_{\mu} q_{\bX,\bS}\rangle_{k,\nu;\eps}  \langle [\bA \overline{\bX}]_{\mu} \rangle_{k,\nu;\eps}\big|\bigg]\nn\\ 
	&\qquad + \delta n^{\alpha-1}\sum_{i=1}^n \bbE\bigg[\big|\langle \barX_i q_{\bX,\bS}\rangle_{k,\nu;\eps}\langle \barX_i \rangle_{k,\nu;\eps}\big|\bigg]\label{C2}. 
	\end{align}
	Now, by using Cauchy's and Cauchy-Schwarz's inequalities, we have
	\begin{align}
	&\bbE\bigg[\big|\langle [\bA \overline{\bX}]_{\mu} q_{\bX,\bS}\rangle_{k,\nu;\eps}  \langle [\bA \overline{\bX}]_{\mu} \rangle_{k,\nu;\eps}\big|\bigg]\nn\\
	&\qquad \leq \frac{1}{2}\bbE\big[ \langle [\bA \overline{\bX}]_{\mu} q_{\bX,\bS}\rangle_{k,\nu;\eps}^2\big]+ \frac{1}{2}\bbE\big[ \langle [\bA \overline{\bX}]_{\mu} \rangle_{k,\nu;\eps}^2\big]\\
	&\qquad \leq \frac{1}{2}\sqrt{\bbE[\langle [\bA \overline{\bX}]_{\mu} \rangle_{k,\nu;\eps}^4] \bbE[\langle q_{\bX,\bS}^4\rangle_{k,\nu;\eps}]}+\frac{1}{2}\sqrt{\bbE[\langle [\bA \overline{\bX}]_{\mu}^4 \rangle_{k,\nu;\eps}]} \label{D1}. 
	\end{align}
	On the other hand, we have
	\begin{align}
	\bbE[\langle [\bA \overline{\bX}]_{\mu} \rangle_{k,\nu;\eps}^4]&=\bbE[\langle [\bA (\bX-\bS)]_{\mu} \rangle_{k,\nu;\eps}^4]\\
	&\leq 8 \bigg(\bbE\big[\langle [\bA\bX]_{\mu} \rangle_{k,\nu;\eps}^4\big] + \bbE \big[ [\bA\bS]_{\mu}^4\big]\bigg) \label{D2}\\
	&= 16 \bbE \big[ [\bA\bS]_{\mu}^4\big]\\
	&= 16 \bigg(\sum_{i=1}^n \bbE[A_{\mu, i}^4]\bbE[S_i^4]+ 6 n(n-1) \sum_{i=1}^n \bbE[A_{\mu,i}^2]\bbE[S_i^2]\bigg)\\
	&= 16 \bigg( \frac{n}{m^2} \frac{k}{n}\bbE_{S \sim P_0}[S^4]+ 6 n(n-1) \frac{n}{m}  \frac{k}{n} \bbE_{S \sim P_0} [S^2]\bigg)\\
	&=O(n^2) \label{D4a},
	\end{align} where \eqref{D2} follows from  $(a+b)^4\leq 8 (a^4+b^4)$.
	
	On the other hand, since
	\begin{align}
	\big|q_{\bx,\bs}\big|&=\frac{1}{n^{\alpha}}\bigg|\sum_{i=1}^n x_i s_i\bigg|\\
	&\leq \frac{1}{n^{\alpha}}\sum_{i=1}^n\big|x_i s_i\big|\\
	&\leq \frac{1}{n^{\alpha}} n s_{\max}^2\\
	&=  s_{\max}^2 n^{1-\alpha} \label{D4}.
	\end{align}
From \eqref{D1}, \eqref{D4a}, and \eqref{D4}, we obtain
\begin{align}
\bbE\bigg[\big|\langle [\bA \overline{\bX}]_{\mu} q_{\bX,\bS}\rangle_{k,\nu;\eps}  \langle [\bA \overline{\bX}]_{\mu} \rangle_{k,\nu;\eps}\big|\bigg]=O\big(n^{(3-\alpha)/2}\big) \label{Bc}, 
\end{align} where the constant does not depend on $\nu$.

Similarly, we have
\begin{align}
\bbE\bigg[\big|\langle \barX_i q_{\bX,\bS}\rangle_{k,\nu;\eps}\langle \barX_i \rangle_{k,\nu;\eps}\big|\bigg]=O(1) \label{Bc2},
\end{align}	 where the constant does not depend on $i$.

From \eqref{C2}, \eqref{Bc}, and \eqref{Bc2}, we obtain
\begin{align}
\big|\bbE\big[g(\bX,\bS)\big]\big| =O(n^{(3+\alpha)/2})  \label{Bc3}. 
\end{align}
From \eqref{ufact} and \eqref{Bc3}, we obtain
\begin{align}
\bigg|\bbE\bigg[\bS^T \frac{d}{d\nu}\langle \bX\rangle_{k,\nu;\eps}\bigg]\bigg|\leq O\bigg(\frac{n^{\alpha}}{K_n} \bigg|\frac{d\gamma_k(\nu)}{d\nu}\bigg|n^{(3+\alpha)/2}\bigg) \label{ufact2},
\end{align} where the constant does not depend on $\nu$.

From \eqref{ufac0}, \eqref{C1}, and \eqref{ufact2}, for some constant $C$, we have
\begin{align}
\bigg|{\rm{mmse}}_{k,t;\eps}-{\rm{mmse}}_{k,0;\eps}\bigg|&\leq \int_0^t \bigg|\frac{d\rm{mmse}_{k,\nu;\eps}}{d\nu}\bigg| d\nu \label{ufac0b}\\
&\leq \int_0^1 \bigg|\frac{d\rm{mmse}_{k,\nu;\eps}}{d\nu}\bigg| d\nu \label{ufac0c}\\
&\leq C \int_0^1 \frac{n^{\alpha}}{K_n} \bigg|\frac{d\gamma_k(\nu)}{d\nu}\bigg|n^{(3+\alpha)/2} d\nu\\
&= -C \int_0^1 \frac{n^{\alpha}}{K_n} \bigg(\frac{d\gamma_k(\nu)}{d\nu}\bigg)n^{(3+\alpha)/2} d\nu\\
&=O\bigg(\frac{n^{(3/2)(1+\alpha)}}{K_n \Delta_n}\bigg).
\end{align}
This concludes our proof of Lemma \ref{basiscexnew}.

\section*{Appendix C.}\label{GSLLN:proof}
In this Appendix, we provide a proof for Lemma \ref{GSLLN}.\\

First, we recall the following result.
\begin{lemma}[Abel-Dini Theorem] \label{ADthm} If $\sum_{n=1}^{\infty} a_n$ converges, then with $T_n$ as the $n$-th tail, $\sum_{n=1}^{\infty} \frac{a_n}{T_n^{1+\alpha}}$ converges if and only if $\alpha<1/2$.
\end{lemma}

The proof is based on the proofs of \citep[Theorem 1.1]{Fazekas2001AGA} and \citep[Theorem 2.1]{Fazekas2001AGA}. To begin with, we generalize H\'{a}jek-R\'{e}nyi	type maximal inequality \citep[Theorem 1.1]{Fazekas2001AGA} that under the condition \eqref{cond1}, for any non-decreasing sequence of positive numbers $\{\beta_n\}_{n=1}^{\infty}$, it holds that
\begin{align}
\bbE\bigg[\max_{1\leq l\leq n}\bigg|\frac{S_l}{\beta_l d_l}\bigg|^{r} \bigg]\leq 4 \sum_{l=1}^n \frac{\nu_l}{\beta_l^r d_l^{r-1/(1+\rho)}} \label{G1}.
\end{align}
To show \eqref{G1}, we can suppose that $\beta_1=1$. Let $c=2^{1/r}$. Consider the sets
\begin{align}
A_i=\bigg\{k:c^i\leq \beta_k d_k <c^{i+1} \bigg\}, \qquad \forall i=0,1,2,\cdots \label{defAi},
\end{align}
Denote by $i(n)$ the index of the last non-empty $A_j$ such that $A_j \subset [n]$. It is clear that $i(n)\geq 0$ since $A_0 \neq \emptyset$ by $\beta_1=\nu_1=1$. Let $k(i)=\max\{k:k \in A_i\}, i=0,1,2,\cdots $ if $A_i$ is non-empty, while $k(i)=k(i-1)$ if $A_i$ is empty, and let $k(-1)=0$. Let
\begin{align}
\delta_l:=d_{k(l)}^{1/(1+\rho)}\sum_{j=k(l-1)+1}^{k(l)}\nu_j, \qquad l=0,1,2,\cdots,
\end{align} where $\delta_l$ is considered to be zero if $A_l$ is empty. We have
\begingroup
\allowdisplaybreaks
\begin{align}
\bbE\bigg[\max_{1\leq l \leq n}\bigg|\frac{S_l}{\beta_l d_l} \bigg|^r\bigg]&\leq \sum_{i=0}^{i(n)} \bbE\bigg[\max_{l\in A_i}\bigg|\frac{S_l}{\beta_l d_l} \bigg|^r\bigg]\\
&\leq \sum_{i=0}^{i(n)} c^{-ir}\bbE\bigg[\max_{l\in A_i}\big|S_l\big|^r\bigg]\\
&\leq \sum_{i=0}^{i(n)} c^{-ir}\bbE\bigg[\max_{k \leq k(i) }\big|S_k\big|^r\bigg]\\
&=\sum_{i=0}^{i(n)} c^{-ir} d_{k(i)}^{1/(1+\rho)} \sum_{k=1}^{k(i)} \nu_k\\
&=\sum_{i=0}^{i(n)} c^{-ir} \sum_{l=0}^i \delta_l\\
&=\sum_{l=0}^{i(n)} \delta_l \sum_{i=l}^{i(n)} c^{-ir}\\
&=\sum_{l=0}^{i(n)} \delta_l \sum_{i=l}^{\infty} c^{-ir}\\
&= \frac{1}{1-c^{-r}} \sum_{l=0}^{i(n)} \delta_l c^{-lr}\\
&=\frac{1}{1-c^{-r}} \sum_{l=0}^{i(n)}  c^{-lr}d_{k(l)}^{1/(1+\rho)}\sum_{j=k(l-1)+1}^{k(l)}\nu_j \\
&\leq \frac{c^r}{1-c^{-r}} \sum_{l=0}^{i(n)}   d_{k(l)}^{1/(1+\rho)}\sum_{j=k(l-1)+1}^{k(l)}\frac{\nu_j}{\big(\beta_j d_j\big)^r} \label{bato1}\\
&\leq \frac{c^r}{1-c^{-r}} \sum_{l=0}^{i(n)}   \sum_{j=k(l-1)+1}^{k(l)}\frac{\nu_j d_j^{1/(1+\rho)}}{\big(\beta_j d_j\big)^r} \label{bato2}\\
&=\frac{c^r}{1-c^{-r}} \sum_{l=0}^{i(n)}   \sum_{j=k(l-1)+1}^{k(l)}\frac{\nu_j}{\beta_j^r d_j^{r-1/(1+\rho)}} \\
&=\frac{c^r}{1-c^{-r}} \sum_{j=0}^n \frac{\nu_j}{\beta_j^r d_j^{r-1/(1+\rho)}}\\
&=4 \sum_{j=0}^n \frac{\nu_j}{\beta_j^r d_j^{r-1/(1+\rho)}} \label{gato},
\end{align}
\endgroup
 where \eqref{bato1} follows from $A_l=\{j: k(l-1)+1 \leq j \leq k(l)\}$ and $c^{-lr}\leq \frac{c^r}{(\beta_j d_j)^r}$ for all $j \in A_l$ which is achieved from \eqref{defAi}, \eqref{bato2} follows from $d_{k(l)}\leq d_j$ for all $j\in A_l$, and \eqref{gato} follows  from $c=2^{1/r}$.\\

Now, we return prove \eqref{G2}. As the proof of \citep[Theorem 2.1]{Fazekas2001AGA}, we can assume that $\nu_n>0$ for an infinite number of indices $n$. Otherwise, there exists an integer $n_0$ such that $\nu_n=0$ for all $n\geq n_0$, so from the condition \eqref{cond1}, we have $\bbE\big[\sup_{n\geq 1}|S_n|^r\big]<\infty$, so $\sup_{n\neq 1}|S_n|<\infty$ a.s., which easily gives \eqref{G2}. 	Now, set
\begin{align}
t_n&=\sum_{k=n}^{\infty}\frac{\nu_k}{b_k^r d_k^{r-1/(1+\rho)}}\\
\beta_n&=\max_{1\leq k\leq n} b_k t_k^{1/(2r)} \label{bag1}.
\end{align}
Then, it holds that
\begin{align}
\sum_{k=1}^{\infty} \frac{\nu_k}{d_k^{r-1/(1+\rho)}\beta_k^r}&\leq \sum_{k=1}^{\infty} \frac{\nu_k}{d_k^{r-1/(1+\rho)}b_k^r t_k^{1/2}} \label{bag2}\\
&< \infty \label{bag3}, 
\end{align} where \eqref{bag2} follows from \eqref{bag1}, and \eqref{bag3} follows from Lemma \ref{ADthm} for the convergence of series \citep{Fazekas2001AGA} with $\alpha=-1/2$. On the other hand, from \eqref{bag1}, $\beta_n$ is non-decreasing. In addition, from  \eqref{assum1} and \eqref{bag1}, we have $t_n \to 0$ as $n\to \infty$. Hence, for any $\eps>0$, there exists an $n_0(\eps)\in \bbZ^+$ such that 
\begin{align}
t_n^{1/(2r)}<\eps \label{bg1}
\end{align} for all $n>n_0(\eps)$.
It follows that
\begin{align}
\beta_n&=\max_{1\leq k\leq n} b_k t_k^{1/(2r)}\\
&\leq \max_{1\leq k\leq n_0(\eps)} b_k t_k^{1/(2r)} + \max_{n_0(\eps)< k\leq n} b_k t_k^{1/(2r)}\\
&\leq  \max_{1\leq k\leq n_0(\eps)} b_k t_k^{1/(2r)} + \eps \max_{n_0(\eps)< k\leq n} b_k \label{bg2}\\
&= \max_{1\leq k\leq n_0(\eps)} b_k t_k^{1/(2r)} + \eps b_n \label{bg3}\\
\end{align} where \eqref{bg2} follows from \eqref{bg1}, and \eqref{bg3} follows from the non-decreasing property of the sequence $\{b_n\}$. From \eqref{bg3}, we obtain
\begin{align}
0&\leq \frac{\beta_n}{b_n}\\
&\leq  \frac{1}{b_n}\max_{1\leq k\leq n_0(\eps)} b_k t_k^{1/(2r)}+ \eps\\
&\leq 2\eps \label{bg4},
\end{align} where \eqref{bg4} follows from $b_n \to \infty$. Hence, $\lim_{n\to \infty} \beta_n/b_n=0$. 

These facts imply that \eqref{G1} holds, which leads to 
\begin{align}
\bbE\bigg[\max_{l\geq 1}\bigg|\frac{S_l}{\beta_l d_l}\bigg|^{r} \bigg] &\leq 4 \sum_{l=1}^{\infty} \frac{\nu_l}{\beta_l^r d_l^{r-1/(1+\rho)}} \label{bag4}\\
&<\infty \label{bag5},
\end{align} where \eqref{bag4} follows from the monotone convergence theorem \citep{Billingsley}, and \eqref{bag5} follows from \eqref{assum1}. This implies that
\begin{align}
\max_{l\geq 1}\bigg|\frac{S_l}{\beta_l d_l}\bigg|<\infty, \qquad a.s.
\end{align}
Finally, we have
\begin{align}
0&\leq \bigg|\frac{S_l}{d_l b_l}\bigg|\\
&=\bigg|\frac{S_l}{d_l \beta_l}\bigg|\bigg|\frac{\beta_l}{b_l}\bigg|\\
&\leq \big(\sup_{l\geq 1}\bigg|\frac{S_l}{d_l \beta_l}\bigg|\bigg)\bigg|\frac{\beta_l}{b_l}\bigg|\\
& \to 0 \qquad a.s.
\end{align} as $l \to \infty$. 
This concludes the proof of Lemma \ref{GSLLN}.




\end{document}